\DeclareRobustCommand*\cal{\@fontswitch\relax\mathcal}
\newtheorem{definition}{Definition}
\newtheorem{lemma}{Lemma}
\newtheorem{theorem}{Theorem}
\newtheorem{problem}{Problem}
\newtheorem{remark}{Remark}
\DeclareMathOperator*{\argmin}{arg\,min}
\title{From Entanglement Purification Scheduling to Fidelity-constrained Multi-Flow Routing}
\author{Ziyue Jia, Lin Chen \\
School of Computer Science and Engineering, Sun Yat-sen University, Guangzhou, 510006, China \\
Guangdong Provincial Key Laboratory of Information Security Technology, Guangzhou 510006, China \\
Email: jiazy5@mail2.sysu.edu.cn, chenlin69@mail.sysu.edu.cn
\thanks{This work is supported in part by National Science Foundation of China under Grant 62172455, Pearl River Talent Program under Grant 2019QN01X140, and Guangdong Provincial Key Laboratory of Information Security Technology (No. 2023B1212060026). L. Chen is the corresponding author.}}
\begin{document}

\maketitle

\begin{abstract}
    Recently emerged as a disruptive networking paradigm, quantum networks rely on the mysterious quantum entanglement to teleport qubits without physically transferring quantum particles. However, the state of quantum systems is extremely fragile due to environment noise. A promising technique to combat against quantum decoherence is entanglement purification. To fully exploit its benefit, two fundamental research questions need to be answered: (1) given an entanglement path, what is the optimal entanglement purification schedule? (2) how to compute min-cost end-to-end entanglement paths subject to fidelity constraint? In this paper, we give algorithmic solutions to both questions. For the first question, we develop an optimal entanglement purification scheduling algorithm for the single-hop case and analyze the \textsc{purify-and-swap} strategy in the multi-hop case by    establishing the closed-form condition for its optimality. For the second question, we design a polynomial-time algorithm constructing an $\epsilon$-optimal fidelity-constrained path and then a randomized optimization framework for the multi-flow case. The effectiveness of our algorithms are also numerically demonstrated by extensive simulations.
\end{abstract}

\begin{IEEEkeywords}
Entanglement purification, entanglement routing, purify and swap, entanglement swapping, quantum path.
\end{IEEEkeywords}

\section{Introduction}

\IEEEPARstart{F}{our} decades ago in 1982, Richard Feynman postulated  in his visionary article, \textit{Simulating Physics with Computers}~\cite{1982IJTP...21..467F}, that to simulate quantum systems one would need to build quantum computers, marking the birth of quantum computing. Today, quantum computing turns 40, we are still far from seeing Feynman's dream come true. For example, solving many fundamental problems in physics and chemistry requires hundreds of thousands or millions of qubits' computation power in order to correct errors arising from noise~\cite{8093785}. Any single quantum computer today cannot fulfill such tasks. We hence need  to interconnect multiple quantum computers to scale up the number of qubits, thus forming a \textit{quantum network}, or more ambitiously, a \textit{quantum Internet}~\cite{doi:10.1126/science.aam9288}.

A quantum network interconnects quantum devices by exploiting fundamental quantum mechanical phenomena such as superposition, entanglement, and quantum measurement to exchange quantum information in the form of qubits, thereby achieve processing capabilities beyond what is possible with classical computer systems and networks~\cite{doi:10.1126/science.aam9288}. A quantum network can form a virtual quantum machine of a large number of qubits, scaling with the number of interconnected devices. This, in turn, leads to an exponential boost of computing power with just a linear amount of the physical resources, i.e., the number of connected quantum devices. Therefore, a quantum network can support many ground-breaking applications lying beyond the capability of its classical counterparts, such as quantum communication, clock synchronization, secure remote computation, and distributed consensus.

However, designing 
quantum networks is a technically challenging task that we have never encountered in computer science and engineering. This is because quantum networks follow the laws of quantum mechanics, such as
no-cloning, quantum measurement, quantum entanglement
and teleportation~\cite{nielsen_chuang_2010}. These quantum phenomena have no counterpart in classical networks, and pose highly non-trivial constraints on the network design and optimization. Therefore, designing and optimizing
quantum networks cannot be achieved by extrapolating the classical models to their quantum analogues, but requires a fundamental paradigm shift. 
\begin{figure}
\centering
\includegraphics[width=8.5cm]{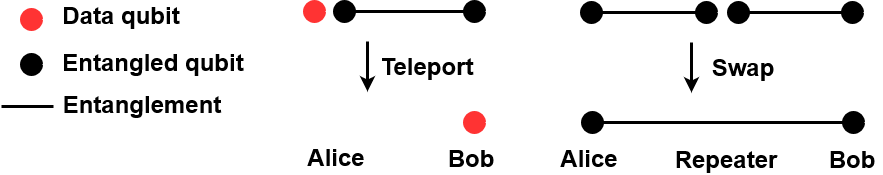}
\caption{Quantum teleportation and entanglement swapping}
\label{fig:swap}
\end{figure}

Technically, the most fundamental characteristic of quantum networks, compared to classical networks, is the famous and mysterious \textit{quantum entanglement}, allowing to realize quantum teleportation so as to transmit a qubit without physically transferring the particle storing it, as illustrated in Figure~\ref{fig:swap}.  

However, distributed qubits inevitably interact with the environment, causing \textit{decoherence}. Consequently, the initially maximally entangled quantum state will decohere into a mixed state, thus degrading or even paralyzing the corresponding quantum communication session. To combat against decoherence, there are three solutions: quantum error correction~\cite{QuantumErrorCorrection}, entanglement purification~\cite{chpt-purif}, and quantum communication based on decoherence-free subspace~\cite{10.5555/2011772.2011779}. Currently, entanglement purification appears to be the most convenient and deployable solution that has already been demonstrated both theoretically and experimentally~\cite{pan2001}.

Entanglement purification, also termed as entanglement distillation, is to extract from a pool of low-fidelity\footnote{In quantum information theory, \textit{fidelity} is the probability that a set of qubits are actually in the
state we believe that they ought to be in.} Einstein-Podolsky-Rosen (EPR) pairs a high-fidelity EPR pair\footnote{cf. Section~\ref{sec:purif} for more detailed presentation}. Though seemingly intuitive to understand, exploiting entanglement purification to establish high-fidelity end-to-end entanglement paths is by no means a trivial task due to the following three technical challenges.

\textbf{Entanglement purification scheduling}. In order to produce a high-fidelity EPR pair, we may need to perform multiple rounds of entanglement purification. Figure~\ref{fig:puri-sch} illustrates two entanglement purification scheduling policies (\textsc{symmetric} and \textsc{pumping}) resulting in different final fidelity. Their purification success probabilities are highlighted in red font. A natural question is to find the optimal entanglement purification scheduling policy.

\begin{figure}[!ht]
\centering
\includegraphics[width=0.48\textwidth]{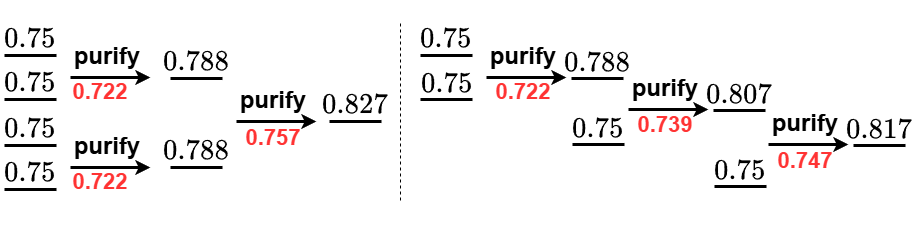}
\caption{We dispose $4$ EPR pairs of fidelity $0.75$. The left and right sub-figures depict $2$ purification scheduling policies producing $2$ final EPR pairs of different fidelities. Their purification success probabilities are highlighted in red font.}
\label{fig:puri-sch}
\end{figure}

\textbf{Interaction between entanglement purification and entanglement swapping}. Suppose that Alice and Bob are connected by a quantum repeater Charlie, with whom they have both established an EPR pair, as depicted in Figure~\ref{fig:puri-swap}. We have two different ways to produce a final EPR pair between Alice and Bob. The first one, termed as \textsc{purify-and-swap}, consists of performing entanglement purification between Alice and Charlie, and Bob and Charlie, and then entanglement swapping to stitch the two purified EPR pairs. The second one, termed as \textsc{swap-and-purify}, consists of performing entanglement swapping at Charlie to produce two EPR pairs between Alice and Bob, and then entanglement purification to get a final purified EPR pair. In fact, the above two schemes represent two extremities regarding the interaction between entanglement purification and entanglement swapping. The strategy space increases exponentially as the quantum network scales. Clearly, we face the problem of finding the optimal joint entanglement purification scheduling and swapping strategy.

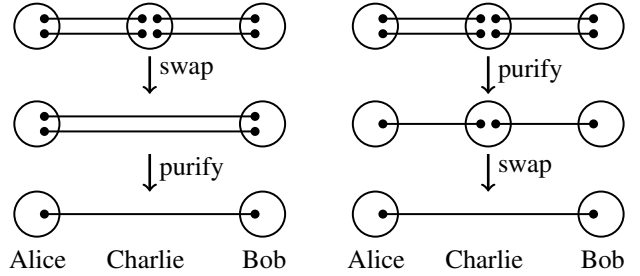
\begin{figure}[!ht]
\centering
\begin{tikzpicture}[line width=0.75pt]
\draw (0,0) circle (0.3);
\fill (0.1,0.1) circle (0.06);
\fill (0.1,-0.1) circle (0.06);
\draw (1.5,0) circle (0.3);
\fill (1.6,0.1) circle (0.06);
\fill (1.6,-0.1) circle (0.06);
\fill (1.4,0.1) circle (0.06);
\fill (1.4,-0.1) circle (0.06);
\draw (3,0) circle (0.3);
\fill (2.9,0.1) circle (0.06);
\fill (2.9,-0.1) circle (0.06);
\draw (0.1,0.1) -- (1.4,0.1);
\draw (0.1,-0.1) -- (1.4,-0.1);
\draw (1.6,0.1) -- (2.9,0.1);
\draw (1.6,-0.1) -- (2.9,-0.1);
\draw [line width=1pt,->] (1.5,-0.4) -- (1.5,-0.9);
\node[right] at (1.5,-0.6) {swap};
\draw (0,-1.3) circle (0.3);
\fill (0.1,-1.2) circle (0.06);
\fill (0.1,-1.4) circle (0.06);
\draw (3,-1.3) circle (0.3);
\fill (2.9,-1.2) circle (0.06);
\fill (2.9,-1.4) circle (0.06);
\draw (0.1,-1.2) -- (2.9,-1.2);
\draw (0.1,-1.4) -- (2.9,-1.4);
\draw [line width=1pt,->] (1.5,-1.7) -- (1.5,-2.2);
\node[right] at (1.5,-1.9) {purify};
\draw (0,-2.5) circle (0.3);
\fill (0.1,-2.5) circle (0.06);
\draw (3,-2.5) circle (0.3);
\fill (2.9,-2.5) circle (0.06);
\draw (0.1,-2.5) -- (2.9,-2.5);
\node[right] at (-0.5,-3.1) {Alice};
\node[right] at (0.8,-3.1) {Charlie};
\node[right] at (2.6,-3.1) {Bob};

\draw (4.5,0) circle (0.3);
\fill (4.6,0.1) circle (0.06);
\fill (4.6,-0.1) circle (0.06);
\draw (6,0) circle (0.3);
\fill (6.1,0.1) circle (0.06);
\fill (6.1,-0.1) circle (0.06);
\fill (5.9,0.1) circle (0.06);
\fill (5.9,-0.1) circle (0.06);
\draw (7.5,0) circle (0.3);
\fill (7.4,0.1) circle (0.06);
\fill (7.4,-0.1) circle (0.06);
\draw (4.6,0.1) -- (5.9,0.1);
\draw (4.6,-0.1) -- (5.9,-0.1);
\draw (6.1,0.1) -- (7.4,0.1);
\draw (6.1,-0.1) -- (7.4,-0.1);
\draw [line width=1pt,->] (6,-0.4) -- (6,-0.9);
\node[right] at (6,-0.6) {purify};
\draw (4.5,-1.3) circle (0.3);
\fill (4.6,-1.3) circle (0.06);
\draw (6,-1.3) circle (0.3);
\fill (5.9,-1.3) circle (0.06);
\fill (6.1,-1.3) circle (0.06);
\draw (7.5,-1.3) circle (0.3);
\fill (7.4,-1.3) circle (0.06);
\draw (4.6,-1.3) -- (5.9,-1.3);
\draw (6.1,-1.3) -- (7.4,-1.3);
\draw [line width=1pt,->] (6,-1.7) -- (6,-2.2);
\node[right] at (6,-1.9) {swap};
\draw (4.5,-2.5) circle (0.3);
\fill (4.6,-2.5) circle (0.06);
\draw (7.5,-2.5) circle (0.3);
\fill (7.4,-2.5) circle (0.06);
\draw (4.6,-2.5) -- (7.4,-2.5);
\node[right] at (4,-3.1) {Alice};
\node[right] at (5.3,-3.1) {Charlie};
\node[right] at (7.1,-3.1) {Bob};
\end{tikzpicture}
\caption{\textsc{swap-and-purify} vs. \textsc{purify-and-swap}}
\label{fig:puri-swap}
\end{figure}

\textbf{Fidelity-constrained entanglement path optimization}. At the routing level, we face the problem of finding an optimal entanglement path between Alice and Bob satisfying end-to-end fidelity constraint. By optimal, we mean to minimize the cost of the path, e.g., in terms of path delay, the total number of consumed EPR pairs. Our path optimization problem differs from its classical peer problems in that the number of qubits at each quantum nodes is limited, hence limiting the number of EPR pairs it can build with its neighbors. This non-standard constraint poses non-trivial technical challenges and thus calls for new algorithmic techniques that cannot build on existing path optimization solutions. The problem is even more challenging if we have multiple Alice-Bob pairs and we need to establish end-to-end EPR pairs with fidelity guarantee subject to the resource constraint in terms of qubits at each quantum repeater.

Driven by the above non-classic technical challenges, we embark in this paper to build a comprehensive algorithmic framework on entanglement purification scheduling and routing~\footnote{An earlier version of this paper was presented in part at the IEEE ICNP, 2024.}. Our main contributions are articulated as follows.
\begin{itemize}
    \item \textbf{Entanglement purification scheduling}. Given a pool of elementary EPR pairs, we develop an optimal entanglement purification scheduling algorithm of polynomial-time complexity. For the multi-hop case along an entanglement path, we establish the close-form condition under which \textsc{purify-and-swap} is optimal.
    \item \textbf{Fidelity-constrained entanglement path optimization}. we design a polynomial-time algorithm constructing an $\epsilon$-optimal fidelity-constrained path.
\end{itemize}

We expect that our results obtained in this paper constitutes a small but systematic step towards building non-classical entanglement provisioning and routing algorithms to support high-fidelity end-to-end quantum information transfer by exploiting the limited quantum resources in an optimal and cost-effective manner in order to fully realize the unrivallable capabilities offered by quantum networks.
\color{black}
\textcolor{black}{\textbf{Roadmap.} The paper is organized as follows. In Section~\ref{sec:purif}, we formulate and analyze the problem of entanglement purification scheduling. In Section~\ref{sec:rout}, we further analyze the problem of fidelity-constrained entanglement path optimization. Section~\ref{sec:simu} presents simulation results. We review related work in Section~\ref{sec:related-work}.
Section~\ref{sec:conclu} concludes the paper.}

\section{Entanglement Purification Scheduling}
\label{sec:purif}

As mentioned in the Introduction, the state of quantum systems is extremely fragile due to environment noise. Errors result in continuous degradation of our knowledge about the state of the quantum system. A widely applied quantitative metric about the quality of quantum states is \textit{fidelity}. Ranging from $0$ to $1$, fidelity is essentially the probability of a qubit or a set of qubits being in the state we believe that they ought to be in. To improve the fidelity of an EPR pair, \textit{entanglement purification} is usually applied. \textcolor{black}{It consumes
lower-fidelity EPR fairs to obtain a higher-fidelity one, as illustrated in Figure~\ref{fig:purif}.\footnote{Generically, purification is the process of improving our knowledge of
the state by testing propositions. This improvement is reflected as an
increase in the fidelity in the density matrix representing our knowledge about the state.}}

\begin{figure}[!ht]
\centering
\includegraphics[width=0.3\textwidth]{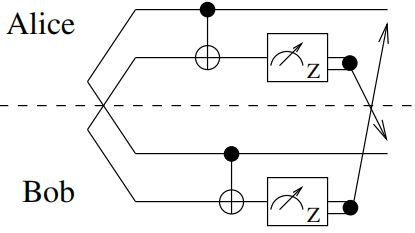}
\caption{Illustration of entanglement purification~\cite{qnetbook}. Alice and Bob each holds one half of two EPR pairs, the first pair to be purified and the second the pair to be sacrificed. The second pair is measured. The arrows indicate classical message exchange of the measurement results. If the measurement results are the same, i.e., $00$ or $11$, the purification is successful.}
\label{fig:purif}
\end{figure}

\subsection{Technical Background on Entanglement Purification}

We consider the generic scenario where entanglement purification takes as input two imperfect EPR pairs, both in the Werner state~\cite{PhysRevA.40.4277}:
\begin{align*}
    \rho_i=\frac{1-f_i}{3}I_4+\frac{4f_i-1}{3}|\Phi^+\rangle\langle\Phi^+|  \quad i=1,2,
\end{align*}
where $f_i$ denotes the fidelity of EPR pair $i$. Two \textsc{CNOT} operations are performed between the qubits at the same end of the two EPR pairs held by Alice and Bob, followed by a measurement on the two qubits of the second EPR pair in the computational basis. If the particles are found in the same state, the first EPR pair is kept, otherwise it is discarded. The fidelity of the purified EPR pair, denoted by $P(f_1,f_2)$, and the purification success probability, denoted by $F(f_1,f_2)$, can be computed in~\eqref{eq:purif} and~\eqref{eq:purif-p}, respectively~\cite{7010905}. If multiple rounds of entanglement purification are performed, we need to turn the purified EPR pair to the Werner state by executing a depolarization operation.
\begin{align}
    F(f_1,f_2)&=\frac{f_1f_2+\frac{1}{9}(1-f_1)(1-f_2)}{f_1f_2+\frac{1}{3}(f_1+f_2-2f_1f_2)+\frac{5}{9}(1-f_1)(1-f_2)} 
    \notag \\ 
    &=\frac{10f_1f_2-f_1-f_2+1}{8f_1f_2-2f_1-2f_2+5} 
    \label{eq:purif} \\
    P(f_1,f_2)&=f_1f_2+\frac{1}{3}(f_1+f_2-2f_1f_2)+\frac{5}{9}(1-f_1)(1-f_2) 
    \notag  \\
    &= \frac{8}{9}f_1f_2-\frac{2}{9}(f_1+f_2)+\frac{5}{9}
    \label{eq:purif-p}
\end{align}

To fully exploit the benefit of entanglement purification under limited quantum resources, two critical research questions should be carefully addressed.

\textbf{Q1}: How to schedule entanglement purification? We have demonstrated in Introduction that different entanglement purification scheduling policies may produce purified EPR pairs of different fidelity. Moreover, entanglement purification and swapping should be carefully orchestrated so as to optimize the overall performance. 

    \textbf{Q2}: How to integrate entanglement purification into entanglement provisioning and routing so as to build min-cost end-to-end entanglements with given QoS requirements in terms of fidelity and end-to-end entanglement throughput? We address this problem in the next section.

\subsection{Single-hop Entanglement Purification Scheduling}

We start by solving the one-hop entanglement purification scheduling problem defined below.

\begin{problem}
    Alice and Bob is connected by a quantum link $e$, over which they have established $N$ elementary EPR pairs of fidelity $f_e$. We seek an entanglement purification scheduling policy producing the maximal number of EPR pairs whose fidelity is at least $f_{\theta}$.
    \label{pb:purif-single-hop}
\end{problem}

To get more insight into the purification scheduling problem, we can model a purification schedule as a tree, which we term as purification tree, as exemplified in Figure~\ref{fig:tree}. We denote an entanglement mapping to an internal node resulting from a sub-tree with $k-1$ leaves as a level-$k$ entanglement. An elementary entanglement is a level-$1$ entanglement which is also a leaf. Therefore, the problem of purification scheduling maps to constructing the purification tree leading to maximal number of purified entanglements, subject to the fidelity constraint $f_{\theta}$. Clearly, enumerating all the possible trees is inefficient because there are an exponential number of possible trees, given that the number of non-isomorphic trees for a given number of nodes is the well-known Cayley number which scales exponentially w.r.t. the tree size. Therefore, we need a dedicated design, which is the central task of our work.

\begin{figure}[!ht]
\centering
\includegraphics[width=0.28\textwidth]{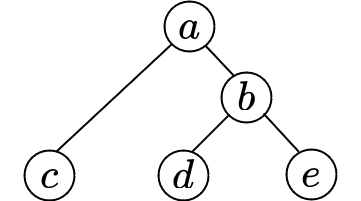}
\caption{Example of a purification tree}
\label{fig:tree}
\end{figure}

\textbf{Data structure}. We create a list ${\cal L}$ to store the candidate purification trees. Each entry $l$ of ${\cal L}$ is a quadruple representing a purification tree or sub-tree. Specifically, in the quadruple $l\triangleq (b_l,\hat{f_l},\hat{\xi}_l,T_l)$:
\begin{itemize}
    \item $b_l$ denotes the number of leaves in the tree;
    \item  $\hat{f_l}$ denotes the discretized fidelity of the root of the tree;
    \item $\xi_l$ denotes the expected percentage of the EPR pairs of fidelity $f_l$ generated in the purification scheduling process. In other words, there are in average $\xi_l N$ EPR pairs of fidelity $f_l$ produced in the purification scheduling process. $\hat{\xi}_l$ denotes the discretized value approximating  $\xi_l$, as detailed in our algorithm.
    \item $T_l$ records the entire purification tree or sub-tree $l$.
\end{itemize} 
Given two entries $l_i=(b_i,\hat{f_i},\hat{\xi}_i,T_i)$, $i=1,2$, we say that $l_1$ \textit{dominates} $l_2$ if $b_1\le b_2$, $\hat{f_1}\ge \hat{f_2}$ and $\hat{\xi}_1\ge \hat{\xi}_2$. The
dominance is strict if at least one inequality holds strictly. Intuitively, an optimal purification tree is not strictly dominated by any other tree. So we can safely remove all the dominated entries in $\cal L$. 

\textbf{Algorithm}. Our algorithm is depicted in Algorithm~\ref{alg:purif2} consisting of two major parts.

We first compute the lower-bound on the number of leaves of a purification tree such that any purification tree reaching this bound generates EPR pairs whose fidelity is at least $f_{\theta}$. To that end, we define $\Gamma(i)$, $1\le i\le N$, as the minimal fidelity achievable from $i$ EPR pairs of fidelity $f_e$. The boundary condition is given as $\Gamma(1)=f_e$. We can compute $\Gamma(i)$ by dynamic programming as below.
$$\Gamma(i)=\min_{1\le k\le \lfloor i/2\rfloor} F(\Gamma(k),\Gamma(i-k)).$$
It then follows that $N'=\argmin_i \Gamma(i)\ge f_{\theta}$ is the lower-bound we are looking for.

The core part of our algorithm is the second \textbf{for} loop computing $\cal L$. At a high level, by executing the loop, our algorithm builds and searches the purification trees from the leaves. 
As we will never purify any EPR pair whose fidelity already reaches $f_{\theta}$, the maximal number of leaves of the purification tree we search is upper-bounded by $2(N'-1)$, with the worst case being the situation where two EPR pairs, each representing a purification tree of $N'-1$ leaves are purified to produce an EPR pair of sufficient fidelity. In each iteration of the second for loop, we scan each pair of entries $l_1,l_2\in {\cal L}$ and build a new tree $l_3$ with $l_1$ and $l_2$ being its children\footnote{By slightly abusing terminology, we use the entry also to denote the purification tree corresponding to it.}. 
If $l_3$ is not dominated by any entry in ${\cal L}$, we add $l_3$ to ${\cal L}$. Noticing the discretization process to compute $\hat{\xi}_3$ and $\hat{f}$, for each pair of {$(b,\hat{\xi},\hat{f})$}, 
there are at most $\frac{1}{\Delta\xi\cdot\Delta f}$ entry in $\cal L$ as we remove all the dominated entries. Therefore, there are at most {$\frac{n}{\Delta\xi\cdot\Delta f}$} entries in $\cal L$, where $n=\min\{N,2(N'-1)\}$. Each entry containing a purification tree, the overall space complexity of our algorithm sums up to {$O\left(\frac{n}{\Delta\xi\cdot\Delta f}\right)$}. On the other hand, in each iteration, we grow the tree by merging two sub-trees, each mapping to an entry in $\cal L$. The overall time complexity sums up to {$O\left(\frac{n^3}{\Delta\xi^2 \Delta f^2}\right)$}.

It remains to analyze the optimality of Algorithm~\ref{alg:purif2}. The main source of efficiency loss comes from discretization which may accumulate along the purification tree from its leaves up to root. Theorem~\ref{thm:opt-single-hop} establishes that, with sufficient discretization granularity, Algorithm~\ref{alg:purif2} outputs a solution approaching optimum. \textcolor{black}{The detailed proof is shown in the appendix.}

\begin{algorithm}
\caption{Entanglement purification scheduling}
\label{alg:purif2}
\LinesNumbered 
\KwIn{$N$, $f_e$, $f_{\theta}$}
\KwOut{a purification tree}

\textbf{Initialization}: $\Gamma(i)\leftarrow f_e$, $1\le i\le N$, ${\cal L}\leftarrow\{(1,f_e,1,{T_1})\}$ 
        
    \For{$i=2$ \textbf{to} $N$, $k=1$ \textbf{to} $\lfloor i/2\rfloor$}
    {
        \If{$\Gamma(i)< F(\Gamma(k),\Gamma(i-k))$}
        {
            $\Gamma(i)\leftarrow F(\Gamma(k),\Gamma(i-k))$
        }
    }

    $N'\leftarrow \argmin_i \Gamma(i)\ge f_{\theta}$
 
    \For{$i=1$ \textbf{to} $\min\{N,2(N'-1)\}$}
    {
        
            \ForEach{pair of entries $l_i=(b_i,\hat{f_i},\hat{\xi}_i,T_i)\in {\cal L}$, $i=\{1,2\}$, satisfying $b_1+b_2\le \min\{N,2(N'-1)\}$} 
            {
                {$\displaystyle\hat{f}\leftarrow\left \lceil\frac{F(f_1,f_2)}{\Delta f}\right\rceil\Delta f$}

                {$\displaystyle\hat{\xi}_3\leftarrow \left\lceil\frac{P(f_1,f_2)\min\{\hat{\xi}_1,\hat{\xi}_2\}}{\Delta\xi}\right\rceil\Delta\xi$}
                

                create a new tree $T_3$ with the two children being $T_1$ and $T_2$ 
                
                $l_3\leftarrow (b_1+b_2,\hat{f},\hat{\xi}_3,T_3)$
                    
                \If{$l_3$ is not dominated by any entry in ${\cal L}$}
                {
                    add $l_3$ to ${\cal L}$
                }
            }    
        
    }
    \eIf{there exists at least an entry in $\cal L$ whose fidelity is at least $f_{\theta}$}
    {   
        let $l'\triangleq(b',\hat{f}',\hat{\xi}',T')$ denote the entry with maximal value of $\hat{\xi}'/b'$ among them
        
        \textbf{return} the purification tree $T'$
    }
    {
        \textbf{return} no feasible purification schedule
    }   
\end{algorithm}

\begin{theorem}
    Let $T'$ denote the tree output by our algorithm and {$l'=(b',\hat{f}',\hat{\xi}',T')$ denote the  corresponding entry in $\cal L$. Under the condition
    $\Delta\xi\le b'\xi'\epsilon$ and $\Delta f\le b'f'\epsilon$, Algorithm~\ref{alg:purif2} outputs an $\epsilon-$optimal solution of Problem~\ref{pb:purif-single-hop}.}
    \label{thm:opt-single-hop}
\end{theorem}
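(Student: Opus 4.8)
The plan is to split the argument into a \emph{coverage} part and an \emph{error-propagation} part. Coverage will show that, on termination, the list $\cal L$ of Algorithm~\ref{alg:purif2} contains an entry that, in the dominance order, is at least as good as the discretized encoding of an optimal purification tree; error propagation will show that, along any tree with at most $2(N'-1)$ leaves, the stored quantities $\hat f$ and $\hat\xi$ overshoot the true root fidelity and yield by amounts controlled by $b'\Delta f$ and $b'\Delta\xi$. Under the granularity hypotheses $\Delta f\le b'f'\epsilon$ and $\Delta\xi\le b'\xi'\epsilon$ these overshoots become $O(\epsilon)$ relative to their targets, which gives $\epsilon$-optimality. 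Throughout I work in the only nontrivial regime $\tfrac12<f_e<f_\theta<1$, so that every fidelity produced by the recursion lies in $(\tfrac12,1)$: if $f_e\le\tfrac12$ then $\Gamma$ is constant and the instance is either infeasible or already solved by the elementary pairs. As a preliminary reduction I would use a standard exchange argument — the achievable yield is additive over disjoint groups of elementary pairs and depends on a group only through the tree shape applied to it, so up to lower-order integrality effects an optimal policy replicates one tree shape $T^\ast$, and it suffices to compare the algorithm's output with the best single feasible tree.

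For coverage I would induct on the structure of $T^\ast$: for every subtree $S$ of $T^\ast$ I claim $\cal L$ eventually holds an entry $l$ with $b_l=b_S$ and with $\hat f_l,\hat\xi_l$ no smaller than the values the algorithm assigns to the discretization of $S$. The base case is the initial entry $(1,f_e,1,T_1)$. For the step, the second \textbf{for} loop of Algorithm~\ref{alg:purif2} enumerates \emph{every} merge of two current entries whose leaf counts sum to at most $\min\{N,2(N'-1)\}$, so it forms the merge of the entries representing $S$'s children; if that merge is then dominated and discarded, the dominating entry is at least as good, so the invariant survives. The point that needs checking is that dominance is preserved under merging: if $l_i$ dominates $m_i$ for $i=1,2$, then $\mathrm{merge}(l_1,l_2)$ dominates $\mathrm{merge}(m_1,m_2)$. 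This reduces to monotonicity — the ceiling is monotone, $F$ in~\eqref{eq:purif} is increasing in each argument on $(\tfrac12,1)^2$ (its partials there are positive and at most $\tfrac34$), $P$ in~\eqref{eq:purif-p} is increasing in each argument above fidelity $\tfrac14$, and $\min$ is monotone — so I would verify these and conclude. Since the discretizing ceilings only ever raise the stored fidelity and yield, the same induction also gives $\hat f_{l^\ast}\ge f^\ast\ge f_\theta$ and $\hat\xi_{l^\ast}/b_{l^\ast}\ge\xi^\ast/b^\ast=\mathrm{OPT}$ for the entry $l^\ast$ tracking $T^\ast$; as $l^\ast$ (or whatever dominates it) is a feasible candidate in the final selection, the returned $l'$ satisfies $\hat f'\ge f_\theta$ and $\hat\xi'/b'\ge\mathrm{OPT}$.

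For error propagation, fix any tree and write $D^f_l\triangleq\hat f_l-f^{\mathrm{true}}_l\ge 0$ and $D^\xi_l\triangleq\hat\xi_l-\xi^{\mathrm{true}}_l\ge 0$ at its root entry $l$. From the update rule $\hat f_l\le F(\hat f_{l_1},\hat f_{l_2})+\Delta f$, and a mean-value estimate using $0\le\partial F/\partial f_i\le\tfrac34$ on $(\tfrac12,1)^2$, one gets $D^f_l\le\tfrac34\big(D^f_{l_1}+D^f_{l_2}\big)+\Delta f$; since the coefficients are below $1$, unrolling from the leaves ($D^f=0$) yields $D^f_{l'}\le(b'-1)\Delta f$. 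An analogous computation for $\hat\xi$, using the Lipschitz bound on $P$, the bounds $P,\xi\le 1$, and $1$-Lipschitzness of $\min$, bounds $D^\xi_{l'}$ by a polynomial in $b'$ times $\Delta\xi+\Delta f$. Substituting $\Delta f\le b'f'\epsilon$ and $\Delta\xi\le b'\xi'\epsilon$, the true root fidelity of the output tree $T'$ is at least $f_\theta-O(b'\Delta f)$ and its true per-pair yield is at least $\mathrm{OPT}-O(\Delta\xi+\Delta f)/b'$, both within a $(1-O(\epsilon))$ factor of the target once the polynomial factors are folded into $\epsilon$; this is the asserted $\epsilon$-optimality.

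I expect the error-propagation bound to be the main obstacle. Naively, discretization injected at a leaf could be amplified multiplicatively on the way to the root, which for a depth-$\Theta(b')$ pumping tree would demand an exponentially fine granularity. What rescues the argument is the special structure of the purification map: rewriting $1-F(f_1,f_2)$ in the infidelities $1-f_i$ exhibits $F$ as a contraction on infidelity near its fixed point $f=1$ — concretely $1-F(f_1,f_2)\le\tfrac12\big((1-f_1)+(1-f_2)\big)$ on $(\tfrac12,1)^2$ — so each partial of $F$ stays below $1$ and injected error does not grow as it moves upward; combined with the boundedness of $P$ and $\min$, this is exactly what brings the required granularity down from exponential to the polynomial form in the hypothesis. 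A secondary technical point is keeping all quantities inside $(\tfrac12,1)$, which the regime reduction above guarantees and which is needed for the monotonicity and contraction facts — and which is anyway the only regime in which Problem~\ref{pb:purif-single-hop} is nontrivially feasible, since purification strictly increases fidelity only above $\tfrac12$.
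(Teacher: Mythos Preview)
Your proposal and the paper's proof share the same two-part skeleton: a \emph{coverage} step showing that $\cal L$ eventually contains an entry dominating (the discretization of) an optimal tree $T^\ast$, proved by induction on tree structure together with preservation of dominance under merge (monotonicity of $F$, $P$, $\min$ and the ceiling), followed by an \emph{error-propagation} step bounding the gap between the stored and the true values of fidelity and yield at the output root.

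The substantive difference is in how the error-propagation bound is obtained. The paper proves, by direct algebraic induction on the tree (its auxiliary lemma), the multiplicative bound $(1-b_3\epsilon)\hat\xi_3\le\xi_3\le(1+b_3\epsilon)\hat\xi_3$, and likewise for $\hat f$, under a per-merge granularity hypothesis, and then chains these bounds from leaves to root to conclude $\xi'\ge(1-b'\epsilon-o(b'\epsilon))\hat\xi'$. You instead bound the additive drift $D^f=\hat f-f^{\mathrm{true}}$ via the mean-value estimate $D^f_l\le\tfrac34(D^f_{l_1}+D^f_{l_2})+\Delta f$, using $\partial F/\partial f_i\le\tfrac34$ on $(\tfrac12,1)^2$, which unrolls to $D^f_{l'}\le(b'-1)\Delta f$; the $\xi$-analogue runs through Lipschitzness of $P$ and $\min$. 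Your route is conceptually cleaner because it isolates \emph{why} discretization error does not blow up along a deep pumping tree: $F$ is a contraction in each argument, so injected error is damped at every merge. The paper's algebra reaches the same place but does not make that mechanism visible. Conversely, the paper's multiplicative framing connects slightly more directly to the granularity hypotheses $\Delta\xi\le b'\xi'\epsilon$, $\Delta f\le b'f'\epsilon$; in your version the residual polynomial-in-$b'$ factors must be absorbed into $\epsilon$ at the end, which you flag honestly and which is no worse than the paper's own ``asymptotically'' step.
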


It follows from Theorem~\ref{thm:opt-single-hop} that $\Delta\xi$ depends on $\xi'$ and $b'$, which poses a problem, as we need to set $\Delta\xi$ before knowing $\xi'$ and $b'$. A practical way to set $\Delta\xi$ is to pick a purification schedule, e.g., \textsc{Parallel}, denoted by $l_0\triangleq (b_0,f_0,\xi_0,T_0)$ and set $\Delta\xi=b_0\xi_0\epsilon$. As our algorithm achieves quasi-optimality, it holds that $\xi'b'\simeq \hat{\xi}'b'\ge \xi_0 b_0$. Hence we have 
$\Delta\xi=b_0\xi_0\epsilon<b'\xi'\epsilon$. $\Delta f$ can be set similarly to satisfy the condition in Theorem~\ref{thm:opt-single-hop}.
\begin{remark}
    It is insightful to compare our results with the existing works~\cite{chenjiajsac} on the simple bit flip error model, where $$F(f_1,f_2)=\frac{f_1f_2}{f_1f_2+(1-f_1)(1-f_2)}.$$ 
In this case, by algebraically rewriting the above equation as
$$1-\frac{1}{F(f_1,f_2)}=\left(1-\frac{1}{f_1}\right)\cdot\left(1-\frac{1}{f_2}\right),$$
we can check that the entanglement purification map is associative. Hence, any entanglement purification schedule will output a purified EPR pair of the same fidelity. In contrast, we consider the generic Pauli channels, where the need for twirling makes the map~\eqref{eq:purif} non-associative~\cite{PhysRevA.59.169}, calling for optimal design of entanglement purification scheduling policy.
\label{remark:1}
\end{remark}

\subsection{Multi-hop Entanglement Purification Scheduling}

We tackle the more challenging multi-hop situation with Alice and Bob separated by quantum repeaters and rely on entanglement swapping to establish end-to-end EPR pair. Entanglement purification scheduling should be jointly optimized with entanglement swapping scheduling, formulated below.

\begin{problem}
    Given that Alice and Bob are connected by an entanglement path $P$, we seek an optimal joint entanglement purification and swapping scheduling policy maximizing the fidelity of the final EPR pair between them.
    \label{pb:swap}
\end{problem}

To solve Problem~\ref{pb:swap}, we investigate a scheduling policy called \textsc{purify-and-swap}, which consists of firstly running Algorithm~\ref{alg:purif2} to produce a purified elementary EPR pair of maximal fidelity at each hop and then performing entanglement swapping at each quantum repeater to stitch the purified elementary EPR pairs into an end-to-end one.

Structurally, \textsc{purify-and-swap} can be decomposed to two phases consisting of exclusively entanglement purification and entanglement swapping. Therefore, it is practically easy to deploy and tractable to analyze. In contrast, there are more complex policies where entanglement purification and swapping are intertwined in a nested way, rendering the deployment relatively complex and the performance analysis highly intractable. However, despite the neatness of \textsc{purify-and-swap}, its optimality cannot be guaranteed in general. Motivated by this observation, we investigate the following question: under what conditions are \textsc{purify-and-swap} optimal? Our main results in this subsection is the closed-form conditions to ensure its optimality.

Denote $P$ the entanglement path between Alice and Bob. Suppose we have established an EPR pair of fidelity $f_e$ over each link $e\in P$. We can compute the fidelity of the end-to-end EPR pair formed via entanglement swapping along $P$ as~\cite{chpt-purif}

\begin{eqnarray}
    f(P)=\frac{1}{4}\left(1+3\prod_{e\in P} \frac{4f_e-1}{3} \right).
    \label{eq:swap}
\end{eqnarray}
Mathematically, it is more convenient to write~\eqref{eq:swap} as
$$\frac{4f(P)-1}{3}=\prod_{e\in P} \frac{4f_e-1}{3}.$$

We first present an auxiliary lemma, whose proof, \textcolor{black}{consists of algebraic operations and is detailed in the appendix} 

\begin{lemma}
    Consider $3$ quantum nodes as shown in Figure~\ref{fig:puri-swap}, where there are two EPR pairs between Alice and Charlie of fidelity  $f_1$ and $f_2$, and two EPR pairs between and Bob and Charlie of fidelity $f_3$ and $f_4$. 
    \begin{itemize}
        \item If $f_1,f_2\ge 0.5$ and $f_3,f_4\ge 0.7$, then \textsc{purify-and-swap} outperforms \textsc{swap-and-purify} in terms of fidelity.
        \item If $f_i\ge 0.7$ for $1\le i\le 4$ and the swapping success probability is upper-bounded by $0.818$, then \textsc{purify-and-swap} outperforms \textsc{swap-and-purify} in success probability, i.e., the probability of establishing an end-to-end EPR pair between Alice and Bob.
    \end{itemize}
    \label{fact:aux}
\end{lemma}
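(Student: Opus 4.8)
The plan is to treat the two assertions separately, reducing each to a comparison of two explicit rational functions of $f_1,f_2,f_3,f_4$ obtained by composing the purification map $P(\cdot,\cdot)$ in~\eqref{eq:purif}, the success probability $F(\cdot,\cdot)$ in~\eqref{eq:purif-p}, and the swapping identity~\eqref{eq:swap}. For \textsc{purify-and-swap} on the three-node line, the final fidelity is $f_{\mathrm{ps}} = \tfrac14\bigl(1 + 3\cdot\tfrac{4P(f_1,f_2)-1}{3}\cdot\tfrac{4P(f_3,f_4)-1}{3}\bigr)$, while for \textsc{swap-and-purify} one first swaps each of the two parallel pairs across Charlie to get end-to-end fidelities $g_1,g_2$ via~\eqref{eq:swap}, then purifies to get $f_{\mathrm{sp}} = P(g_1,g_2)$. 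First I would write both sides in the $x_e \triangleq \tfrac{4f_e-1}{3}\in[0,1]$ coordinates, since swapping is just multiplication there and $P$ has a clean image under this substitution; the fidelity comparison then becomes a polynomial inequality in $x_1,\dots,x_4$ on a box.

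For the first bullet, after clearing (positive) denominators the claim $f_{\mathrm{ps}}\ge f_{\mathrm{sp}}$ reduces to showing a polynomial $Q(x_1,x_2,x_3,x_4)\ge 0$ on the box $x_1,x_2\ge \tfrac13$ (from $f_{1},f_2\ge 0.5$) and $x_3,x_4\ge \tfrac{3}{5}\cdot\tfrac{?}{}$ — more precisely $x_3,x_4\ge \tfrac{4(0.7)-1}{3}=0.6$. I would exploit symmetry (the expression is symmetric under $f_1\leftrightarrow f_2$ and $f_3\leftrightarrow f_4$) and monotonicity: show $\partial Q/\partial x_i\ge 0$ coordinatewise on the box so that the minimum is attained at the corner $f_1=f_2=0.5$, $f_3=f_4=0.7$, and finally check the inequality numerically at that single corner. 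If the sign of the partial derivative is not uniform, I would instead substitute $f_2=f_1$, $f_4=f_3$ first (arguing the worst case is on this diagonal by a convexity/Schur-type argument) and reduce to a two-variable, then one-variable, inequality.

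For the second bullet the quantity of interest is the end-to-end \emph{success} probability. For \textsc{purify-and-swap} it is $F(f_1,f_2)\,F(f_3,f_4)\,p_{\mathrm{sw}}$ (two purifications succeed, then one swap), whereas for \textsc{swap-and-purify} it is $p_{\mathrm{sw}}^2\,F(g_1,g_2)$ (two swaps, then one purification on the swapped pairs), where $p_{\mathrm{sw}}$ is the swapping success probability and $g_1,g_2$ are the post-swap fidelities. So the inequality to prove is $F(f_1,f_2)F(f_3,f_4)\ge p_{\mathrm{sw}}\,F(g_1,g_2)$; since $F(g_1,g_2)\le 1$, it suffices to show $F(f_1,f_2)F(f_3,f_4)\ge p_{\mathrm{sw}}$, and using $f_i\ge 0.7$ one shows $F(f_i,f_j)\ge F(0.7,0.7)$ by monotonicity of $F$ in each argument on $[0.7,1]$, giving $F(f_1,f_2)F(f_3,f_4)\ge F(0.7,0.7)^2$; the hypothesis $p_{\mathrm{sw}}\le 0.818$ is then exactly calibrated so that $F(0.7,0.7)^2\ge 0.818$, closing the argument.

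The main obstacle I anticipate is the first bullet: verifying that the polynomial difference is sign-definite on the four-dimensional box. The honest route is the monotonicity reduction to a boundary corner, but establishing coordinatewise monotonicity of a degree-($\ge 4$) rational function's numerator after denominator-clearing may itself require a sub-case analysis, and the diagonal reduction $f_2=f_1,f_4=f_3$ needs a justification (e.g.\ that $P$ and the relevant composition are Schur-concave in each pair). Everything else — the success-probability comparison and the final numerical corner checks — is routine once the reductions are in place.
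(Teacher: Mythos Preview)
Your notation swaps the paper's convention (in the paper $F$ is the post-purification fidelity of~\eqref{eq:purif} and $P$ the success probability of~\eqref{eq:purif-p}), but the substantive issues are elsewhere.

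For the fidelity bullet, neither of your proposed reductions can work. The difference $\Delta=f_{\mathrm{ps}}-f_{\mathrm{sp}}$ vanishes at $f_1=f_2=f_3=f_4=1$ (both strategies output fidelity $1$), so $\Delta$ is certainly not coordinatewise increasing on the box and its minimum is not the low corner $(0.5,0.5,0.7,0.7)$. The Remark following the lemma locates the near-tight point at $(0.5,1,0.699,1)$, just outside the box with $\Delta\approx -4\times 10^{-5}$; this also kills the diagonal reduction $f_2=f_1,\,f_4=f_3$, since the worst case is maximally off-diagonal in both pairs. The paper's route is different: write $\Delta=\tfrac{2}{3}\Delta_1/\Delta_2$, verify $\Delta_2>0$ because each factor is affine in $a,b,c,d$, then show $\Delta_1$ is \emph{concave} in $d$ so it suffices to check $d\in\{c,1\}$; each of those boundary cases factors (a $(1-c)$ or $(4c-1)$ pulls out) and reduces to a lower-degree polynomial handled by the same concavity/monotonicity-on-a-coefficient trick.

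For the success-probability bullet, your shortcut fails numerically. In the paper's notation $P(0.7,0.7)=\tfrac{8}{9}(0.49)-\tfrac{2}{9}(1.4)+\tfrac{5}{9}=0.68$, so $P(0.7,0.7)^2\approx 0.462$, far below $0.818$; you cannot afford to throw away the factor $P(g_1,g_2)$ by bounding it by $1$. The paper keeps the full expression $P(a,b)P(c,d)-p_s\,P(f(a,c),f(b,d))$, shows by derivative computations that it is monotone increasing in each of $a,b,c,d$, and only then evaluates at the corner $a=b=c=d=0.7$, where the margin is about $0.0004$---which is precisely why the threshold $0.818$ is as tight as it is.
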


\begin{remark}
    We make the following clarification:
    \begin{itemize}
        \item Regarding the fidelity, the condition is relatively tight such that $f_1=0.5$, $f_2=1$, $f_3=0.699$, $f_4=1$ forms a counter-example that \textsc{swap-and-purify} outperforms \textsc{purify-and-swap} by $0.00004$ in fidelity. When the condition does not hold, we report numerically by scanning $f_i\in[0.5,0.7]$ with stepsize $0.001$ that \textsc{purify-and-swap} outperforms \textsc{swap-and-purify} in $100$\% cases.
        \item Regarding swapping success probability, due to the limitation of current BSM scheme with linear optics, this probability does not exceed $0.5$, i.e., far less than $0.818$~\cite{Bayerbach2022BellstateME}, validating the condition empirically. As in the first case, we trace the success probability with $f_i\in[0.5,0.7]$ with stepsize $0.001$ and report that \textsc{purify-and-swap} always outperforms \textsc{swap-and-purify}.
    \end{itemize}
\end{remark}
 
Armed with Lemma~\ref{fact:aux}, we next state the main result of this subsection. The proof consists of showing that if \textsc{purify-and-swap} is outperformed by another policy, then we can construct a sub-instance of the system where \textsc{swap-and-purify} outperforms \textsc{purify-and-swap}, which leads to a contradiction with Lemma~\ref{fact:aux}. \textcolor{black}{The detailed proof is provided in the appendix}

\begin{theorem}
    If the fidelity of elementary EPR pairs is at least $0.7$ for each link of $P$ and the swapping success probability does not exceed 0.818, then \textsc{purify-and-swap} is optimal.
    \label{theorem:opt-purif-schedule}
\end{theorem}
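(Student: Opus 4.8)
The plan is an exchange (tree-rewriting) argument: model any joint purification/swapping policy on $P$ as a binary \emph{scheduling tree}, show that every such tree can be rewritten into the shape produced by \textsc{purify-and-swap} without decreasing the end-to-end fidelity (resp.\ the success probability), and then invoke the single-hop optimality of Algorithm~\ref{alg:purif2} (Theorem~\ref{thm:opt-single-hop}) to conclude. In the scheduling tree the leaves are the elementary EPR pairs grouped by link; an internal node is either a \emph{purify} node whose two children span the same sub-path of $P$, or a \emph{swap} node whose two children span adjacent sub-paths $[a,b]$, $[b,c]$ and which emits a pair on $[a,c]$. The root fidelity is obtained by propagating~\eqref{eq:purif} at purify nodes and~\eqref{eq:swap} at swap nodes, and both maps are non-decreasing in each argument on the relevant range (a short check on~\eqref{eq:purif}; immediate for~\eqref{eq:swap} since every $f_e>1/4$), so the root fidelity is monotone in the fidelities of any set of internal nodes whose subtrees partition the leaves — this is what lets a local improvement propagate to the root.

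Two structural facts are used. First, \eqref{eq:swap} exhibits the end-to-end fidelity as $\tfrac14\bigl(1+3\prod_e\tfrac{4f_e-1}{3}\bigr)$, so any pure-swap subtree over a fixed chain of sub-pairs yields the same fidelity and uses the same number of swaps regardless of its shape — pure-swap subtrees may be freely re-associated. Second, because a swap requires edge-disjoint, endpoint-sharing spans while only equal-span pairs may be purified, every elementary pair of a given link that a policy uses must first be merged, necessarily by purifications, into one pair on that link before being swapped with anything; hence each link owns a single purification subtree, and the \textsc{purify-and-swap} trees are exactly those in which a connected ``swap core'' containing the root sits on top of these per-link subtrees, each being the one output by Algorithm~\ref{alg:purif2}.

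Now the rewriting. Assume some policy $\pi$ strictly beats \textsc{purify-and-swap} and pick $\pi$ optimal. If its tree is not in the canonical shape above, it contains a purify node with a swap descendant; take $p$ lowest among those. A short case analysis (using that below $p$ there is no other purify-with-swap-below) shows that both children of $p$ are swap nodes over a common range $[a,c]$ with $a\neq c$, and that the subtrees of those swap nodes are themselves canonical. Using re-association, reshape both of those swap subtrees so that each peels off the \emph{same} last link $[b,c]$: $p=\mathrm{purify}\bigl(\mathrm{swap}(X_1,\mu_1),\mathrm{swap}(X_2,\mu_2)\bigr)$, where $\mu_1,\mu_2$ are purified single-link pairs on $[b,c]$ and $X_1,X_2$ are pairs on $[a,b]$. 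This gadget is exactly the \textsc{swap-and-purify} configuration of Figure~\ref{fig:puri-swap} with $b$ in Charlie's role, so Lemma~\ref{fact:aux} says the \textsc{purify-and-swap} counterpart $\mathrm{swap}\bigl(\mathrm{purify}(X_1,X_2),\mathrm{purify}(\mu_1,\mu_2)\bigr)$ has fidelity no smaller — and, for the probability statement, success probability no smaller, the swap counts being equal. Replacing the gadget and invoking monotonicity, the root performance does not drop, while the replacement strictly shrinks the span on which the top purification of the gadget acts; a potential function such as the sum of squared spans over all purify nodes strictly decreases (by $2(L-1)$ when the shrunk span has length $L\ge 2$), so finitely many replacements reach a canonical tree. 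As no step lost performance and the per-link subtrees can finally be swapped for the optimal Algorithm~\ref{alg:purif2} ones (Theorem~\ref{thm:opt-single-hop}), \textsc{purify-and-swap} is at least as good as $\pi$, contradicting strict improvement. Equivalently, in the advertised contrapositive form: if $\pi$ strictly beats \textsc{purify-and-swap}, some gadget replacement must be strictly harmful, i.e.\ at that gadget \textsc{swap-and-purify} beats \textsc{purify-and-swap}, contradicting Lemma~\ref{fact:aux}.

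The main obstacle is verifying that the hypotheses of Lemma~\ref{fact:aux} genuinely hold at every gadget. For the Charlie-adjacent pairs $\mu_1,\mu_2$ this is fine: they are single-link purifications, and a one-line monotonicity check on~\eqref{eq:purif} gives $f(\mu_i)\ge 0.7$ (indeed $\ge P(0.7,0.7)>0.73$) whenever the elementary fidelities are $\ge 0.7$, so the condition on that side of Lemma~\ref{fact:aux} is met. The delicate point is $X_1,X_2$: their fidelity is $\tfrac14\bigl(1+3\prod_{e\in[a,b]}\tfrac{4f_e-1}{3}\bigr)$, which can fall below $1/2$ — outside the stated regime of Lemma~\ref{fact:aux} — once $[a,b]$ contains enough links. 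I expect this to be the crux, and would handle it either by (a) always peeling the single link from whichever end leaves the higher-fidelity remainder, together with a mild strengthening of Lemma~\ref{fact:aux}'s first bullet in which only the two inner-side inputs must be $\ge 0.7$ while the two outer ones need only exceed $1/4$ (the numerics in the Remark already indicate the inequality persists well below $1/2$), or, failing that, by (b) treating very long ranges separately: by associativity the contribution of such a sub-path to any policy's output is the fixed product $\prod\tfrac{4f_e-1}{3}$ and purification upon it can only help through $P$-monotonicity, which reduces that case to the single-hop optimality already in hand. Turning one of (a), (b) into a rigorous argument — rather than leaning on the Remark's numerical evidence — is the technical work remaining beyond the skeleton above.
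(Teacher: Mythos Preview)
Your tree-rewriting skeleton is the same strategy as the paper's: locate a purification-of-swapped-pairs node, peel off one endpoint link using swap associativity, and invoke Lemma~\ref{fact:aux} for a strict local improvement, which by monotonicity contradicts optimality of $\pi$. In several respects (choosing the lowest such node, spelling out associativity, supplying a termination potential) your write-up is actually more careful than the appendix proof. The obstacle you flag, however, is not real, and the paper closes it with a one-line optimality observation that you have all the ingredients for but do not use. Since $\pi$ is \emph{optimal}, every purify node in it has both inputs of fidelity at least $1/2$: if $\lambda_1<1/2$ then a direct check of~\eqref{eq:purif} gives $F(\lambda_1,\lambda_2)<\lambda_2$, so deleting that purification and keeping $\lambda_2$ strictly improves the node and hence (by the root-monotonicity you already recorded) the end-to-end fidelity, contradicting optimality of $\pi$. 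Now in your gadget $\lambda_i=\mathrm{swap}(X_i,\mu_i)$, and~\eqref{eq:swap} shows swapping never raises fidelity (each factor $(4f-1)/3\le 1$), so $X_i\ge\lambda_i\ge 1/2$. Together with $\mu_1,\mu_2\ge 0.7$ on the peeled single link, the hypotheses of the first bullet of Lemma~\ref{fact:aux} are met with $(f_1,f_2,f_3,f_4)=(X_1,X_2,\mu_1,\mu_2)$, and neither of your proposed fixes (a) or (b) is needed.

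As a secondary simplification: because Lemma~\ref{fact:aux} yields a strict inequality, already the first gadget replacement strictly improves the optimal $\pi$, which is the contradiction. The paper's proof is therefore a one-step argument, and your iterative rewriting with a potential function, while correct, is more machinery than required.
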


\section{Fidelity-constrained Entanglement Routing}
\label{sec:rout}

We have addressed the first research question regarding entanglement scheduling along a predetermined entanglement path. We proceed to the second question on how to integrate entanglement purification into entanglement provisioning and routing so as to build min-cost end-to-end entanglement paths with given QoS requirements. We first investigate the case of single Alice-Bob pair and then tackle multi-flow path optimization.

\subsection{The Case of Single Alice-Bob Pairs}
We model the quantum network as an undirected graph $G\triangleq (V,E)$, where $V$ denotes the set of quantum nodes, i.e., quantum terminals and quantum repeaters, $E$ denotes the set of quantum channels. We note that a quantum channel between a pair of nodes $u$ and $v$ induces a communication channel if and only if $u$ and $v$ successfully establish a quantum entanglement. Each node $v$ disposes $Q_v$ qubits, allowing it to create at most $Q_v$ quantum entanglements with its neighbors. The aforementioned quantum network model is essentially the same as the one in research~\cite{10.1145/3341302.3342070,shiton}, but we added considerations for fidelity, purification and swapping.
Let $p_v$ denote the entanglement swap success probability of node $v$. As the qubits are limited resources, quantum entanglements need to be carefully provisioned to support communication sessions among quantum nodes. 

Consider a communication session between Alice and Bob requesting $q_0$ qubits of fidelity at least $f_0$. Let $\cal P$ denote the set of entanglement paths between them. Given an edge $e\in P\in {\cal P}$, we define a non-decreasing cost function $C_e(m_e)$, where $m_e$ denotes the number of elementary EPR pairs over $e$ that are used to build the end-to-end EPR pair along $P$. The cost of path $P$ is defined as $C(P)\triangleq\sum_{e\in P}C_e(m_e)$. Below we give two concrete examples. 
\begin{itemize}
    \item It is often meaningful to minimize the number of consumed EPR pairs to achieve a given end-to-end fidelity. In this case, we can simply set $C_e=m_e$ or, in the weighted case, $C_e=w_em_e$, with $w_e$ being the associated weight.
    \item When multiple EPR pairs are involved over an edge, entanglement purification needs to be executed in multiple rounds. In this case, $C_e(m_e)$ can model the delay to obtain a purified EPR pair over $e$. $C(P)$ is thus the entire delay to establish the end-to-end EPR pair.
\end{itemize}

We call an entanglement path \textit{feasible} if the following conditions are satisfied.
\begin{itemize}
    \item The corresponding end-to-end EPR pairs are constructed by \textsc{purify-and-swap}, where entanglement purification follows Algorithm~\ref{alg:purif2}, subject to the qubit resource limit of each quantum node;
    \item The expected number of successfully established end-to-end EPR pairs is at least $q_0$;
    \item The fidelity of the end-to-end EPR pairs is at least $f_0$.
\end{itemize}

We seek to solve the following QoS-constrained entanglement path optimization problem.

\begin{problem}
    The min-cost feasible entanglement path optimization problem seeks a feasible entanglement path between Alice and Bob of minimal cost.
    \label{pb:rout}
\end{problem}

\begin{remark}
    By seeking an optimal entanglement path we mean to find not only the path, but also the number of EPR pairs over each edge of the path involved in entanglement purification.  
\end{remark}

Problem~\ref{pb:rout} is \textsc{NP}-hard, as stated in Lemma~\ref{lemma:np}. To prove the hardness, we can focus on a degenerated problem instance where $C_e(m_e)=w_em_e$ and each quantum repeater disposes two qubits, rendering entanglement purification impossible, because it needs to establish two EPR pairs with its preceder and successor in path $P$, leaving no more qubits for entanglement purification. $q_0$ is set to a very small real number that that any path satisfies the entanglement throughput constraint. In this situation, both path cost and fidelity can be formulated as additive metrics. The problem thus degenerates to the constrained min-cost path problem which is \textsc{NP}-hard.

\begin{lemma}
    The fidelity-constrained min-cost entanglement path problem is \textsc{NP}-hard.
    \label{lemma:np}
\end{lemma}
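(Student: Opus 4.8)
The plan is to prove \textsc{NP}-hardness by a reduction from a known \textsc{NP}-complete problem, exactly along the lines already sketched in the paragraph preceding the lemma statement. The natural choice is the \emph{Restricted Shortest Path} problem (also called the delay-constrained least-cost path problem): given a graph with two nonnegative integer weights $c(e)$ and $d(e)$ on each edge, a source $s$, a destination $t$, and a bound $D$, decide whether there is an $s$--$t$ path $P$ with $\sum_{e\in P} d(e)\le D$ of minimum $\sum_{e\in P} c(e)$; its decision version is well known to be \textsc{NP}-complete. So first I would recall this problem and fix notation for it.

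Next I would construct the reduction. Given an instance of Restricted Shortest Path on a graph $H=(V_H,E_H)$, build a quantum network $G$ on the same vertex set and edge set, set $Q_v = 2$ for every internal vertex $v$ (so that each repeater can hold exactly one EPR pair toward its predecessor and one toward its successor on any path, leaving no spare qubits and thus forbidding any purification), and set $q_0$ to a tiny positive constant so that the throughput constraint is vacuous for every candidate path. Set the cost function $C_e(m_e)=w_e m_e$ with $w_e=c(e)$; since $m_e$ is forced to be $1$ on every used edge, $C(P)=\sum_{e\in P} c(e)$. It remains to encode the delay constraint as a fidelity constraint. The key observation is~\eqref{eq:swap}: along a path with no purification, the end-to-end fidelity $f(P)$ satisfies $\frac{4f(P)-1}{3}=\prod_{e\in P}\frac{4f_e-1}{3}$, so taking logarithms, $-\ln\frac{4f(P)-1}{3}=\sum_{e\in P}\bigl(-\ln\frac{4f_e-1}{3}\bigr)$ is an \emph{additive} edge metric. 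Hence I would choose the elementary fidelities $f_e\in(1/4,1)$ so that $-\ln\frac{4f_e-1}{3}$ is proportional to $d(e)$ (rational edge weights scaled to a common denominator, realized by appropriate $f_e$), and set the required end-to-end fidelity $f_0$ so that the constraint $f(P)\ge f_0$ becomes exactly $\sum_{e\in P} d(e)\le D$. One should make sure the resulting $f_e$ stay in a valid range; if raw $d(e)$ values are too large one can first scale the Restricted Shortest Path instance (this preserves \textsc{NP}-hardness).

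Then I would verify the reduction is correct and polynomial: the construction is clearly polynomial-time; every $s$--$t$ path in $G$ uses exactly one EPR pair per edge (the qubit budget forbids more, and one per edge suffices for \textsc{purify-and-swap} degenerating to pure swapping), so feasibility reduces to the fidelity constraint, which by the logarithmic transformation is exactly the delay constraint; and the cost of a feasible path in $G$ equals the cost of the corresponding path in $H$. Therefore a min-cost feasible entanglement path yields a minimum-cost delay-constrained path and vice versa, establishing the hardness.

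The main obstacle I anticipate is not conceptual but bookkeeping: ensuring that the fidelities $f_e$ produced by the encoding are genuinely admissible (strictly between $1/4$ and $1$, and $\ge 0.5$ if one wishes to stay in the regime where swapping behaves well), and that scaling the delay weights to hit exact rational targets does not blow up the instance size super-polynomially. I would handle this by arguing that the Restricted Shortest Path instance can be assumed to have polynomially bounded integer weights (a standard normalization), then picking $f_e = \frac{1}{4}\bigl(1 + 3\,e^{-\lambda d(e)}\bigr)$ for a fixed small $\lambda>0$ chosen so that all $f_e$ lie in, say, $[0.9,1)$, and setting $f_0 = \frac{1}{4}\bigl(1+3\,e^{-\lambda D}\bigr)$; with these choices $f(P)\ge f_0 \iff \sum_{e\in P} d(e)\le D$ holds identically, completing the argument.
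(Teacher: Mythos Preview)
Your reduction is exactly the one the paper sketches in the paragraph preceding the lemma: force $Q_v=2$ at every repeater so purification is impossible, set $q_0$ small enough that the throughput constraint is vacuous, take $C_e(m_e)=w_em_e$, and use the log transformation of~\eqref{eq:swap} to turn the fidelity requirement into an additive delay bound, thereby landing on the \textsc{NP}-hard constrained (restricted) shortest-path problem. One small caveat on your final bookkeeping remark: do not normalize the RSP delays to be polynomially bounded in \emph{value}---that would make RSP solvable in polynomial time by dynamic programming and kill the reduction; the numerical-precision issue you flag is genuine but is customarily absorbed into the computational model (real-valued link fidelities), which is what the paper tacitly does as well.
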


Given the hardness result, we focus on approximation algorithm design. A technical challenge distinguishing Problem~\ref{pb:rout} from the constrained path optimization problems studied in the literature lies in the feasibility requirement, where the number of qubits used at each node $v$ cannot exceed its capacity $Q_v$. 

To model the feasibility constraint, we construct an auxiliary graph $G'\triangleq (V',E')$ based on $G$. 
\begin{itemize}
    \item \textbf{Vertices}: For each vertex $v\in V-\{s,t\}$, we create $Q_v-1$ vertices $v_i$, $i\in[1,Q_v-1]$. $v_i$ corresponds to $v$ disposing $i$ available qubits for entanglement. Note that any $v\in V-\{s,t\}$ is a quantum repeater connecting two nodes in the $st-$path, $v$ can take at most $Q_v-1$ qubits to entangle with any neighbor. We then create $Q_s$ vertex $s_i$, $i\in[1,Q_s]$, representing $s$ and $Q_t$ vertices $t_i$, $i\in[0,Q_t-1]$, representing $t$. Finally, we create a virtual source node $s'$ and virtual destination node $t'$.
    \item \textbf{Edges}: For each edge $(u,v)\in E$, we create edges $(u_i,v_j)$ if $Q_v-j \le i $ and $Q_v-j\le C_{l}$, where $C_{l}$ is link capacity.The rationale behind our construction is below: $(u_i,v_j)$ represents $Q_v-j$ EPR pairs between $u$ and $v$. This is feasible as (1) $u_i$ implicates that $u$ disposes $i$ qubits; (2) since $v_j$ represents $v$ disposing $j$ available qubits, it can establish $Q_v-j$ EPR pairs with $u$ as long as $Q_v-j\le i$ and $Q_v-j\le C_{l}$. We set the cost of $(u_i,v_j)$ as
    \vspace{-0.5cm}
    $$C'_{(u_i,v_j)}=C_{(u,v)}(Q_v-j).$$
    Finally, we create an edge between $s'$ and $s_i$ for each $1\le i\le Q_t$, $t_i$ and $t'$ for each $0\le i\le Q_t-1$ and set the edge cost to zero and fidelity to one.
\end{itemize}

\begin{figure}[!ht]
\centering
\includegraphics[width=0.48\textwidth]{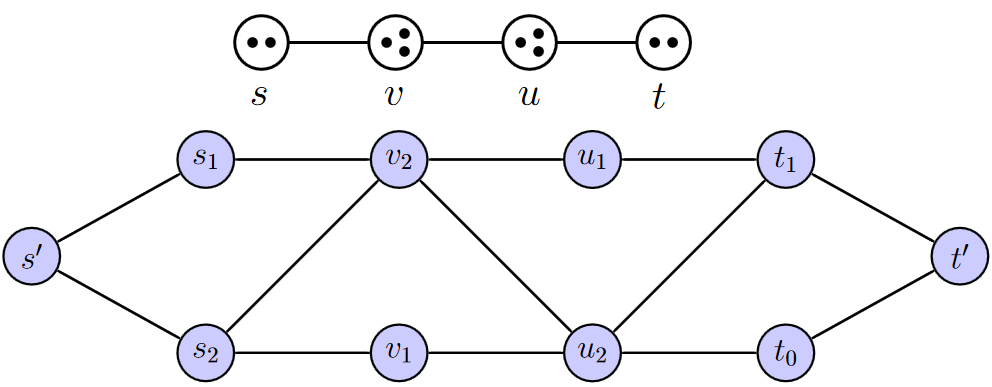}
\caption{An example of $G$ (upper) and $G'$ (lower): $s$ and $t$ dispose $2$ qubits, while $v$ and $u$ dispose $3$. Path $s'-s_2-v_1-u_2-t_0-t'$ corresponds to path $s-v-u-t$ by purifying $2$ EPR pairs at the edges $(s,v)$ and $(u,t)$.}
\label{fig:aux-graph}
\end{figure}

The following lemma follows from our construction of $G'$.

\begin{lemma}
    There is a bijective map between each feasible path from $s'$ to $t'$ and a feasible path between $s$ and $t$ in $G$.
\end{lemma}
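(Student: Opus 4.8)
The goal is to establish a bijection between feasible $s'$--$t'$ paths in the auxiliary graph $G'$ and feasible $s$--$t$ entanglement paths in $G$ (recalling that, by the earlier remark, an entanglement path carries with it the number of EPR pairs purified on each edge). The plan is to exhibit the two maps explicitly and verify they are mutually inverse and that feasibility is preserved in both directions. The first direction is to take a feasible path $\pi' = s' - s_{i_0} - v^{(1)}_{i_1} - \cdots - v^{(k)}_{i_k} - t_{i_{k+1}} - t'$ in $G'$ and read off the underlying walk $s - v^{(1)} - \cdots - v^{(k)} - t$ in $G$, together with the edge multiplicities $m_{e} = Q_{v^{(r)}} - i_r$ on the edge $e$ entering $v^{(r)}$ (and analogously for the terminal edges). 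I would first check this underlying walk is a simple path: distinct vertices $v^{(r)}$ in $G'$ project to distinct vertices of $G$ by construction, since each original vertex $v$ spawns a disjoint block of copies $v_1,\dots,v_{Q_v-1}$. Then I verify the three feasibility conditions transfer: the edge-existence rule $Q_v - j \le i$ in $G'$ is exactly the statement that the $m_e$ EPR pairs requested on the edge into $v_j$ do not exceed the $i$ qubits that $u_i$ has available, and $Q_v - j \le C_l$ enforces link capacity; summing the inequality along the path shows no node commits more than its $Q_v$ qubits (one block of qubits "out" toward the successor encoded in the index, the rest "in" from the predecessor), so the qubit-resource constraint holds; the cost equality $C'_{(u_i,v_j)} = C_{(u,v)}(Q_v-j)$ makes $C(\pi') = C(P)$ verbatim; and the fidelity-along-the-path quantity is unchanged because the $s'$--$s$ and $t$--$t'$ stub edges have fidelity one and cost zero. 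The throughput and end-to-end fidelity conditions are properties of the $G$-path and its multiplicities alone, hence are inherited.

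For the reverse direction, given a feasible entanglement path $P = s - v^{(1)} - \cdots - v^{(k)} - t$ in $G$ with edge multiplicities $m_e$, I would reconstruct the $G'$ path by setting, at each internal vertex $v^{(r)}$, the index $i_r = Q_{v^{(r)}} - m_{e_r}$ where $e_r$ is the edge of $P$ entering $v^{(r)}$; the feasibility of $P$ (each node uses at most $Q_v$ qubits, each link at most $C_l$) guarantees $1 \le i_r \le Q_{v^{(r)}} - 1$ and that the defining inequalities $Q_{v^{(r)}} - i_r \le i_{r-1}$ and $Q_{v^{(r)}} - i_r \le C_l$ hold, so the required edges $(v^{(r-1)}_{i_{r-1}}, v^{(r)}_{i_r})$ indeed exist in $E'$; finally prepend $s' - s_{i_0}$ and append $t_{i_{k+1}} - t'$. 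One then checks that composing the two maps in either order is the identity, which is immediate since both are defined by the same index-vs-multiplicity correspondence $i_r \leftrightarrow Q_{v^{(r)}} - m_{e_r}$.

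The main obstacle I anticipate is the bookkeeping around the terminal vertices $s$ and $t$ and the precise off-by-one in the qubit accounting at internal repeaters. The construction treats $s$ and $t$ asymmetrically (creating $s_i$ for $i \in [1,Q_s]$ but $t_i$ for $i \in [0,Q_t-1]$) and gives internal vertices only $Q_v - 1$ copies because a repeater must spend at least one qubit toward each of its two path-neighbors; making the "available qubits'' semantics of the index $i$ consistent across the stub edges, the first hop, the internal hops, and the last hop — so that the single inequality $Q_v - j \le i$ uniformly captures "enough qubits remain'' — is the delicate part and the place where a careful case analysis is needed. Everything else (simplicity of the projected path, cost preservation, fidelity preservation via the unit-fidelity stubs) is routine once the indexing convention is pinned down. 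I would therefore organize the proof as: (i) fix notation for a $G'$-path and its projection; (ii) the forward map and its feasibility/cost verification; (iii) the backward map and its feasibility/cost verification; (iv) a one-line check that they are mutually inverse.
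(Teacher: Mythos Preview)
Your approach is the natural one and is essentially what the paper leaves implicit: the paper does not give a proof beyond asserting that the lemma ``follows from our construction of $G'$'', so your explicit forward/backward maps and mutual-inverse check are exactly the unpacking one would expect.

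There is one genuine gap in your argument, though. Your justification that the projected walk is simple is incorrect: the fact that each $v\in V$ spawns a disjoint block of copies $v_1,\dots,v_{Q_v-1}$ in $G'$ only guarantees that the projection $V'\to V$ is \emph{well-defined}, not that it is injective along a given simple $G'$-path. Nothing in the edge rule $Q_v-j\le i$ prevents a simple $G'$-path from visiting $v_1$, then some $u_i$, then $v_3$, which projects to the non-simple walk $v$--$u$--$v$ in $G$. The paper does not fix this in the construction of $G'$; rather, Algorithm~\ref{alg:mincost} enforces it operationally via the test ``\textbf{if} $v$ not in \emph{path}'' when extending labels. So the bijection the lemma intends is really between those $s'$--$t'$ paths in $G'$ whose $G$-projection is simple and the feasible $s$--$t$ entanglement paths in $G$. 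Once you state that restriction (or, equivalently, build it into the definition of ``feasible $s'$--$t'$ path''), your index-versus-multiplicity correspondence $i_r\leftrightarrow Q_{v^{(r)}}-m_{e_r}$ and the terminal bookkeeping go through exactly as you outline.
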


Our fidelity-constraint path optimization algorithm, whose pseudo-code is given in Algorithm~\ref{alg:mincost}, consists of working on $G'$ and then transform the output path to a feasible path in $G$. To facilitate algorithm design, given a path $P$ of fidelity $f(P)$, we define its pseudo-fidelity $\phi(P)$ as 
$$\phi(P)\triangleq \ln\frac{4f(P)-1}{3}.$$
The introduction of pseudo-fidelity allows us to treat the path fidelity as an additive metric. We also transform the fidelity threshold $f_0$ to the corresponding pseudo-fidelity threshold $\phi_0$.

Moreover, let $\lambda_e$ denote the expected number of purified EPR pairs produced over edge $e\in P$, and let $\lambda(P)$ denote the expected end-to-end EPR pairs generated along $P$ by letting the quantum routers perform entanglement swapping. We have 
$$\lambda(P)=\min_{e\in P} \quad \lambda_e\!\!\!\!\!\prod_{v\in P, v\ne \{s,t\}}\!\!\!\!\! p_v.$$
We note that the above computed $\lambda(P)$ actually gives its lower-bound, as if we carefully optimize the order, under which entanglement swapping is performed at the intermediate routers, we can achieve better end-to-end throughput~\cite{order_matter}. The price to pay is the additional delay to establish the end-to-end entanglements, as we need to separate the swapping operations into rounds. Our algorithm can be applied to such optimization as long as the optimized throughput can be expressed as a function of $P$ and $\lambda_e$ along $P$.

To turn $\lambda(P)$ into an additive metric, we define the log-throughput $\psi_e\triangleq \ln \lambda_e$, {the log-swap-success-probability of node $v$ is termed as $\psi_v\triangleq \ln p_v$.} Accordingly, we have 
$$\psi(P)=\min_{e\in P} \psi_e+\sum_{v\in P, v\ne \{s,t\}} {\psi_v}$$

\textbf{Data structure}. We maintain a list ${\cal L}_v$ at each vertex $v\in V'$ containing quintuples as entries. Each quintuple $l$ represents a feasible path $P_l$ from $s'$ to $v$ that is not dominated by any other quadruple. Specifically, in a quintuple $l\triangleq (c,\hat{\phi},\psi_b,\hat{\psi},path)$, $c$ denotes the cost of $P_l$, $\hat{\phi}$ denotes the discretized pseudo-fidelity of $P_l$ with stepsize $\Delta\phi$, $\psi_b=\min_{e\in P}\{\psi_e\}$ denotes the bottleneck edge log-throughput of $P_l$, $\hat{\psi}$ denotes the discretized log-throughput of $P$ with stepsize $\Delta\psi$. The algorithm operates in $G'$, and we need to translate the sequence of visited nodes of path in $G'$ back to $G$, recording this sequence in the path list for loop-free. The entries are indexed by the couple $(\hat{\phi},\hat{\psi})$. Given two entries $l_i=(c_i,\hat{\phi}_i,\psi_{b,i},\hat{\psi}_{l,i},path)$, $i=1,2$, we say that $l_1$ dominates $l_2$ if $c_1\le c_2$, $\hat{\phi}_1\ge \hat{\phi}_2$, and $\hat{\psi}_{l,1}\ge \hat{\psi}_{l,2}$. The dominance is strict if at least one inequality holds strictly. Intuitively, an optimal feasible path is not strictly dominated by any feasible path. So we can safely remove dominated triples. 

\textbf{Algorithm}.
The core part of Algorithm~\ref{alg:mincost} is the main loop that iteratively scans all the vertices and their non-dominated labels using a min-heap priority queue $\Gamma$ and updates the corresponding entries. Initially, the queue contains the labels of $s$. The algorithm iteratively processes labels from the queue. For each label $l$ dequeued from $\Gamma$, the algorithm updates the visited label set associated with node $v$, denoted by $\Omega_v$. once the current node $u_i$ is the target node $t'$, meaning that the algorithm find the minimum cost from $s$ to $t$. The algorithm terminates once $u_i=t'$ or $\Gamma$ becomes empty. Update labels procedure involves expanding the label of current node to neighboring nodes. For each neighbor $v_j$ of the current node $u_i$, new labels are generated by iterating over all possible possible discrete values of the fidelity threshold $\hat{\phi}'$, i.e., $\frac{\hat{\phi}}{\Delta\phi}< k \le\left\lfloor\frac{\hat{\phi}-\phi_0}{\Delta\phi}\right\rfloor$. Then for each $\hat{\phi}'$, we perform the following actions.

\begin{itemize}
    \item We invoke Algorithm~\ref{alg:purif2} with $Q_v-j$ EPR pairs and fidelity threshold $\hat{\phi}'$ to compute the throughput $\psi_e$, which is the maximal throughput over $e$ satisfying the fidelity constraint, and this procedure can be pre-computed .   
    \item We then update the discretized throughput of the path $P$ concatenated by $e$ by comparing the bottleneck throughput of $P$, $\psi_b$, with $\psi_e$ and updating $\hat{\psi}'$ accordingly.
    \item If the log-throughput of the new path satisfies the throughput constraint, we then create a quintuple $l'$ corresponding to the new path and add it to ${\cal L}_{v_j}$ if it is not dominated by any other entry.
    \item By the above steps, we maintain in ${\cal L}_{v_j}$ the min-cost feasible $sv_j-$path for each $(\hat{\phi},\hat{\psi})$ couple.  
    \item Once the scan and update process is terminated, Algorithm~\ref{alg:mincost} selects the min-cost entry in ${\cal L}_{t'}$ and outputs the corresponding path $P$ and the related entanglement purification scheme. As the vertices $t_i$, $1\le i\le Q_t$, map to $t$, the output is an $st-$path.
\end{itemize}

\color{black}
We next prove that Algorithm~\ref{alg:mincost} outputs an $st-$path that can be arbitrarily close to optimum.
\begin{definition}
    Let $P^*$ denote the optimal path solving Problem~\ref{pb:rout}. An $st-$path $P$ is called an $\epsilon$-optimal feasible path if the following inequality holds.
    $$C(\hat{P})\le C(P^*), \ \phi(\hat{P})\ge(1-\epsilon)\phi_0, \ \psi(\hat{P})\ge (1-\epsilon)\psi_0.$$
\label{def:appro}
\end{definition}
The rationale behind the above definition is to allow a small quantity $\epsilon$ of overflow on the fidelity and throughput constraints. Theorem~\ref{theorem:opt-path}, whose proof is omitted here, proves that Algorithm~\ref{alg:mincost} outputs an $\epsilon$-optimal feasible path. 
\begin{theorem}
    If {$\Delta \phi\le \epsilon\phi_0/|V|$} and $\Delta\psi\le \epsilon \psi_0/|V|$, Algorithm~\ref{alg:mincost} outputs an $\epsilon$-optimal feasible path.
    \label{theorem:opt-path}
\end{theorem}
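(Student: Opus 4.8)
The plan is to combine the bijection between feasible $s'$--$t'$ paths in $G'$ and feasible $s$--$t$ paths in $G$ with a rounding-plus-label-setting argument, adapted to two features specific to our setting: the per-edge throughput is itself produced by Algorithm~\ref{alg:purif2}, and the path throughput $\psi(P)$ is a bottleneck-plus-sum quantity rather than a pure additive one. Let $P^*$ be an optimal feasible $st$-path of Problem~\ref{pb:rout}, together with the numbers $m_e^*$ of elementary EPR pairs it purifies (via Algorithm~\ref{alg:purif2}) on each edge; by the bijection lemma this datum corresponds to a path $Q^*=(s'=v^0,v^1,\dots,v^k=t')$ in $G'$. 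Since Algorithm~\ref{alg:mincost} records the underlying $G$-path and keeps it loop-free, $Q^*$ visits at most one copy of each vertex of $V$, so at most $|V|$ of its edges carry cost and constraints (the two virtual edges contribute zero cost, unit fidelity and no change of throughput). Write $c_i,\phi_i,\psi_i$ for the true cost, pseudo-fidelity and log-throughput of the prefix $Q_i^*=(v^0,\dots,v^i)$, and recall $\phi_i=\sum_{e\in Q_i^*}\phi_e$ is additive while $\psi_i=\psi_{b,i}+\sum_v\psi_v$ with $\psi_{b,i}=\min_{e\in Q_i^*}\psi_e$.

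The core of the proof is an induction on $i$ showing that, once Algorithm~\ref{alg:mincost} has settled $v^i$, the list ${\cal L}_{v^i}$ contains a label $l_i=(c,\hat\phi,\psi_b,\hat\psi,path)$ with $c\le c_i$, $\psi_b=\psi_{b,i}$, $\hat\phi\ge\phi_i-i\,\Delta\phi$ and $\hat\psi\ge\psi_i-i\,\Delta\psi$. In the inductive step, when the algorithm expands $l_{i-1}$ along the $G'$-edge $e=(v^{i-1},v^i)$: (i) the enumeration of discrete target thresholds $\hat\phi'$ includes a value within $\Delta\phi$ of the pseudo-fidelity $\phi_e^*$ that $P^*$ realises on $e$; (ii) invoking Algorithm~\ref{alg:purif2} with that threshold on the $Q_v-j=m_e^*$ elementary pairs of $e$ returns an edge throughput $\psi_e$ at least as large as the one $P^*$ obtains on $e$, because the output of Algorithm~\ref{alg:purif2} is non-increasing in the fidelity threshold, and it incurs exactly the cost $C_e(m_e^*)$; (iii) re-discretising $\hat\phi$ and $\hat\psi$ loses at most $\Delta\phi$ and $\Delta\psi$ respectively, while $\psi_b$ is stored undiscretised and hence tracked exactly. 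This yields a candidate label at $v^i$ meeting the claimed bounds; if it survives dominance pruning it is the desired $l_i$, and if it is discarded then some label $l_i'$ dominates it --- weakly smaller in cost, weakly larger in $\hat\phi$ and in $\hat\psi$ --- so the induction simply proceeds with $l_i'$ in its place. Since the algorithm processes every non-dominated label and dequeues labels in order of non-decreasing cost, the current tracking label (or its current dominator) is eventually settled.

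It then remains to cash in the invariant at $v^k=t'$. The label $l_k$ has $c\le c_k=C(P^*)$, and since the prefix-$k$ constraints coincide with the feasibility constraints of the entire path, $\hat\phi\ge\phi_0-k\,\Delta\phi$ and $\hat\psi\ge\psi_0-k\,\Delta\psi$, so $l_k$ passes the discretised feasibility tests and is a legitimate output candidate. As Algorithm~\ref{alg:mincost} halts at the first label dequeued at $t'$ from the cost-ordered queue, its output $\hat P$ satisfies $C(\hat P)\le c\le C(P^*)$. Finally, translating the discretised guarantees back to true quantities: every rounding applied to the pseudo-fidelity of $\hat P$ was in the optimistic direction and there were at most $|V|$ of them, so $\phi(\hat P)\ge\hat\phi-|V|\,\Delta\phi\ge\phi_0-|V|\,\Delta\phi$, and $\Delta\phi\le\epsilon\phi_0/|V|$ gives $\phi(\hat P)\ge(1-\epsilon)\phi_0$; the same computation with $\Delta\psi\le\epsilon\psi_0/|V|$ gives $\psi(\hat P)\ge(1-\epsilon)\psi_0$. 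Together with $C(\hat P)\le C(P^*)$ this is exactly Definition~\ref{def:appro}.

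I expect the induction, and specifically the dominance \emph{re-rooting}, to be the main obstacle: one must verify that a dominator $l_i'$ of a tracking label stays good enough in all three discretised coordinates for every subsequent edge extension (including the threshold enumeration and the call to Algorithm~\ref{alg:purif2}), and that the chain of re-rootings is finite and terminates at a label actually dequeued at $t'$ rather than one perpetually superseded. A secondary point needing care is the bottleneck term $\psi_b$: because it does not appear in the dominance relation, one must argue that among labels sharing the same $(c,\hat\phi,\hat\psi)$ the one retained can be taken to have the largest $\psi_b$, so that the invariant $\psi_b=\psi_{b,i}$ is preserved without inflating the number of labels.
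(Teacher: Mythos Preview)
Your approach is essentially the same as the paper's: map $P^*$ to a path $P_0$ in $G'$, argue that each prefix of $P_0$ is represented in the corresponding ${\cal L}_{v^k_{i_k}}$ either directly or via a dominating label (hence $C(\hat P)\le C(P^*)$), and bound the cumulative discretisation error in pseudo-fidelity and log-throughput by $|V|\cdot\Delta\phi$ and $|V|\cdot\Delta\psi$ using the path-length bound $r\le|V|$. The paper's proof is a much terser sketch of exactly this argument and does not explicitly treat the two concerns you flag at the end (the re-rooting chain and the fact that $\psi_b$ is outside the dominance relation), so your caution there is warranted but does not indicate any divergence from the paper's intended line.
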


\begin{algorithm}
    \KwIn{$G'$, $s'$, $t'$, $\psi_0=\ln q_0$, $\phi_0=\phi(f_0)$ }
    \KwOut{a feasible entanglement path $P$}
    \ForEach{$v\in V'$} {
        \eIf{$v=s'$} {
            $l\leftarrow(0,0,\infty,\infty,\{s\})$ 

            ${\cal L}_v\leftarrow\{l\}$  

            $\Gamma.enqueue((c,l,v))$
        }
        {${\cal L}_v\leftarrow\emptyset$}
        
        $\Omega_v \leftarrow \emptyset$ 
    }    

    \While{$\Gamma$ is not empty}{
        $(c,l,u_i)\leftarrow \Gamma.dequeue()$

        \If{$l$ not in $\Omega_{u_i}$}{
        add $l$ to $\Omega_{u_i}$
        
        \If{$u_i=t'$}{
        break
        }
        {
        
        \ForEach{$v_j$ in $neighbor(u_i)$}{

            \If{$v$ not in path}{            
                \For{$\displaystyle k=1$ \textbf{to} $\displaystyle {\left\lfloor\frac{\hat{\phi}-\phi_0}{\Delta\phi}\right\rfloor}$}{
                    {$\hat{\phi}'\leftarrow \hat{\phi}-k\Delta\phi$}  
                    
                    {invoke Algorithm~\ref{alg:purif2} with $Q_v-j$ EPR pairs and threshold $-k\Delta\phi$ to compute the throughput $\psi_e$, so $\psi_e=\ln ((\hat{\xi}'/b')\cdot(Q_v-j)) $}
                

                    \eIf{$\psi_e\le \psi_b$}
                    {
                        $\displaystyle\hat{\psi}'\leftarrow \left\lceil \frac{{\psi_{v_j}}+\hat{\psi}+\psi_e-\psi_b}{\Delta\psi}\right\rceil\cdot \Delta\psi$; 
                    }
                    {
                        $\displaystyle\hat{\psi}'\leftarrow \left\lceil\frac{{\psi_{v_j}}+\hat{\psi}}{\Delta\psi}\right\rceil\cdot\Delta\psi$;
                    }
                
                    \If{$\hat{\psi}'\ge \psi_0$}{
                        $l'\leftarrow (c+C_e(Q_j-j),{\hat{\phi}'},\min\{\psi_e,\psi_b\},\hat{\psi}',path+[v])$\;
                        \If{$l'$ is not dominated by any entry in ${\cal L}_{v_j}$}
                        {
                            add $l'$ to ${\cal L}_{v_j}$,
                            \label{line:2}
                            add $l'$ to $\Omega_{v_j}$\;
                            $\Gamma.enqueue((c+C_e(Q_j-j), v_j, l'))$\;

                       }
                    }
            }
            }
        }
        
        }

    }

    \eIf{no entry in ${\cal L}_{t'}$ maps to a feasible path}{
        \Return no feasible path\;
    }{
        select the entry in ${\cal L}_{t'}$ with minimum cost, the selected entry has $st$-path $P$\; 
        \label{line:k}
        \Return $P$\;
    }
    \caption{computing fidelity-constrained min-cost path}
    \label{alg:mincost}
}
\end{algorithm}

\textbf{Space and time complexity of Algorithm~\ref{alg:mincost}}. Recall the construction of $G'$, each node $v\in G$ generates $Q_v$ nodes in $G'$. For each node in $G'$, we need to maintain a list of at most $\frac{\phi_0\psi_0}{\Delta\phi\Delta\psi}$ entries. Invocation of Algorithm~\ref{alg:purif2} can be pre-computed and stored for each pair of number of qubits and threshold, taking $O(\phi_0\max_v Q_v/\Delta\phi)$ space.
Noticing $|V'|=O(Q)$ and $|E'|=O(Q^2)$ with $Q\triangleq\sum_{v\in V}Q_v$ and setting $\Delta\phi$ and $\Delta\psi$ based on Theorem~\ref{theorem:opt-path}, the total space overhead sums up to $O(|V|^2Q\epsilon^{-2})$. {The time complexity is dominated by the main loop, with the heap manipulations, which sums up to $O\left(\left(|V'|+|E'|\right)\log V'\frac{\phi_0\psi_0}{\Delta\phi\Delta\psi}\right)$,} i.e., $O\left(Q^2\log Q |V|^2\epsilon^{-2}\right)$. Though polynomial, the time complexity can be significantly high for large $Q$. In this setting, we can decrease the size of the auxiliary graph to reduce the complexity in the following way. For each node, instead of creating $Q_v-1$ vertices in the auxiliary graph, we set a stepsize $\Delta Q$ and build $Q_v/\Delta Q$ nodes in $G'$, thus reducing the graph size by factor $\Delta Q$, with $\Delta Q=1$ being the degenerated case we analyzed in this subsection. The price to pay is the loss of efficiency in the solution we find, as we increase the granularity from $1$ to $\Delta Q$. In practice, we can tune $\Delta Q$ to trade off the efficiency against complexity. 

\subsection{The Case of Multiple Alice-Bob Pairs}
In this section, we investigate the problem of multi-flow path optimization, formulated below.
\begin{problem}
    Given a set $\cal K$ of $K$ Alice-Bob pairs with fidelity requirement $f_{0,k}$ and weight $w_k$ for each flow $k\in {\cal K}$, we seek a set of fidelity-constrained min-cost paths for a subset of flows maximizing their total weight.
    \label{pb:multi-rout}
\end{problem}

We develop a two-step solution to Problem~\ref{pb:multi-rout}. Step 1 extends Algorithm~\ref{alg:mincost} to compute a pool of candidate entanglement paths for each flow. Step 2 further selects at most one path from the pool of each flow to maximize the total weight. Technically, we develop an algorithmic framework to solve the corresponding combinatorial optimization problem.

In Step 1, we extend Algorithm~\ref{alg:mincost} to compute $R_k$ $\epsilon-$fidelitous feasible paths of minimal cost for each flow $k\in{\cal K}$. To this end, we modify Line~\ref{line:2} such that, if $l'$ is dominated by less than $R_k$ entries in ${\cal L}_{v_j}$, we add $l'$ to ${\cal L}_{v_j}$. We then modify Line~\ref{line:k} such that we select $R_k$ entries of minimum cost and trace back to obtain $R_k$ paths. If there are less than $R_k$ $\epsilon-$fidelitous feasible paths, we return all of them.

In Step 2, we formulate Problem~\ref{pb:multi-rout} as the following integer linear programming (ILP).
{\small
\begin{align*}
    \text{maximize} \quad & \sum_{k\in{\cal K}} \sum_{i\in{\cal R}_k} w_kx_{ki} &  \\
    \text{subject to} \quad 
                    & \sum_{k\in{\cal K}}\sum_{v\in P_{ki}} a_{kiv}x_{ki} \le Q_{v} &  \forall v\in V \\
                    & \sum_{k\in{\cal K}}\sum_{l\in P_{ki}} b_{kil}x_{ki} \le C_{l} &  \forall l\in E \\
                    & \sum_{i\in {\cal R}_k} x_{ki}\le 1 &  \forall k\in {\cal K} \\
                    & x_{ki}\in \{0,1\} &  \forall k\in {\cal K}, i\in {\cal R}_k.
\end{align*}}

In the formulated ILP, the binary variable $x_{ki}$ indicates whether path $i$ in the candidate path pool of flow $k$, denoted by ${\cal R}_k$, is selected, $a_{kiv}$ denotes the total number of qubits consumed at node $v$ by path $i$ in ${\cal R}_k$, denoted by $P_{ki}$. Note that $a_{kiv}$, $b_{kil}$ can be computed from the output of the extended version of Algorithm~\ref{alg:mincost} in Step 1. The first constraint states that the total number of qubits consumed cannot exceed the available qubit resources at each node. The second constraint states that the total number of EPR pairs consumed cannot exceed the link capacity at each edge. The third constraint states that at most one path is selected for each flow.
\color{black}

It is worth noting that our ILP belongs to a generic class of combinatorial optimization problem called \textit{packing integer program}, to which no constant-factor approximation solution is reported in the literature~\cite{doi:10.1137/S0097539799356265}. We emphasize that our ILP is not a Knapsack problem as the number of constraints is not constant, but scales as the problem size. It is exactly due to the large number of constraints that makes our ILP challenging to solve, compared to the Knapsack problem where there exist a palette of pseudo-polynomial-time algorithms.

Driven by the above observation, we formulate an LP by relaxing the above ILP and introducing a discount coefficient $\beta\in(0,1)$ on the first constraint, as detailed below.
{\small
\begin{align*}
    \text{maximize} \quad & \sum_{k\in{\cal K}} \sum_{i\in{\cal R}_k} w_kx_{ki} &  \\
    \text{subject to} \quad 
                    & \sum_{k\in{\cal K}}\sum_{v\in P_{ki}} a_{kiv}x_{ki} \le \beta Q_{v} &  \forall v\in V \\
                    & \sum_{k\in{\cal K}}\sum_{l\in P_{ki}} b_{kil}x_{ki} \le C_{l} &  \forall l\in E \\
                    & \sum_{i\in {\cal R}_k} x_{ki}\le 1 &  \forall k\in {\cal K} \\
                    & x_{ki}\ge 0 &  \forall k\in {\cal K}, i\in {\cal R}_k
\end{align*}}
The constraint $x_{ki}\le 1$ is implicit by the second constraint.

Let $x_{ki}^*$ denote the solution of the relaxed LP. For each flow $k$ where $\sum_{i\in{\cal R}_k} x_{ki}^*>0$, we set $\hat{x}_{ki}=1$ with probability $x^*_{ki}$ such that there is at most one non-zero $\hat{x}_{ki}$ for all $i\in{\cal R}_i$. This can be done by arbitrarily sorting $x_{ki}^*$, then picking a random number $x\in[0,1]$, finally setting $\hat{x}_{ki}=1$ if $x$ falls in the interval {$\left[\sum_{j\in{\cal R}_k, j<i} x_{kj}^*, \sum_{j\in{\cal R}_k, j\leq i} x_{kj}^*\right)$}.
 

Our main efforts are then to prove that $\mathbf{\hat{x}}\triangleq\{\hat{x}_{ki}\}$ approaches the optimal solution of the original ILP.

We first prove an auxiliary lemma. The proof, presented in the appendix, consists of algebraic and stochastic demonstrations.

\begin{lemma}
    $\forall \delta>0$, it holds that 
    $$\Pr\left[\sum_{k\in{\cal K}}\sum_{v\in P_{ki}} \!\!\!a_{kiv}\hat{x}_{ki}>(1+\delta)\beta Q_v\right]<\left[\frac{e^\delta}{(1+\delta)^{(1+\delta)}}\right]^{\beta Q_v}.$$
    \label{lemma:aux3}
\end{lemma}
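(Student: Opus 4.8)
The plan is to recognize Lemma~\ref{lemma:aux3} as a Chernoff-type tail bound for a sum of independent (but not necessarily identically distributed) bounded random variables, and to instantiate the standard multiplicative Chernoff inequality. First I would fix a vertex $v\in V$ and define, for each flow $k\in{\cal K}$, the random variable $Y_k\triangleq\sum_{i\in{\cal R}_k} a_{kiv}\hat{x}_{ki}$, which is the number of qubits consumed at $v$ by the single path randomly chosen for flow $k$ (and $0$ if no path is chosen). By the randomized rounding rule, exactly one of the indicators $\hat{x}_{ki}$ over $i\in{\cal R}_k$ can be $1$, so $Y_k$ takes the value $a_{kiv}$ with probability $x^*_{ki}$ and $0$ with the remaining probability; crucially the $Y_k$ are mutually independent across $k$ because the rounding for each flow uses its own independent uniform draw. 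I would then normalize: since $a_{kiv}\le Q_v$ (a single path cannot use more than the capacity of a node — this should be noted as following from feasibility of each candidate path), set $Z_k\triangleq Y_k/Q_v\in[0,1]$, so $X\triangleq\sum_{k}Z_k$ is a sum of independent $[0,1]$-valued variables.

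Next I would compute the expectation: $\mathbb{E}[Y_k]=\sum_{i\in{\cal R}_k}a_{kiv}x^*_{ki}$, hence $\mathbb{E}\big[\sum_k Y_k\big]=\sum_{k}\sum_{i}a_{kiv}x^*_{ki}\le \beta Q_v$ by the first constraint of the relaxed LP that $x^*$ satisfies. Therefore $\mu\triangleq\mathbb{E}[X]\le \beta$. The event in the lemma, $\sum_k Y_k>(1+\delta)\beta Q_v$, is exactly $X>(1+\delta)\beta\ge (1+\delta)\mu$ when $\mu\le\beta$; here I would be slightly careful and note that the standard Chernoff bound $\Pr[X>(1+\delta)\mu]<\big[e^\delta/(1+\delta)^{1+\delta}\big]^{\mu}$ together with monotonicity of the event in the threshold gives $\Pr[X>(1+\delta)\beta]\le\Pr[X>(1+\delta)\mu]<\big[e^\delta/(1+\delta)^{1+\delta}\big]^{\mu}$, and since the base $e^\delta/(1+\delta)^{1+\delta}<1$ for $\delta>0$, raising to the larger exponent $\beta\ge\mu$ only decreases the bound, yielding $\big[e^\delta/(1+\delta)^{1+\delta}\big]^{\beta Q_v}$ after undoing the normalization (the exponent $\mu$ becomes $\beta Q_v$ once we track the $1/Q_v$ scaling, i.e. apply Chernoff to $X=\sum_k Y_k$ directly with its mean bounded by $\beta Q_v$ and summands bounded by $Q_v$). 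I would present it in the second, cleaner form: apply the $[0,1]$-scaled Chernoff bound to $\sum_k Y_k$ with upper-bound-on-mean $\beta Q_v$.

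The main obstacle, and the only real subtlety, is justifying that one may use $\beta Q_v$ in place of the true mean in the exponent — i.e. the one-sided Chernoff bound is monotone in the mean upper bound. This is standard but worth stating explicitly: the function $g(\mu)\triangleq\mu\ln\big(e^\delta/(1+\delta)^{1+\delta}\big)$ is decreasing in $\mu$ because its coefficient $\ln\big(e^\delta/(1+\delta)^{1+\delta}\big)=\delta-(1+\delta)\ln(1+\delta)<0$ for all $\delta>0$, so replacing the actual mean $\mu\le\beta Q_v$ by the larger value $\beta Q_v$ weakens (enlarges the right-hand side in the correct direction: actually makes the bound smaller, hence still valid) — I would phrase this carefully to avoid a sign slip. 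A secondary point to address is that the summands $a_{kiv}\hat{x}_{ki}$ within a fixed $k$ are not independent (at most one is nonzero), which is why I group them into $Y_k$ first and only invoke independence across flows $k$; this is where the structure of the rounding (independent uniform draw per flow) is used. Everything else is routine: invoke the textbook multiplicative Chernoff bound for independent bounded variables, plug in the LP feasibility of $x^*$ for the mean, and conclude. The bound is stated for every $\delta>0$, matching the hypothesis.
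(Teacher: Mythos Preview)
Your overall plan matches the paper's: group the qubit consumption per flow into independent random variables $Y_k$, bound the moment-generating function, and apply Markov's inequality. However, the step you correctly flag as the crux---replacing the true mean $\mu$ by the upper bound $\beta Q_v$ in the exponent---is argued in the wrong direction. You derive $\Pr[X>(1+\delta)\beta Q_v]\le\Pr[X>(1+\delta)\mu]<B^{\mu}$ with $B\triangleq e^\delta/(1+\delta)^{1+\delta}<1$, and then note that $B^{\beta Q_v}\le B^{\mu}$ since the base is below $1$ and the exponent is larger. But that inequality points the wrong way: from $\Pr[\cdot]<B^{\mu}$ and $B^{\beta Q_v}\le B^{\mu}$ you \emph{cannot} infer $\Pr[\cdot]<B^{\beta Q_v}$; you have only made the target smaller. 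Your parenthetical ``actually makes the bound smaller, hence still valid'' is precisely the sign slip you were worried about.

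The correct fix---and this is what the paper does---is to insert the mean upper bound \emph{inside} the MGF estimate rather than at the end. After establishing $\mathbb{E}\big[e^{\gamma\sum_k Y_k}\big]\le\exp\big((e^\gamma-1)\sum_k\mathbb{E}[Y_k]\big)$, one uses the LP constraint $\sum_k\mathbb{E}[Y_k]\le\beta Q_v$ together with $e^\gamma-1>0$ to obtain $\mathbb{E}\big[e^{\gamma\sum_k Y_k}\big]\le\exp\big((e^\gamma-1)\beta Q_v\big)$; Markov with $\gamma=\ln(1+\delta)$ then delivers the exponent $\beta Q_v$ directly. A related issue: your proposed normalization $Z_k=Y_k/Q_v$ would give a Chernoff exponent equal to the (upper bound on the) mean of $\sum_k Z_k$, i.e.\ at most $\beta$, not $\beta Q_v$. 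The stated exponent $\beta Q_v$ requires each $Y_k$ (equivalently each $a_{kiv}$) to already lie in $[0,1]$, which is exactly what the paper's MGF step $(1+\delta)^{a_{kiv}}-1\le a_{kiv}\delta$ implicitly uses; normalizing by $Q_v$ cannot recover that exponent.
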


Armed with Lemma~\ref{lemma:aux3}, we are able to derive our main result.

\begin{theorem}
    $\forall \epsilon>0$, let $\beta=1-\epsilon$, it holds that $\mathbf{\hat{x}}$ is a $2\epsilon-$optimal solution of the original ILP with prob. $\ge\frac{1}{3}$ if 
    \begin{equation}
        Q_v\ge \frac{\ln(3|V|)}{(1-\epsilon)\epsilon^2} \ \forall v\in V \ \text{and} \ \epsilon^2(1-\epsilon)\min_{k\in{\cal K}}w_{k}\ge \ln3. 
        \label{eq:cond-mult-flow}
    \end{equation}
    \label{theorem:mult-flow}
\end{theorem}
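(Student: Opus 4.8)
The argument is a discounted randomized rounding in the style of packing–integer–program approximations, and it splits into one deterministic step and two concentration steps. The deterministic step relates the relaxed LP (with discount $\beta=1-\epsilon$) to the ILP: scaling any optimal integral solution $\mathbf{x}^{\star}$ by $\beta$ produces a point feasible for the relaxed LP, since the node inequalities become $\sum a_{kiv}(\beta x^{\star}_{ki})\le\beta Q_v$, the link inequalities only tighten ($\beta C_l\le C_l$), and $\sum_i\beta x^{\star}_{ki}\le\beta\le1$. Hence the LP optimum is at least $(1-\epsilon)\,\mathrm{OPT}$, writing $\mathrm{OPT}$ for the optimal value of the ILP. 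Because the rounding chooses, independently for each flow $k$, at most one index with $\Pr[\hat x_{ki}=1]=x^{*}_{ki}$, the expected objective of $\hat{\mathbf x}$ equals the LP optimal value, so $\mathbb{E}\!\big[\sum_{k,i}w_k\hat x_{ki}\big]\ge(1-\epsilon)\,\mathrm{OPT}$. It remains to show (a) the rounded loads respect every capacity constraint (up to the mild multiplicative slack permitted by the notion of $2\epsilon$-optimality, in the same spirit as Definition~\ref{def:appro}), and (b) the objective does not drop far below its mean.

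For (a), fix a node $v$ and invoke Lemma~\ref{lemma:aux3} with deviation parameter $\delta=\Theta(\epsilon)$ chosen so that $(1+\delta)\beta Q_v$ is exactly the admissible load at $v$; the lemma bounds the overflow probability by $\big[e^{\delta}/(1+\delta)^{1+\delta}\big]^{\beta Q_v}$, and the elementary estimate $e^{\delta}/(1+\delta)^{1+\delta}\le e^{-c\delta^{2}}$ (for $\delta\in(0,1]$ and a constant $c>0$) turns this into $e^{-c\delta^{2}(1-\epsilon)Q_v}$. Substituting the hypothesis $Q_v\ge\ln(3|V|)/\big((1-\epsilon)\epsilon^{2}\big)$ makes this at most $1/(3|V|)$, and a union bound over all vertices — treating the link constraints the same way, since they carry no extra rounding blow-up beyond the identical Chernoff estimate — shows that $\hat{\mathbf x}$ satisfies all capacity constraints with probability at least $2/3$.

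For (b), write the objective as $W=\sum_{k}w_kY_k$ with $Y_k=\sum_i\hat x_{ki}\in\{0,1\}$ independent across flows. If no flow is routable the theorem is vacuous; otherwise $\mathrm{OPT}\ge\min_k w_k$, hence $\mathbb{E}[W]\ge(1-\epsilon)\min_k w_k$. Applying the multiplicative Chernoff lower-tail bound to $W$ yields $\Pr\!\big[W<(1-\epsilon)\mathbb{E}[W]\big]\le\exp\!\big(-\Theta(\epsilon^{2})\,\mathbb{E}[W]/\text{(weight scale)}\big)$, and the hypothesis $\epsilon^{2}(1-\epsilon)\min_k w_k\ge\ln3$ drives this probability below $1/3$. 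On the complementary event, $W\ge(1-\epsilon)\mathbb{E}[W]\ge(1-\epsilon)^{2}\,\mathrm{OPT}\ge(1-2\epsilon)\,\mathrm{OPT}$, using $(1-\epsilon)^{2}\ge1-2\epsilon$; this is exactly where the factor $2\epsilon$ of the conclusion comes from (one $\epsilon$ from the LP discount, one from the rounding).

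Finally, combine: the "feasible" event from (a) and the "$W\ge(1-2\epsilon)\mathrm{OPT}$" event from (b) each have probability at least $2/3$, so their intersection has probability at least $2/3+2/3-1=1/3$, which is the assertion. The step I expect to be the real obstacle is (b): since $W$ is a weighted sum of heterogeneous indicators, a careless Chernoff estimate degrades with $\max_k w_k$, so the lower-tail bound must be arranged so that only $\min_k w_k$ — together with the non-triviality bound $\mathrm{OPT}\ge\min_k w_k$ — enters; a secondary bookkeeping concern is to allocate the $2\epsilon$ budget between objective loss and capacity overflow so that the constants emerging from Lemma~\ref{lemma:aux3} line up with the stated threshold on $Q_v$.
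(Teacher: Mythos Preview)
Your plan is correct and matches the paper's proof almost step for step: discount the LP by $\beta=1-\epsilon$ so that $\mathbb{E}[\mathbf{w}\hat{\mathbf{x}}]\ge(1-\epsilon)\mathrm{OPT}$; invoke Lemma~\ref{lemma:aux3} with $\delta$ of order $\epsilon$ and a union bound over $V$ for feasibility; apply a lower-tail Chernoff to the objective; then combine via inclusion--exclusion to get probability $\ge 1/3$. Two small remarks: (i) the paper simply takes $\delta=\epsilon$ and uses $(1+\epsilon)(1-\epsilon)<1$, so the node constraints are satisfied \emph{exactly} with high probability --- your parenthetical about ``mild multiplicative slack'' in the spirit of Definition~\ref{def:appro} is unnecessary and slightly misleading; (ii) the paper's proof silently ignores the link constraints $C_l$ (indeed the hypothesis~\eqref{eq:cond-mult-flow} places no lower bound on $C_l$), whereas you try to fold them into the same Chernoff step --- be aware that without a discount on those constraints that step does not go through as written, so on this point you are flagging a genuine gap rather than departing from the intended argument.
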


Theorem~\ref{theorem:mult-flow} proves that the number of qubits at each node only needs to scale logarithmically in the network size so as for our randomized algorithm to approach to optimal performance with prob. $\ge 1/3$. The second constraint in \eqref{eq:cond-mult-flow} can be satisfied by scaling each $w_k$. To further obtain a $2\epsilon-$optimum of the original ILP with probability at least $1-\delta$ in the standard algorithmic sense, it suffices to run $\log_3 \delta^{-1}$, i.e., $O(\ln \delta^{-1})$, parallel instances of our randomized algorithm and taking the solution of maximal total weight.

\section{Performance Evaluation}
\label{sec:simu}
\textcolor{black}{We study the scheduling and routing algorithms at the network layer of quantum networks. Routing algorithms leverage the link layer model~\cite{10.1145/3341302.3342070,2019Routing}. As mentioned in research~\cite{shiton,1020231}, entanglement routing can be evaluated through simulated experiments as long as the physical and link layers accurately reflect the physical facts.} 

Therefore, we conduct two sets of simulations to empirically evaluate our entanglement  purification scheduling and routing algorithms so as to get more insights on these two fundamental problems that cannot be fully captured by our analytical demonstrations presented in the paper.

\subsection{Entanglement Purification Scheduling}
Our first set of simulations focus on entanglement purification scheduling investigated in Section~\ref{sec:purif}. For single-hop scheduling, we compare Algorithm~\ref{alg:purif2} with the two purification strategies depicted in Figure~\ref{fig:puri-sch}, with the left strategy following a symmetrical pattern by parallelizing the purification process and the right strategy adopting a serial and recursive pattern. We term them as \textsc{symmetric} and \textsc{pumping} strategies, respectively. Figure~\ref{fig:simu-single-hop} traces the fidelity of purified EPR pair with different number of input EPR pairs. We observe that for \textsc{symmetric} and \textsc{pumping}, the fidelity gain is much less pronounced when the number of input EPR pairs is large. The output fidelity becomes almost constant after 10 input EPR pairs. In contrast, our algorithm is still able to achieve a relatively large fidelity gain. To better understand this phenomenon, we examine eq.~\eqref{eq:purif} and find that the fidelity gain decreases with input fidelity. Therefore, simple recursive purification scheduling strategies can bring significant fidelity gain at the first few rounds but can hardly increase the fidelity afterwards. \textcolor{black}{Figure~\ref{fig:simu-single-hop-PROB} traces the purification success probability of purified EPR pair with different number of input EPR pairs. Similar to the purified fidelity, the purification success probability exhibits a comparable behavior.} Our simulation results thus demonstrate the importance of designing an optimal strategy as Algorithm~\ref{alg:purif2}.

\begin{figure*}[t]
\centering
\subfigure{
\begin{minipage}[t]{0.31\textwidth}
\includegraphics[width=\textwidth]{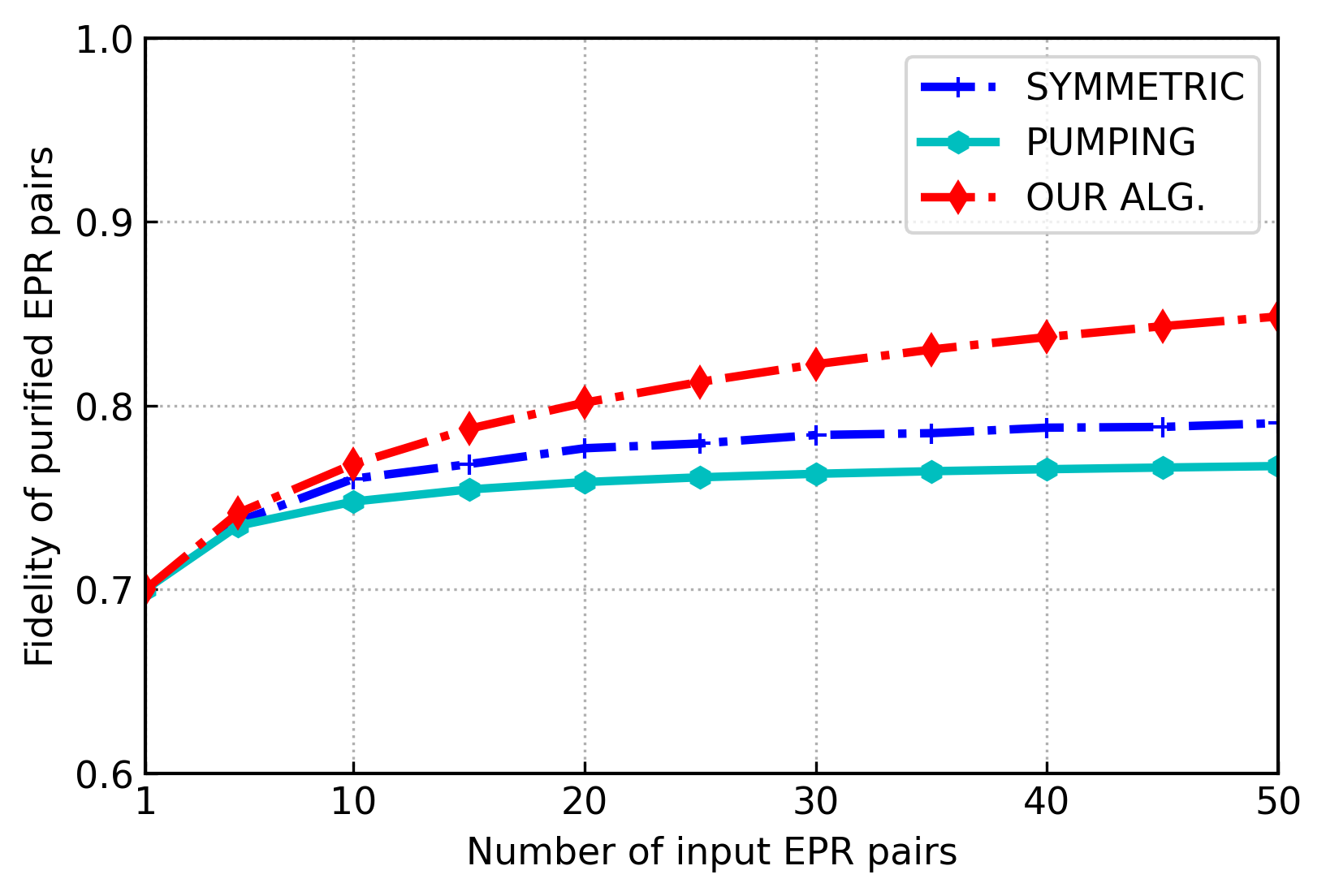}
\end{minipage}}\hspace{0.2cm}
\subfigure{
\begin{minipage}[t]{0.31\textwidth}
\includegraphics[width=\textwidth]{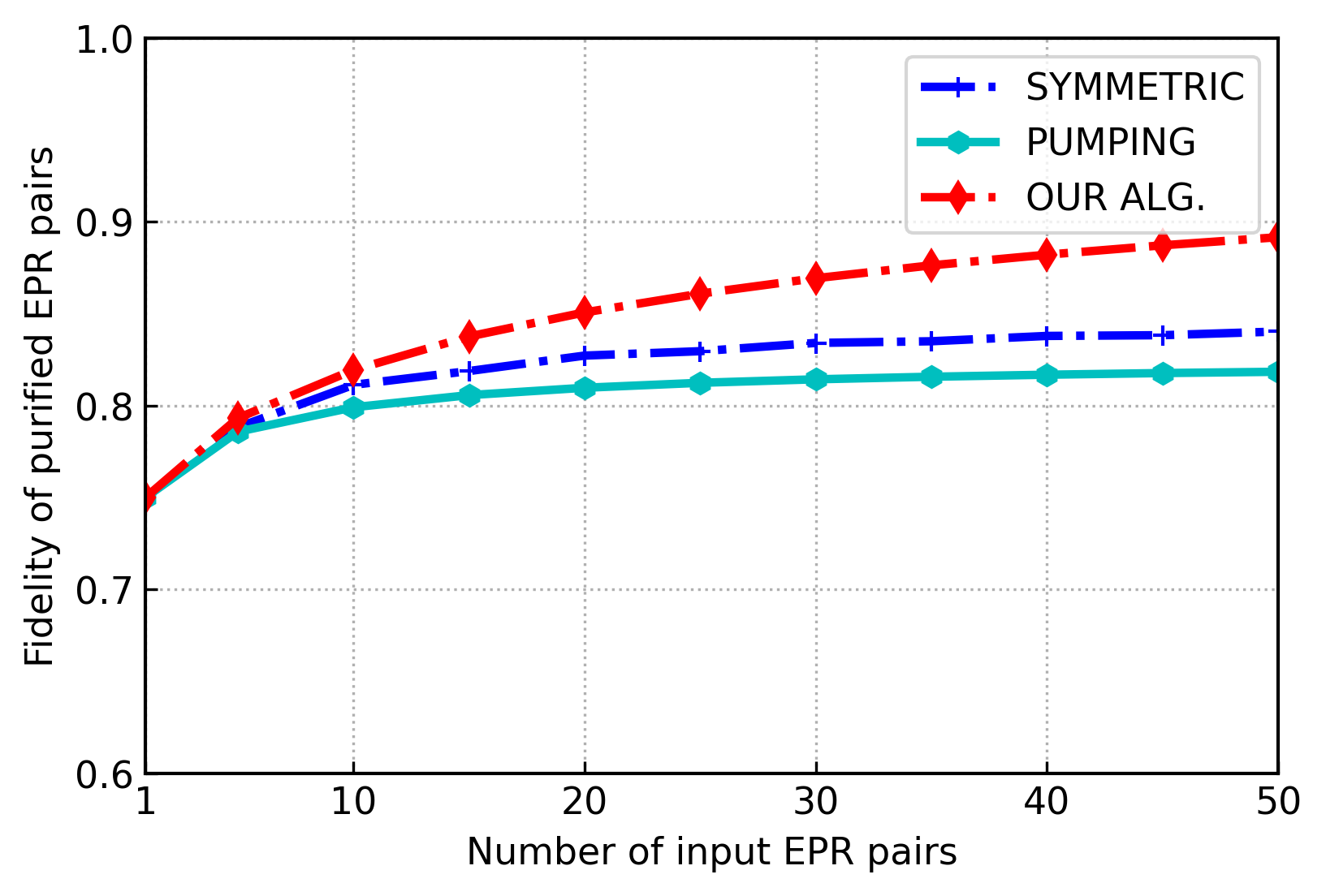}
\end{minipage}}\hspace{0.2cm}
\subfigure{
\begin{minipage}[t]{0.31\textwidth}
\includegraphics[width=\textwidth]{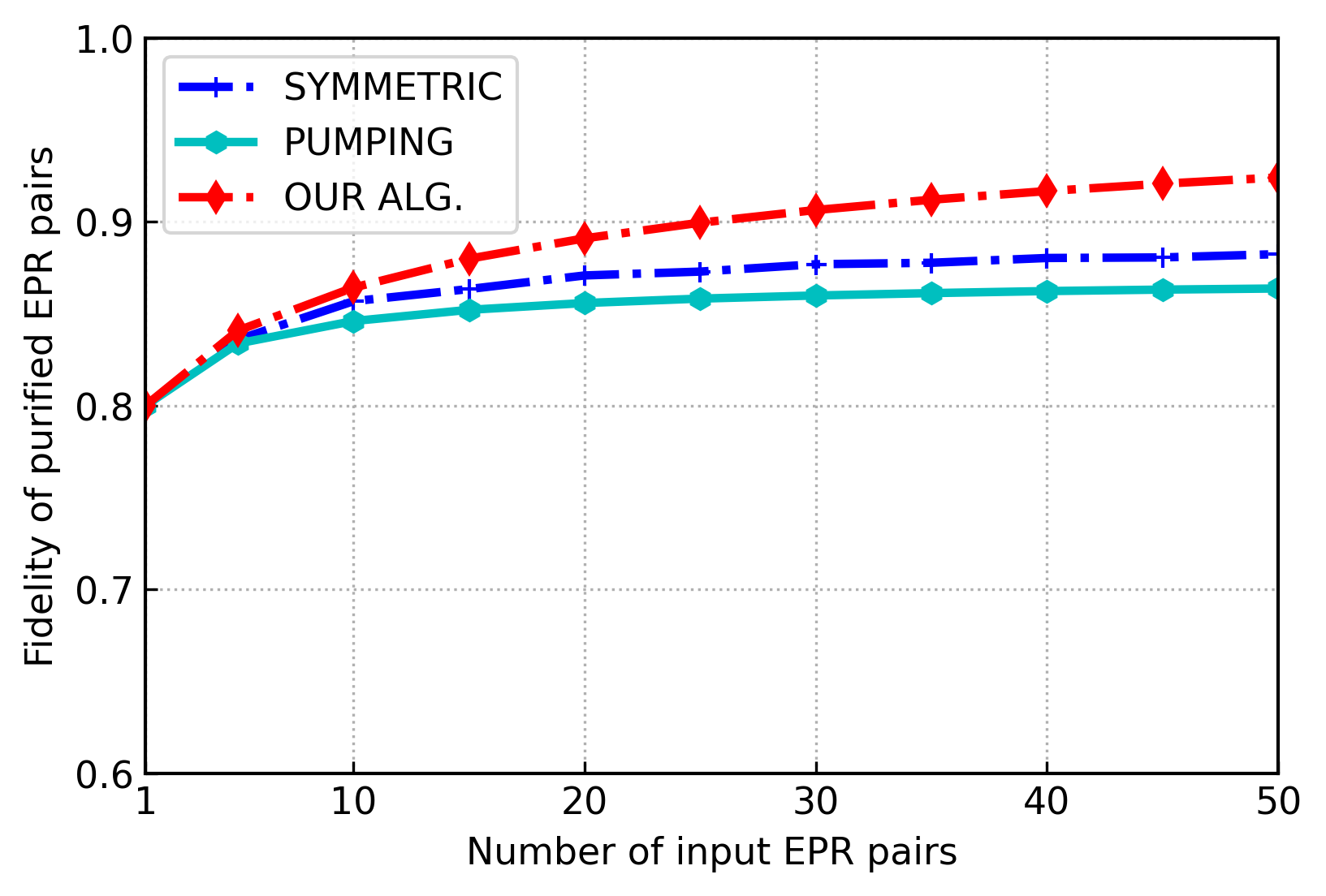}
\end{minipage}}
\vspace{-0.4cm}
\caption{Purified fidelity comparison of single-hop entanglement purification scheduling algorithms: the fidelity of elementary EPR pairs is 0.7, 0.75, and 0.8 in the left, middle, and right figures.} \label{fig:simu-single-hop}

\subfigure{
\begin{minipage}[t]{0.31\textwidth}
\includegraphics[width=\textwidth]{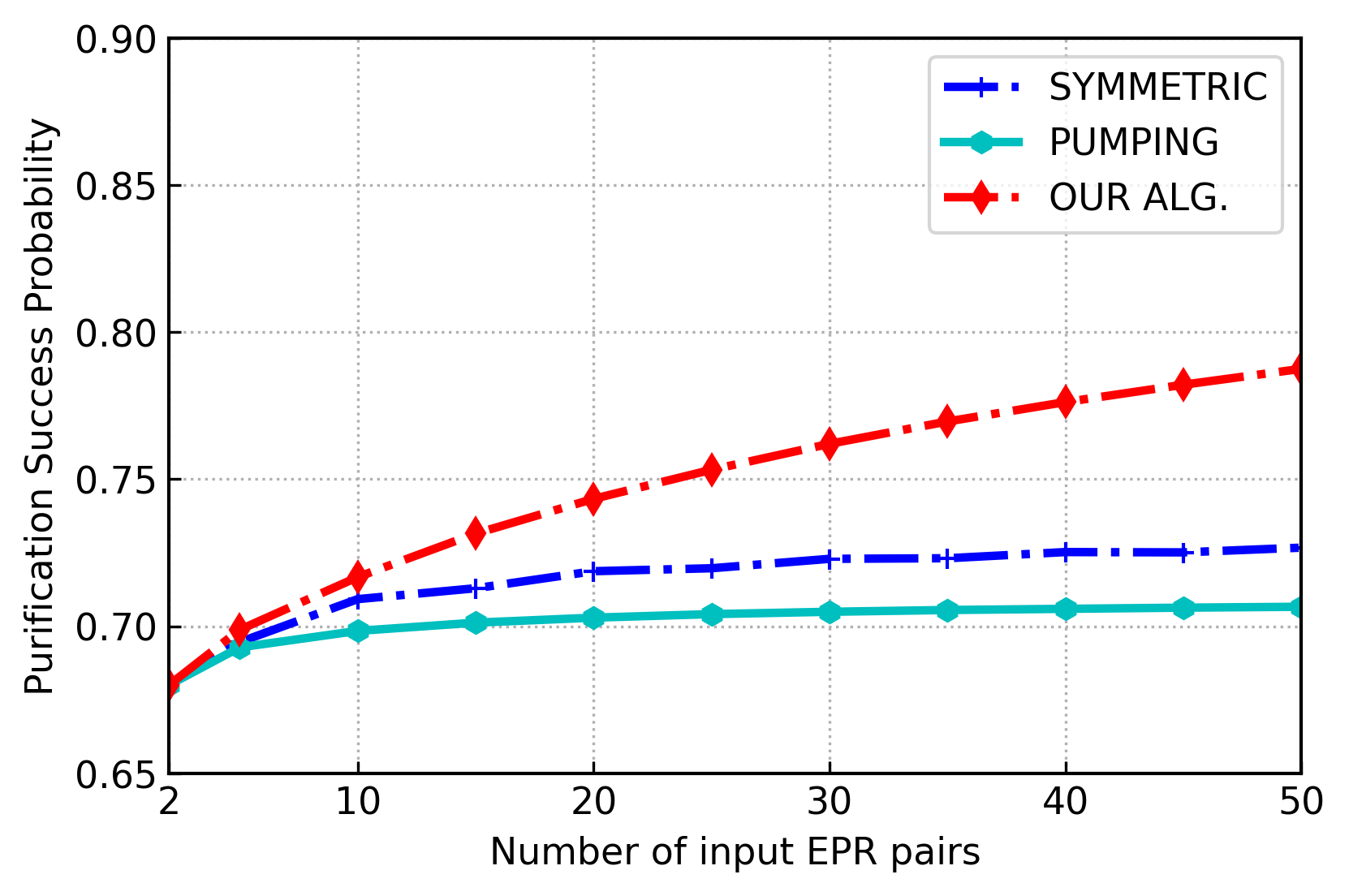}
\end{minipage}}\hspace{0.2cm}
\subfigure{
\begin{minipage}[t]{0.31\textwidth}
\includegraphics[width=\textwidth]{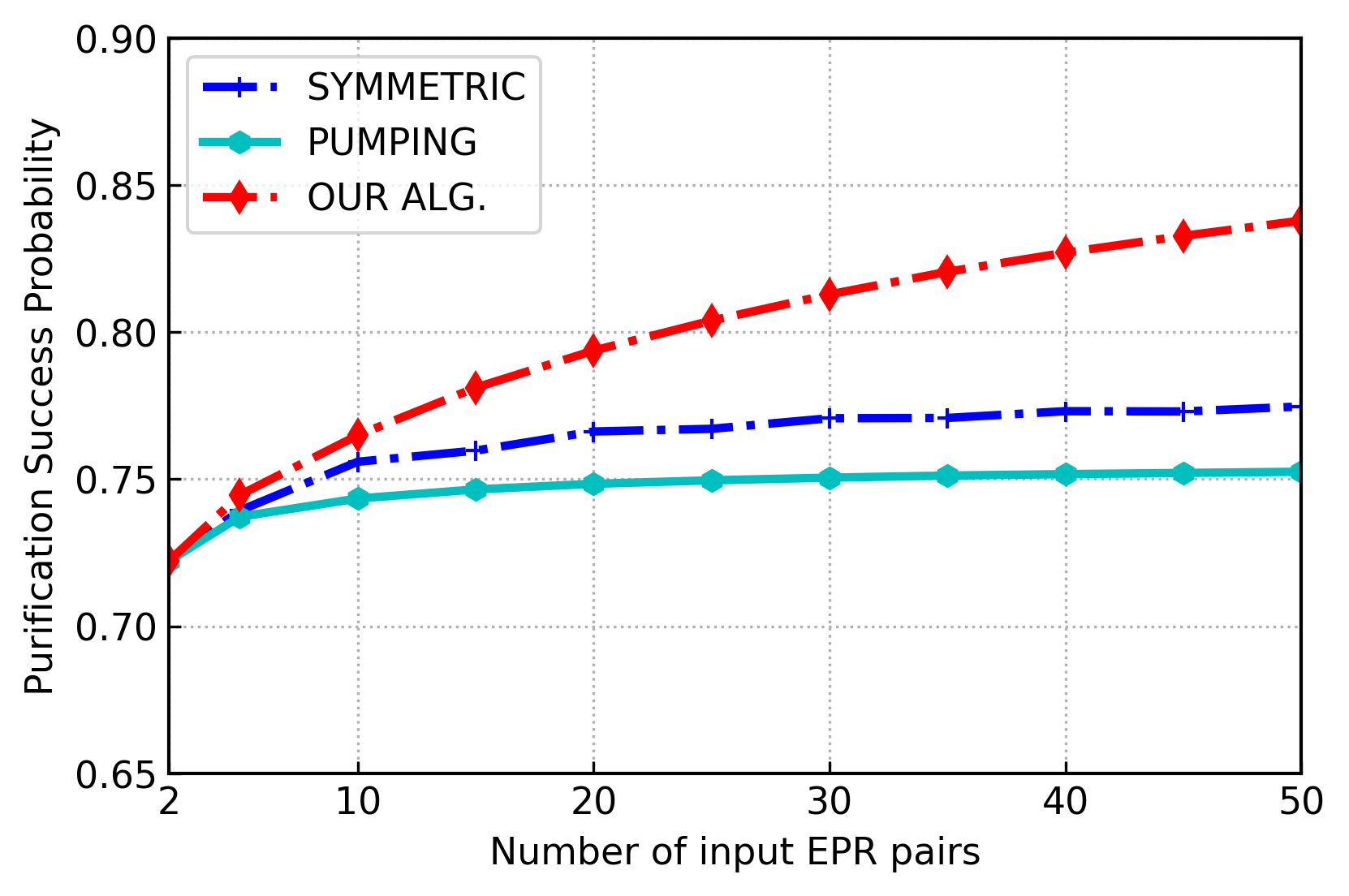}
\end{minipage}}\hspace{0.2cm}
\subfigure{
\begin{minipage}[t]{0.31\textwidth}
\includegraphics[width=\textwidth]{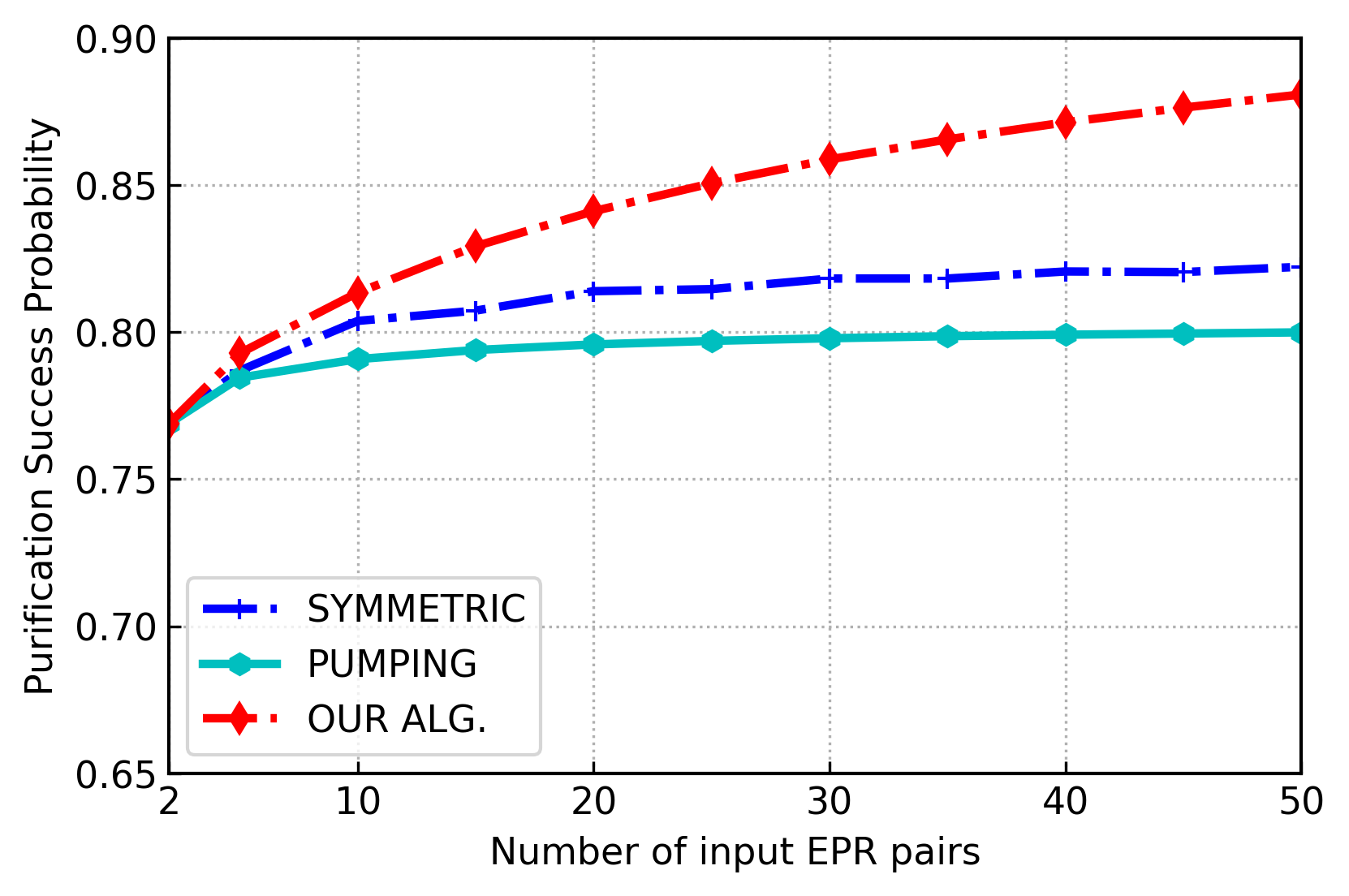}
\end{minipage}}
\vspace{-0.4cm}
\caption{Purification success probability comparison of single-hop entanglement purification scheduling algorithms: the fidelity of elementary EPR pairs is 0.7, 0.75, and 0.8 in the left, middle, and right figures.} \label{fig:simu-single-hop-PROB}

\subfigure{
\begin{minipage}[t]{0.31\textwidth}
\includegraphics[width=\textwidth]{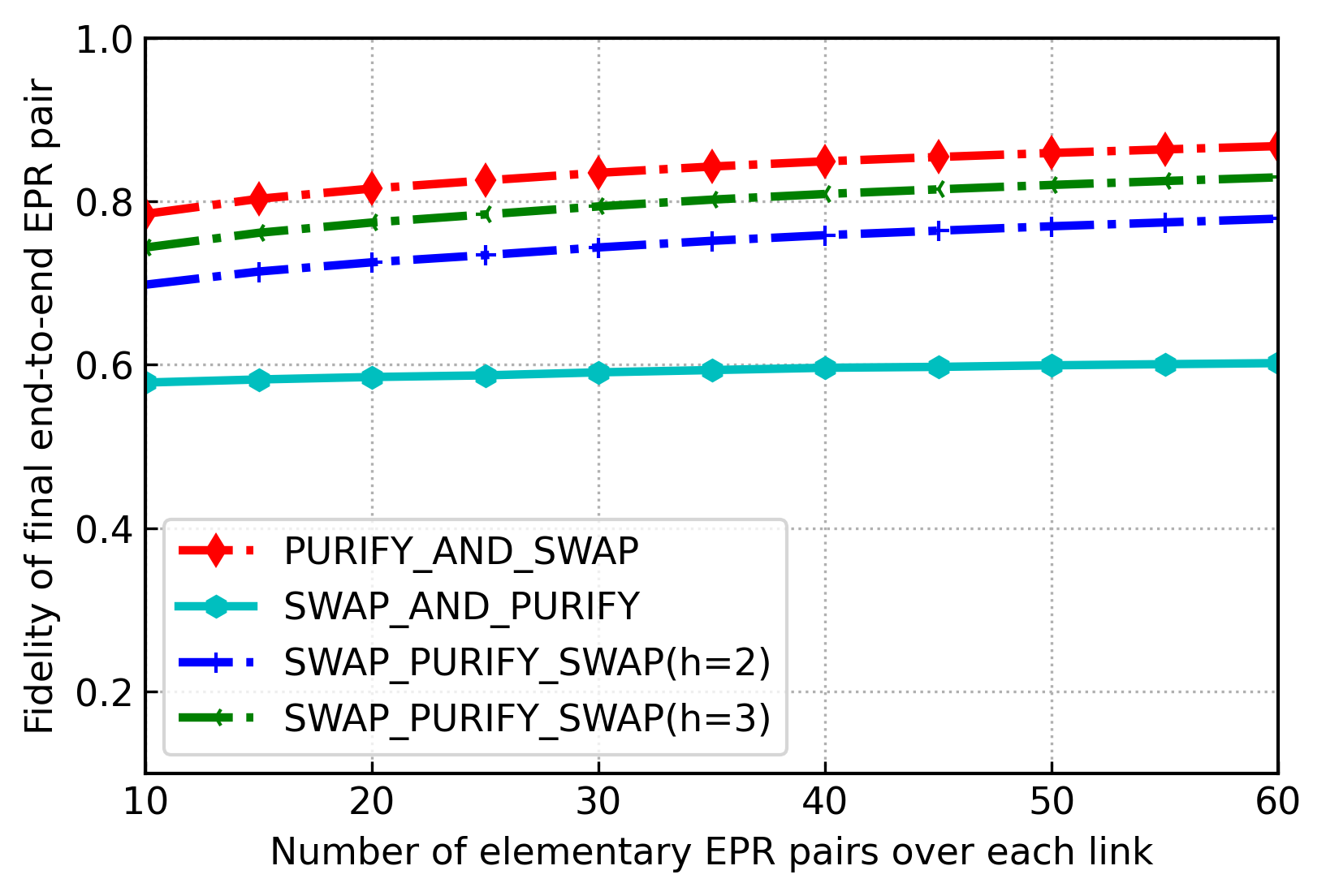}
\end{minipage}}\hspace{0.2cm}
\subfigure{
\begin{minipage}[t]{0.31\textwidth}
\includegraphics[width=\textwidth]{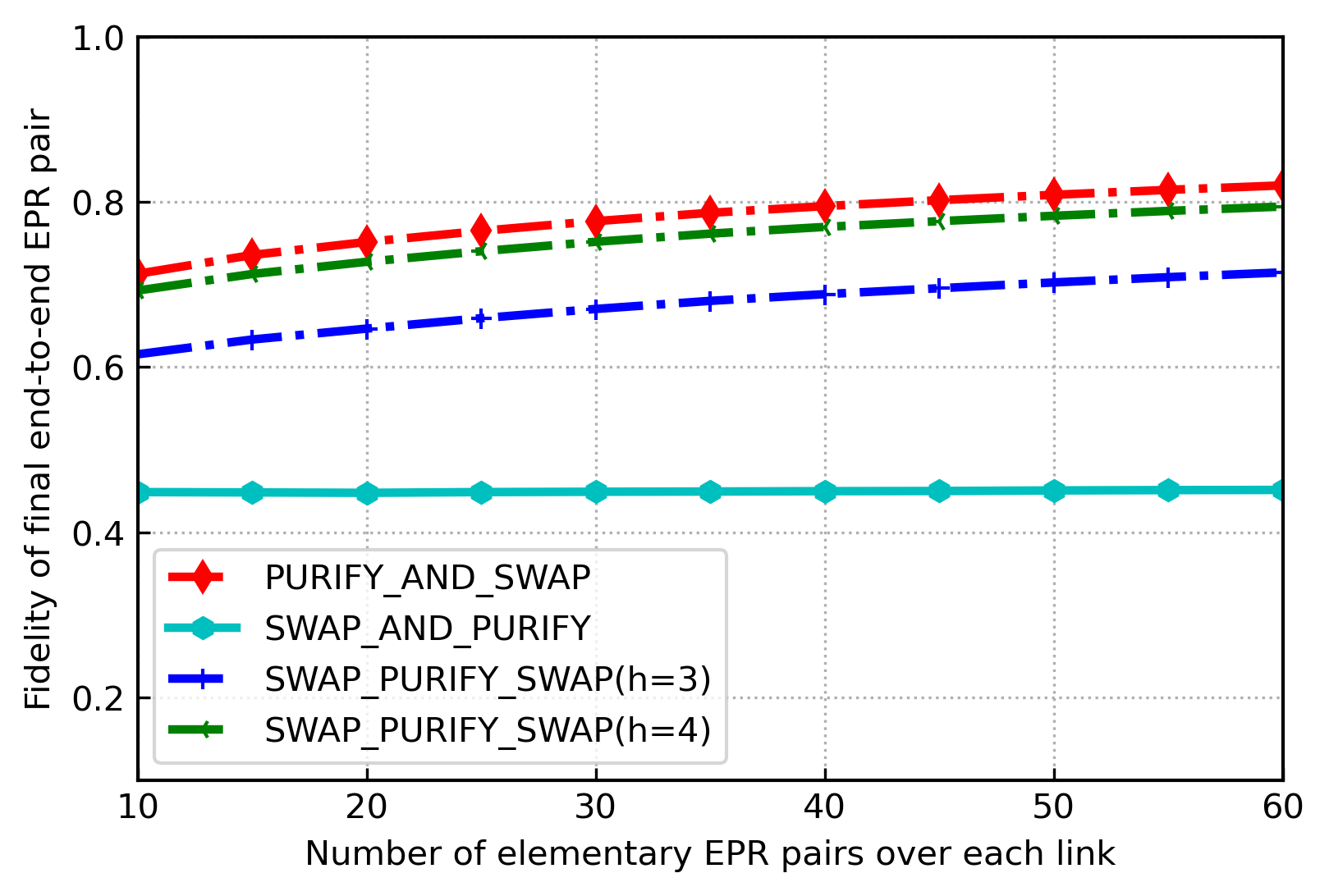}
\end{minipage}}\hspace{0.2cm}
\subfigure{
\begin{minipage}[t]{0.31\textwidth}
\includegraphics[width=\textwidth]{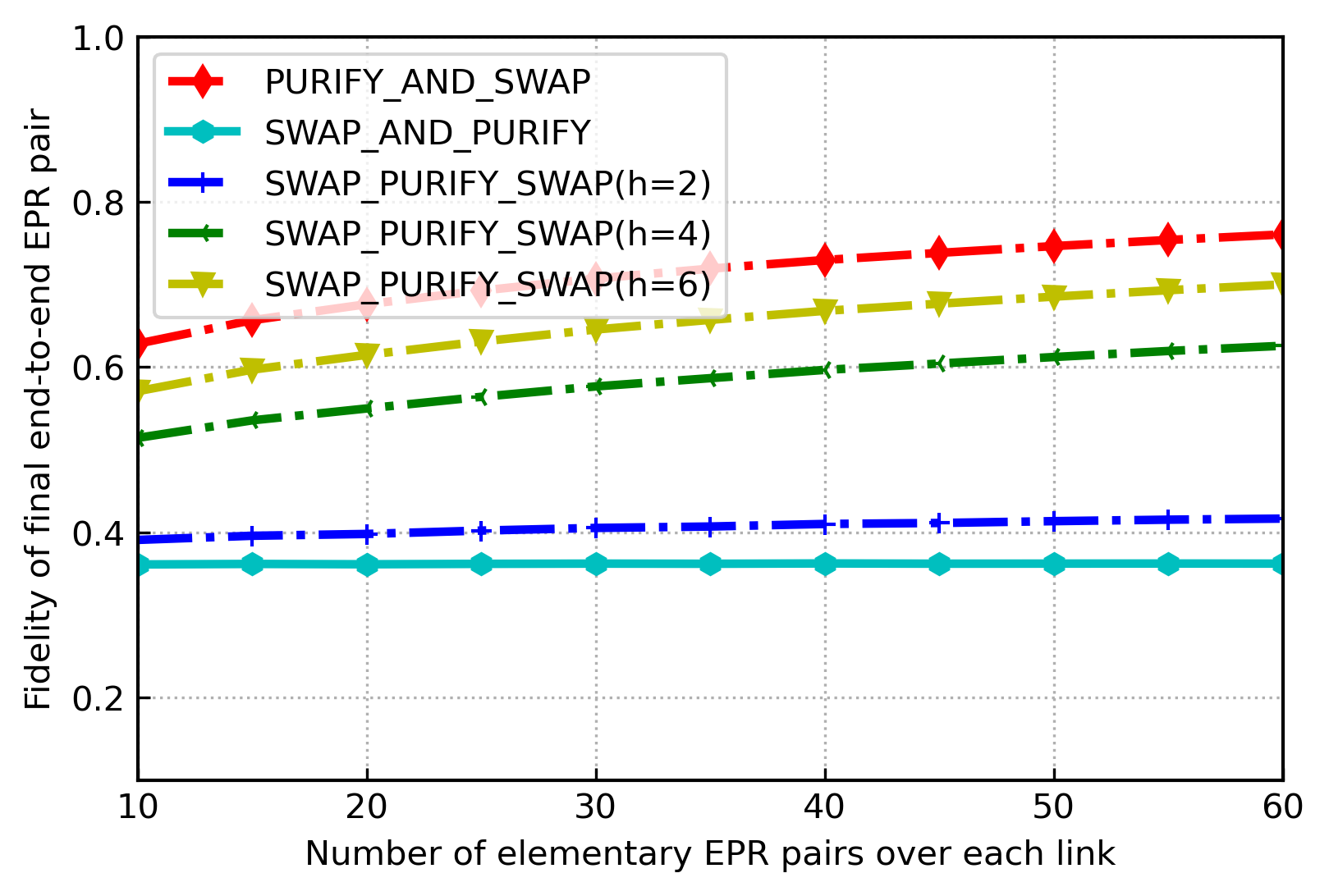}
\end{minipage}}
\vspace{-0.4cm}
\caption{Final fidelity comparison of multi-hop entanglement purification scheduling algorithms: left: $l=6$, middle: $l=9$, right: $l=12$.} 
\label{fig:simu-m-hop}
\end{figure*}

We then proceed to the multi-hop scenario to investigate the interaction between entanglement purification and entanglement swapping. Specifically, we compare \textsc{purify-and-swap} against \textsc{swap-and-purify} and another composite strategy, called \textsc{swap-purify-swap}, described as follows. 
\begin{enumerate}
    \item Divide the entire path into $h$ portions of length $\lfloor l/h\rfloor$ or $\lceil l/h\rceil$ with $l$ denoting the path length;
    \item Perform \textsc{swap-and-purify} within each portion;
    \item Stitch the $h$ purified EPR pairs into a final EPR pair.
\end{enumerate}

Figure~\ref{fig:simu-m-hop} traces the simulation results with different number of hops and elementary EPR pairs per hop, with randomly selected fidelity values in $[0.85,0.99]$. We observe that \textsc{purify-and-swap} always outperforms the other two strategies, which is consistent to our theoretical result. \textsc{swap-and-purify} performs the worst among the simulated strategies. The performance of \textsc{swap-purify-swap} depends on the parameters $h$. When $h$ approaches $l$, the performance of \textsc{swap-purify-swap} approaches that of \textsc{purify-and-swap}, with $h=l$ degenerating to \textsc{purify-and-swap}. This result demonstrates that it is usually better to perform entanglement purification over individual links, i.e., a relatively local operation, before entanglement swapping, which is a more global operation.

\subsection{Fidelity-constrained Entanglement Routing}

Our second set of simulations focus on entanglement routing, by integrating our purification scheduling algorithm.

For experiment setting, we use NetworkX to randomly generate a Waxman random graph with $300$ nodes and link capacity of $10$, and a $5\times5$ grid network with link capacity of $15$. Consistent with the experiment setting in the literature~\cite{fidelityguaranteed}, we configure the fidelity of links to follow a normal distribution within the range of $[0.8, 1]$ and randomly selected source and destination. The objective is to find the min-cost path from a random source to destination while meeting the throughput threshold of $1$ and the specified fidelity threshold. Since fidelity thresholds below $0.8$ are generally considered to lack practical relevance, we evaluate the performance of the simulated algorithms at fidelity thresholds of $0.8$, $0.85$, $0.9$ for Waxman graph (Figure~\ref{fig:single-flow-waxman-cmp}) and grid network (Figure~\ref{fig:single-grid_5x5-cmp}). 
We evaluate our algorithm with two state-of-the-art quantum routing algorithms \textsc{q-path} and \textsc{q-leap}, both designed to find end-to-end paths with fidelity and throughput guarantee~\cite{fidelityguaranteed}. To further analyze path-searching performance, we design another comparison algorithm, called \textsc{q-step}, which employs the same \textsc{pumping} purification strategy as \textsc{q-path} and \textsc{q-leap} while the same path computation strategy as our algorithm. We set the cost to be the number of qubits consumed in the corresponding path.

Figure~\ref{fig:single-flow-waxman-cmp} and~\ref{fig:single-grid_5x5-cmp} compare of our algorithm under different discretization stepsizes $\Delta\phi$ against \textsc{q-step}, \textsc{q-path} and \textsc{q-leap}. We observe from the results that our algorithm achieve the best performance in terms of success probability and path cost, which is coherent to our theoretical quasi-optimality guarantee. Furthermore, the performance gap underscores the importance of carefully optimizing the entanglement path computation by holistically integrating purification scheduling in a cross-layer manner, which is the central focus of our work.

More specifically, we observe from the results that, as the fidelity threshold increases, the success probability of the simulated algorithms decreases, and the path cost increases. This can be explained by the fact that, when the fidelity constraint becomes more stringent, we need more quantum resources to perform additional rounds of entanglement purification, leading to higher cost. If the fidelity threshold is too high, it is impossible to construct feasible end-to-end entanglement, which explains the result that the success probability is less than $1$ for some cases in the simulations. The higher success probability of \textsc{q-step} against \textsc{q-path} and \textsc{q-leap} demonstrates the performance gain brought by the path computation strategy in our algorithm, as \textsc{q-step} employs the same purification scheduling strategy as \textsc{q-path} and \textsc{q-leap}. The higher success probability of our algorithm against \textsc{q-step} further demonstrates the  performance gain brought by the purification scheduling strategy we develop. 

We conclude our simulation by comparing the running time of the simulated algorithms. We observe that our algorithm incurs higher time overhead because our approach seeks a near-optimal solution. However, this overhead can be tuned by the discretization stepsize such that, when $\Delta\phi=0.01$ and $0.02$ in the simulated networks, the running time does not exceed $5$ seconds. The typical lifespan of a qubit extends to tens of seconds. Recent advancements in the field, as highlighted by research \cite{wang2021time}, have shown that the qubit lifetime can surpass one hour. \textcolor{black}{Our algorithm allows for a trade-off between precision and efficiency by adjusting the discretization step size parameter, $\Delta\phi$. As a result, even when executed on demand, our algorithm is suitable for medium-sized quantum networks today and holds the potential for deployment in large-scale quantum networks in the foreseeable future.}

\begin{figure*}
\centering
\subfigure{
\begin{minipage}[t]{0.32\textwidth}
\includegraphics[width=\textwidth]{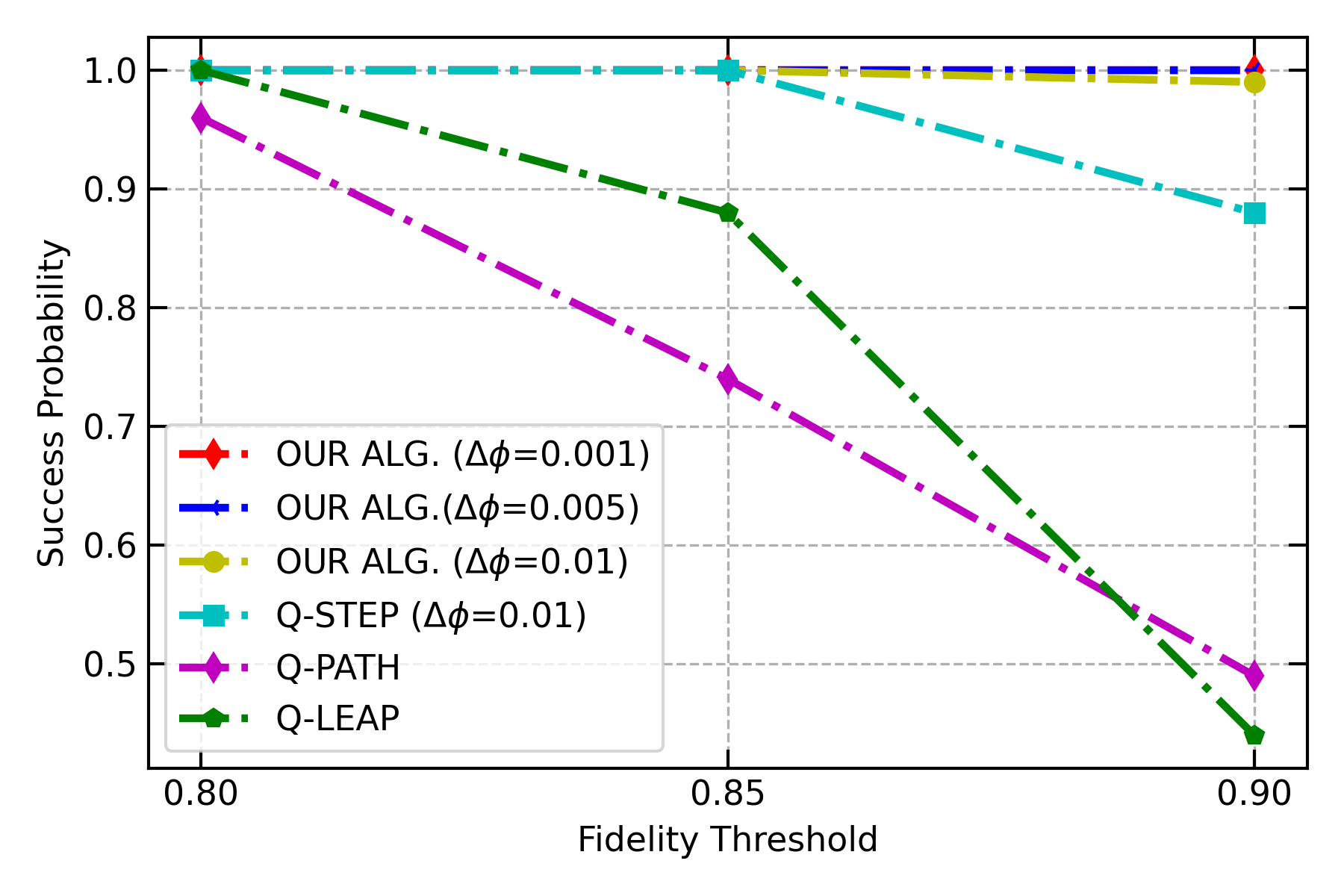}
\label{fig:waxman_probability}
\end{minipage}}
\hfill
\subfigure{
\begin{minipage}[t]{0.32\textwidth}
\includegraphics[width=\textwidth]{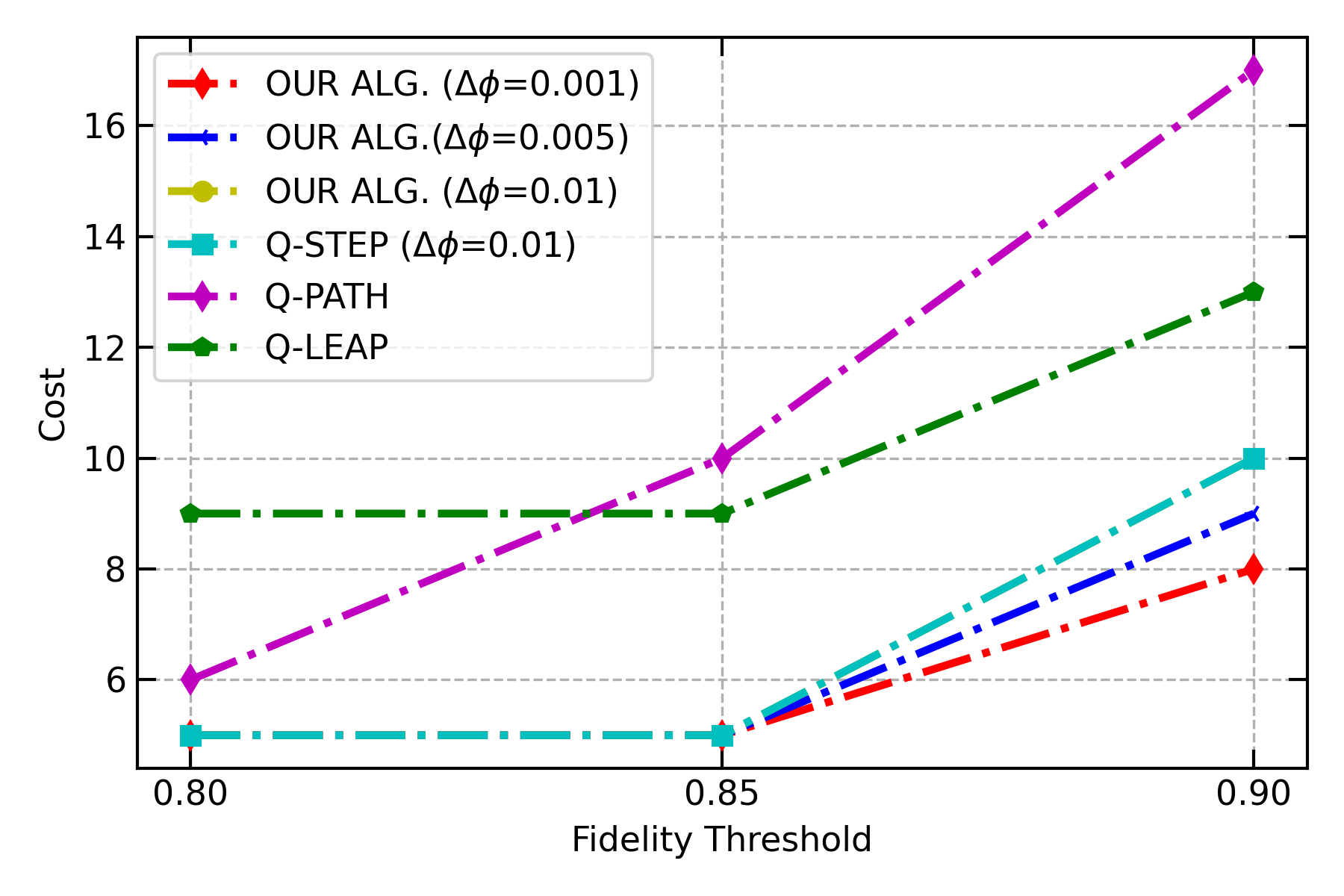}
\label{fig:waxman_cost}
\end{minipage}}
\hfill
\subfigure{
\begin{minipage}[t]{0.32\textwidth}
\includegraphics[width=\textwidth]{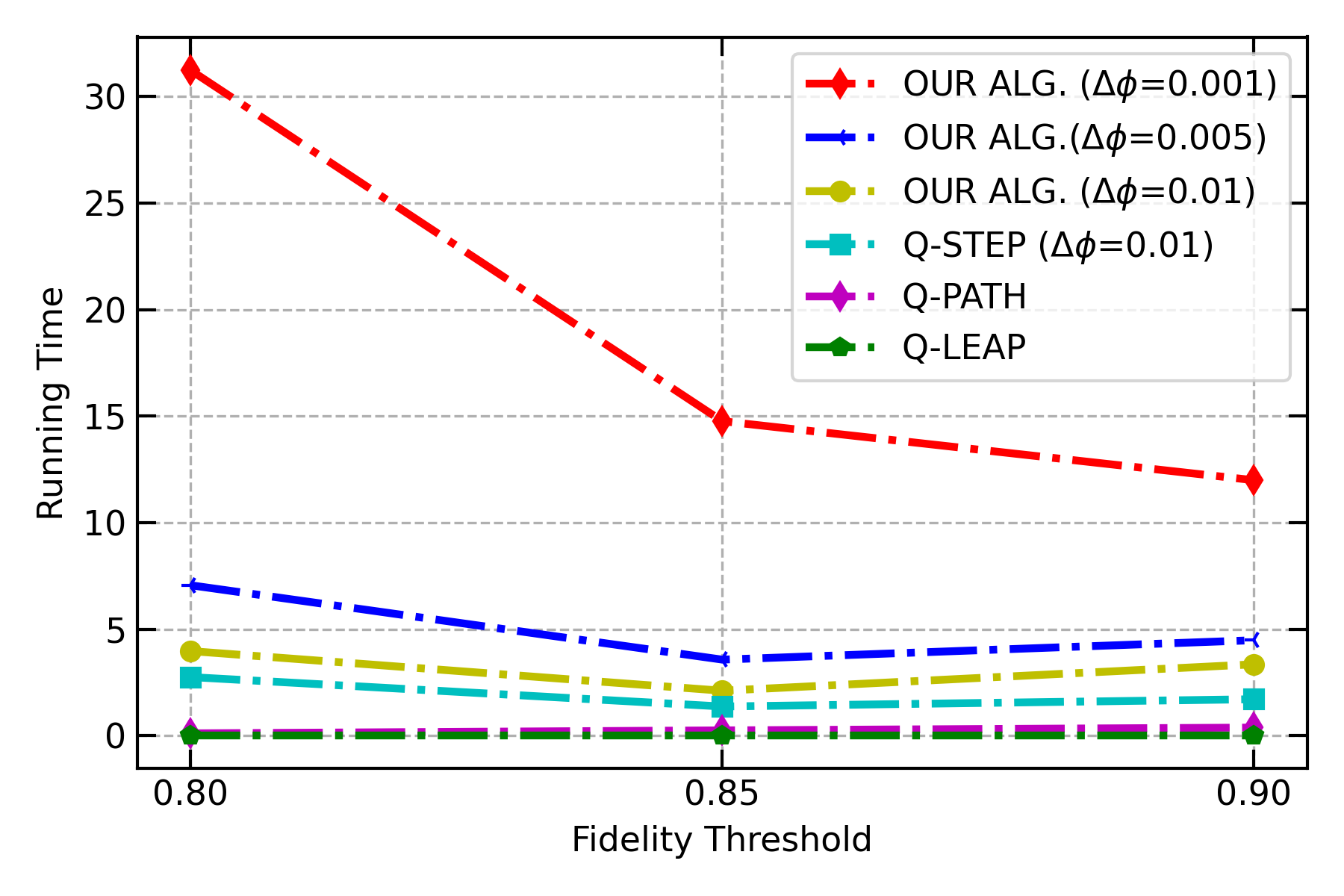}
\label{waxman_time}
\end{minipage}}
\vspace{-0.8cm}
\caption{Performance of our algorithm with \textsc{q-step}, \textsc{q-path} and \textsc{q-leap} (channel capacity = $10$, Waxman topology with $300$ nodes): left: success probability, middle: cost; right: running time} 
\label{fig:single-flow-waxman-cmp}

\subfigure{
\begin{minipage}[t]{0.32\textwidth}
\includegraphics[width=\textwidth]{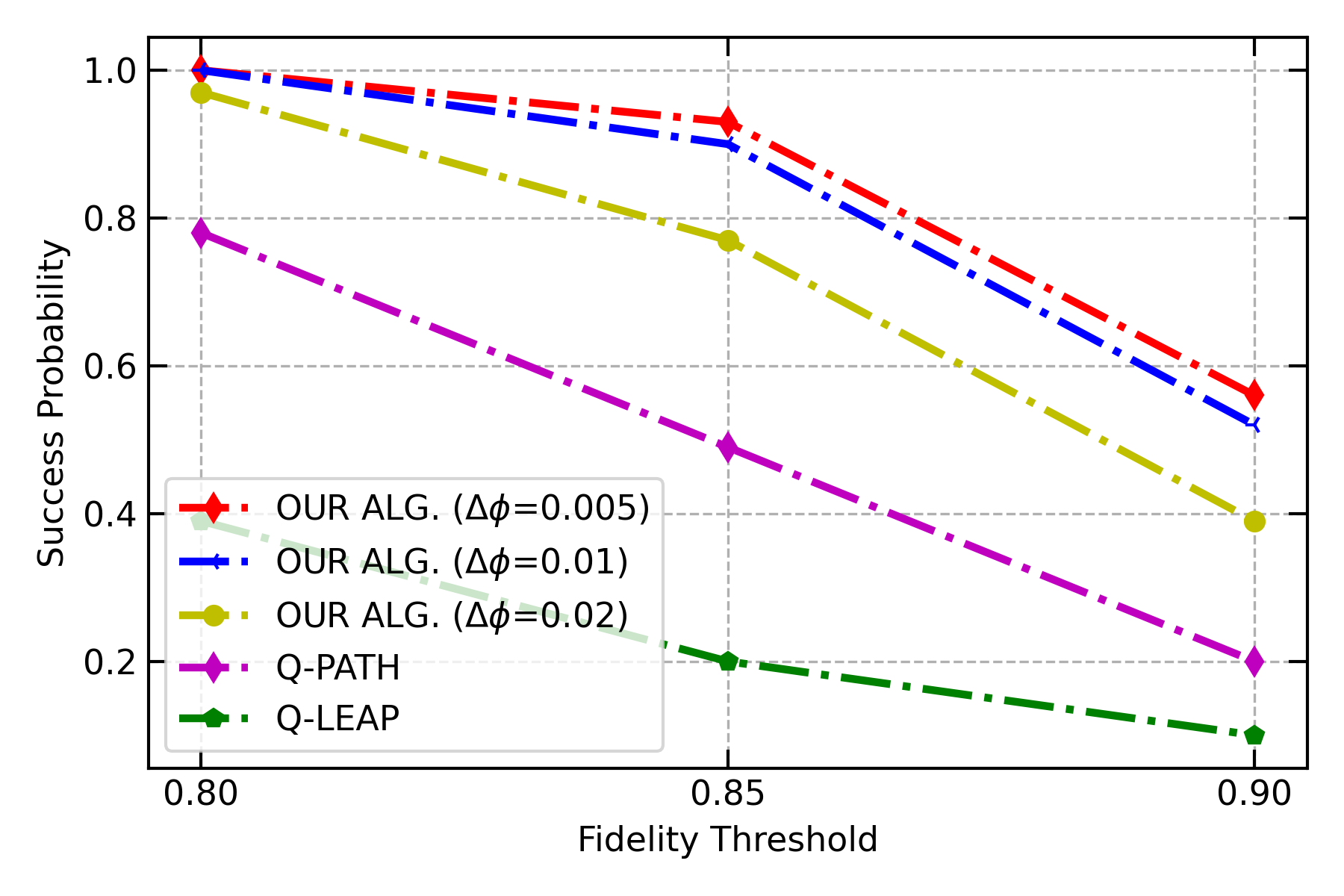}
\label{fig:grid_success}
\end{minipage}}
\hfill
\subfigure{
\begin{minipage}[t]{0.32\textwidth}
\includegraphics[width=\textwidth]{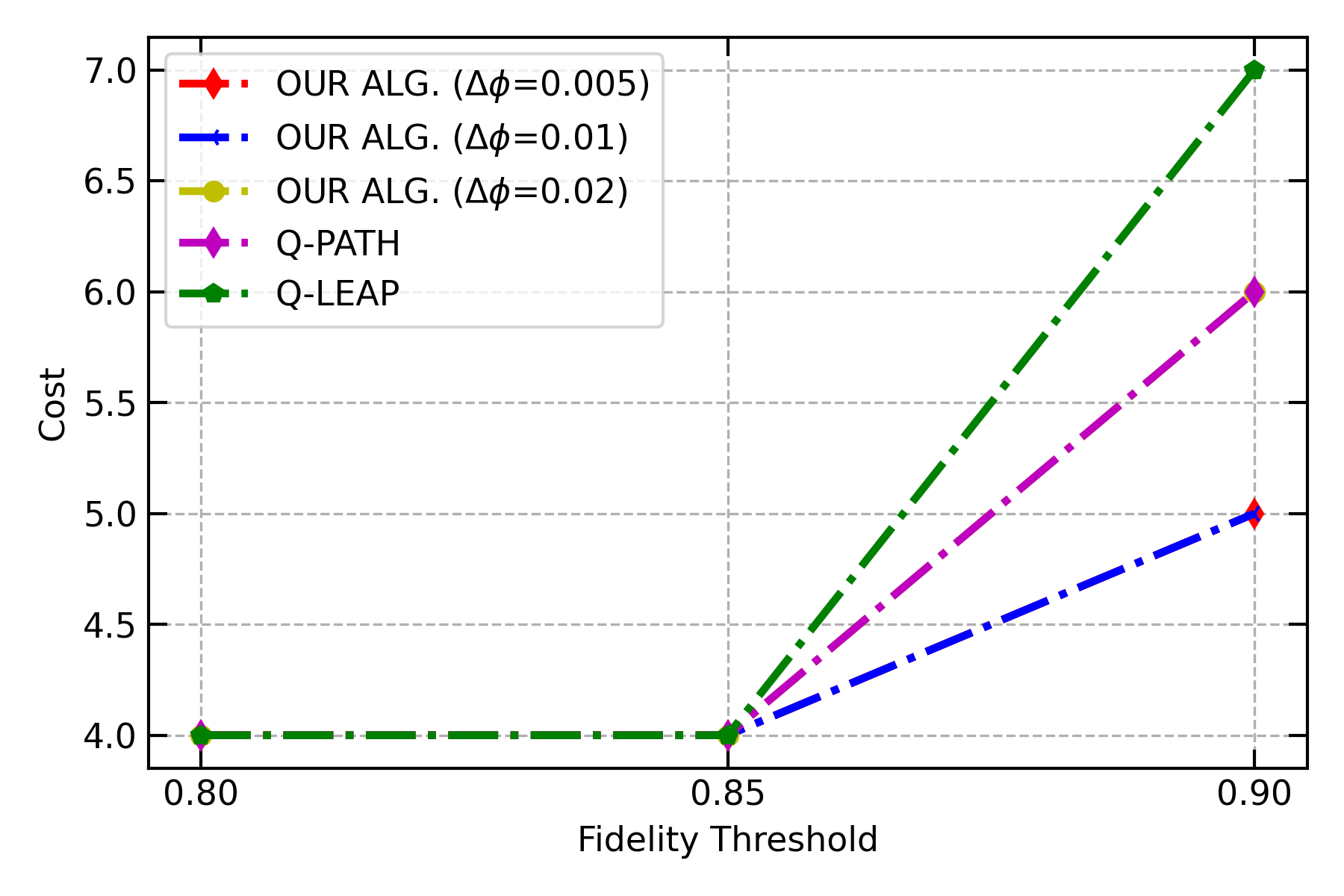}
\label{fig:grid_cost}
\end{minipage}}
\hfill
\subfigure{
\begin{minipage}[t]{0.32\textwidth}
\includegraphics[width=\textwidth]{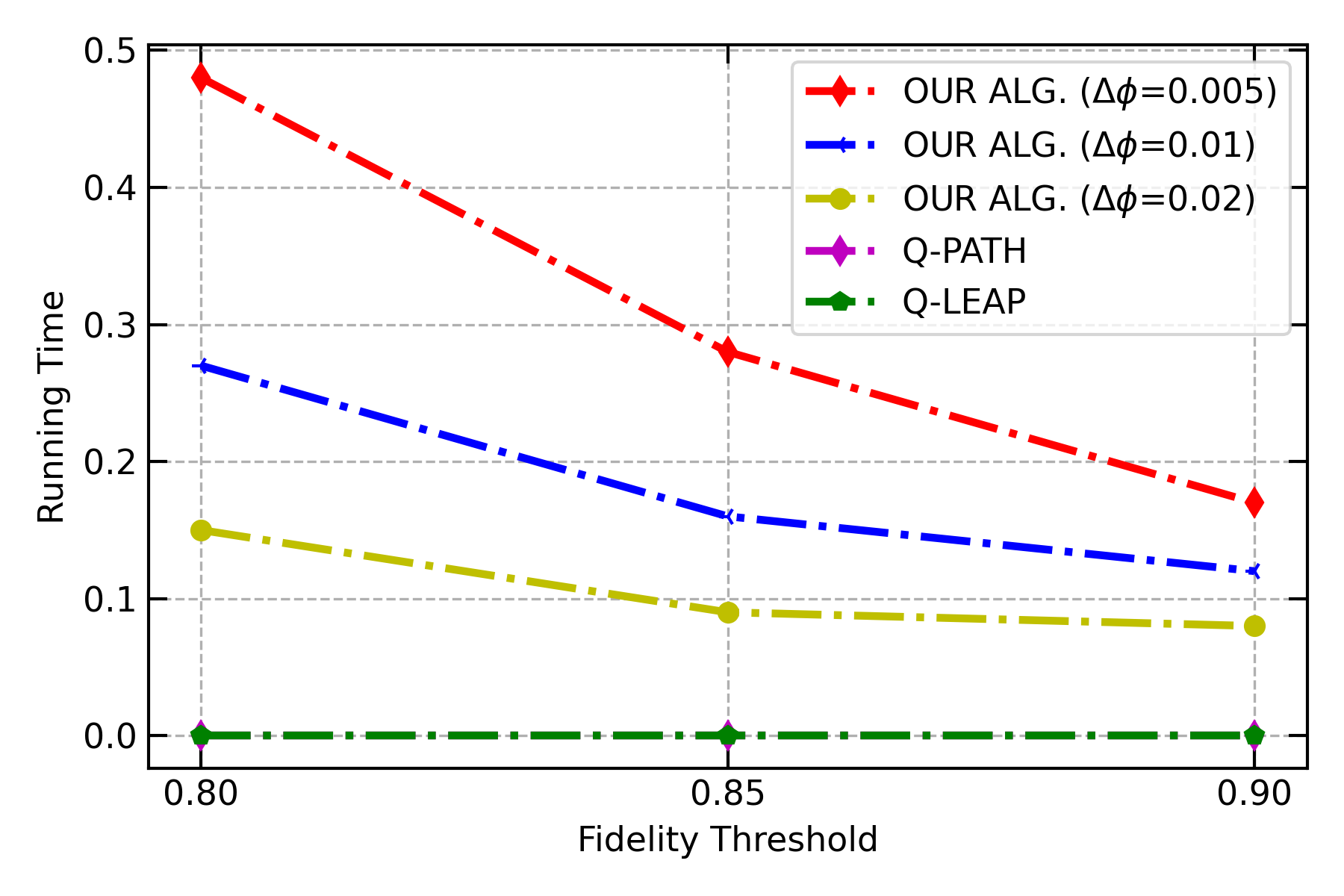}
\label{fig:grid_time}
\end{minipage}}
\vspace{-0.8cm}
\caption{Performance of our algorithm with \textsc{q-step}, \textsc{q-path} and \textsc{q-leap} (channel capacity = $15$, Grid $5\times5$): left: success probability, middle: cost; right: running time} 
\label{fig:single-grid_5x5-cmp}

\end{figure*}

\section{Related Works}
\label{sec:related-work}

Establishing end-to-end entanglements reliably and efficiently is probably one of the most fundamental tasks in quantum networks. Research on entanglement routing was initiated by the physics community, with propositions for specific network topologies such as lines, rings, and grids~\cite{Pirandola8093785,schoute2016shortcuts,9351761}. Their main focus was to develop feasible ways to achieve end-to-end entanglements to support quantum information transmission. However, relying on specific network topologies significantly limits their generality and practicality.

More recently, entanglement routing began to capture research interests from the networking community, resulting in a handful of propositions for generic network topology, e.g., ~\cite{10.1145/3387514.3405853,9488850,10150/633684,9210823}. To further ensure end-to-end fidelity, a few works have integrated entanglement purification into the process of entanglement path optimization, by solving the corresponding constrained routing problems~\cite{9796814,fidelityguaranteed,Victora2020entanglement}, among which~\cite{9796814,fidelityguaranteed} considered problem with network-layer perspective, but they only considered simple \textsc{pumping} purification scheme or enumerated all combinations.

Purification scheduling is a critical yet largely unexplored area, essential for enhancing the fidelity of final purified entanglements. A recent comprehensive survey on entanglement purification and routing~\cite{leone21} highlights the significant challenge in deriving an optimal entanglement purification schedule, labeling it as highly intractable. Due to this complexity, commonly used schemes are typically simple and intuitive algorithms, such as \textsc{pumping}~\cite{PhysRevA.59.169,briegel1998quantum} and \textsc{symmetric}~\cite{PhysRevA.54.1844,PhysRevLett.76.722,PhysRevLett.77.2818}, which lack any guarantee of optimality, as noted in recent studies~\cite{9796814,fidelityguaranteed,Victora2020entanglement,jiachenjetcas,panigrahy2022capacity}. Recent research~\cite{chenjiajsac} has considered optimal entanglement purification scheduling and routing. However, it only solve the optimal purification scheduling in the bit-flip case and does not address purification scheduling in the Werner case. As demonstrated in their paper, purification scheduling in the Werner case is more challenging due to the lack of structural properties present in the bit-flip case. \textcolor{black}{Recent work~\cite{panigrahy2022capacity,jiachenjetcas}, by experiments, discover the throughput of \textsc{purify-and-swap} outperforms \textsc{swap-and-purify}. However, They do not give any closed-from condition for the optimality of \textsc{purify-and-swap} .} This gap in research motivates our work to develop an algorithmic framework for fidelity and throughput-guaranteed entanglement path optimization with $\epsilon$-optimal purification scheduling.  

\section{Conclusion and Perspective}
\label{sec:conclu}
\textcolor{black}{In this paper, we first present an optimal entanglement purification scheduling framework for a single hop. We then analyze the optimality of \textsc{purify-and-swap} swapping strategy in multi-hop scenarios. Finally, we combine these insights to develop an polynomial time algorithmic framework to computing minimum-cost end-to-end entanglement path, subject to $\epsilon-$optimality in fidelity and throughput constraints, focusing on entanglement purification scheduling and entanglement routing. Our main technicalities hinge on the structural properties of the quantum operations, notably entanglement purification and entanglement swapping, combined with algorithmic tools in combinatorial optimization. Our algorithmic frameworks are highly flexible, allowing for a balance between accuracy and computation time through the choice of $\Delta\phi$. Our algorithmic framework can also serve as a key functional blocks for entanglement configuration, enabling high-fidelity quantum information transmission by effectively leveraging limited quantum resources, thus fully harnessing the unmatched potential of quantum networks. The superiority of our algorithm frameworks are also demonstrated by extensive simulations.} 

Looking ahead, we plan to extend our framework by incorporating measurement error to account for imperfect measurements. One direction for future research is to factor in decoherence time, further enhancing the realism of the model. Another avenue is to allow EPR pairs across different entanglement paths to be purified, and to design a joint purification scheduling and path optimization algorithm in this more complex context. 


\appendix

\section{Proofs of theorems and lemmas}

\subsection{Proof of Theorem~\ref{thm:opt-single-hop}}

    Let $T^*$ denote the optimal purification tree, $b^*$ denote the number of leaves in $T^*$, $f^*$ denote the fidelity of its root, $\xi^*$ denote the expected number of purified EPR pairs mapping to its root.  Define $l^*\triangleq(b^*,\hat{f^*},\hat{\xi}^*,T^*)$, where $\hat{\xi}^*\triangleq \left\lceil\frac{\xi^*}{\Delta\xi}\right\rceil\Delta\xi$ and $\hat{f}^*\triangleq \left\lceil\frac{f^*}{\Delta f}\right\rceil\Delta f$ denote the discretized value of $\xi^*$ and $f^*$,respectively. 
    We can prove by induction that $l^*$ is dominated by $l'$ by noticing that, if a pair of purification tree $T_1$ and $T_2$ are dominated by the purification trees  $T_1'$ and $T_2'$ in $\cal L$, then it holds that the purification tree whose children are $T_1$ and $T_2$ is also dominated by the purification tree whose children are $T_1'$ and $T_2'$. Therefore, by induction we can prove 
    \begin{equation}
    b'\le b^*, \hat{f'} \ge \hat{f}^* \ge f^*, \ \hat{\xi}'\ge \hat{\xi}^*\ge \xi^*.
    \label{eq:opt-appr}
    \end{equation}
    
    {Noticing~\eqref{eq:opt-appr}, it suffices to prove $\xi'\ge (1-\epsilon)\xi^*$ and $f'\ge (1-\epsilon)f^*$ by proving $\xi'\ge (1-\epsilon)\hat{\xi}'$ and $f'\ge (1-\epsilon)\hat{f}'$, respectively}. Noticing the process of construction of purification trees from two sub-trees, we first prove the following auxiliary lemma.
    
    \begin{lemma}
        Given a pair of entries $l_i\triangleq (b_i,f_i,\hat{\xi}_i,T_i)$, $i=\{1,2\}$, let $l_3\triangleq(b_3\triangleq b_1+b_2,\hat{f_3},\hat{\xi}_3,T_3)$ denote the entry by merging $T_1$ and $T_2$, i.e., $T_3$ has $T_1$ and $T_2$ as children. Under the condition $\Delta\xi\le \min\{b_1,b_2\}\xi_3\epsilon$ and $\Delta f\le \min\{b_1,b_2\}f_3\epsilon$, the following inequality holds:
        \begin{align*}
            (1-b_3\epsilon)\hat{\xi}_3 \le \xi_3 \le (1+b_3\epsilon)\hat{\xi}_3. \\
            (1-b_3\epsilon)\hat{f}_3 \le f_3 \le (1+b_3\epsilon)\hat{f}_3.
        \end{align*}

        \label{lemma:aux-app}
    \end{lemma}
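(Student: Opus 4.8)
\textbf{Proof proposal for Lemma~\ref{lemma:aux-app}.}

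The plan is to prove the two chains of inequalities separately by tracking how one step of discretization in Algorithm~\ref{alg:purif2} perturbs the quantities $\xi_3$ and $f_3$, and then combining the per-step error with the (inductively available) multiplicative error already present in the children $l_1,l_2$. I will handle the $\xi$-chain first, since it is the more delicate of the two; the $f$-chain is structurally identical but uses the map $P(f_1,f_2)\mapsto \hat f$ instead of the throughput recursion.

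First I would recall the exact recursions used in the algorithm: $\hat{f}=\lceil F(f_1,f_2)/\Delta f\rceil \Delta f$ and $\hat{\xi}_3=\lceil P(f_1,f_2)\min\{\hat\xi_1,\hat\xi_2\}/\Delta\xi\rceil\Delta\xi$, whereas the \emph{true} (undiscretized) quantities satisfy $f_3=F(f_1,f_2)$ and $\xi_3=P(f_1,f_2)\min\{\xi_1,\xi_2\}$ (here $f_1,f_2,\xi_1,\xi_2$ denote the true root fidelities/throughputs of the two subtrees, and $\hat f_i,\hat\xi_i$ their stored discretized versions). For a single ceiling-to-grid operation $x\mapsto \hat x=\lceil x/\Delta\rceil\Delta$ one has $x\le \hat x\le x+\Delta$, hence $\hat x(1-\Delta/x)\le x\le \hat x$, i.e. a one-step multiplicative error of at most $\Delta/x$ on each side. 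Applying this with $\Delta=\Delta\xi$ and $x=\xi_3$, the hypothesis $\Delta\xi\le \min\{b_1,b_2\}\xi_3\epsilon \le b_3\xi_3\epsilon$ already bounds the fresh discretization error by $b_3\epsilon$; similarly $\Delta f\le \min\{b_1,b_2\}f_3\epsilon$ controls the fresh error on $f_3$.

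The main obstacle — and the reason the statement is phrased with the factor $b_3$ rather than a constant — is that the error does \emph{not} come only from the last ceiling: the stored values $\hat\xi_1,\hat\xi_2$ are themselves already corrupted, and I must show these inherited errors, together with the new one, telescope into a clean multiplicative factor $(1\pm b_3\epsilon)$. Concretely, I would argue inductively: assume $(1-b_i\epsilon)\hat\xi_i\le \xi_i\le(1+b_i\epsilon)\hat\xi_i$ for $i=1,2$ with $b_1,b_2<b_3$ (the leaf base case $b=1$ holds since a leaf is stored exactly, $\hat\xi=\xi=1$). Then $\min\{\xi_1,\xi_2\}$ lies within a $(1\pm\max\{b_1,b_2\}\epsilon)$ factor of $\min\{\hat\xi_1,\hat\xi_2\}$, multiplying by $P(f_1,f_2)$ and applying the one-step ceiling bound above gives, on the upper side, $\xi_3 = P\cdot\min\{\xi_1,\xi_2\} \le P\cdot(1+\max\{b_1,b_2\}\epsilon)\min\{\hat\xi_1,\hat\xi_2\} \le (1+\max\{b_1,b_2\}\epsilon)\,\hat\xi_3$, and on the lower side $\xi_3 \ge P\cdot(1-\max\{b_1,b_2\}\epsilon)\min\{\hat\xi_1,\hat\xi_2\} \ge (1-\max\{b_1,b_2\}\epsilon)(1-\Delta\xi/(P\min\{\hat\xi_1,\hat\xi_2\}))\hat\xi_3$. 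The key quantitative step is to check that $(1-\max\{b_1,b_2\}\epsilon)(1-\tfrac{\Delta\xi}{\xi_3})\ge 1-b_3\epsilon$: since $\Delta\xi/\xi_3\le \min\{b_1,b_2\}\epsilon$ and $\max\{b_1,b_2\}+\min\{b_1,b_2\}=b_1+b_2=b_3$, expanding the product and discarding the positive cross-term $\max\{b_1,b_2\}\epsilon\cdot\min\{b_1,b_2\}\epsilon$ yields exactly $1-(\,\max\{b_1,b_2\}+\min\{b_1,b_2\}\,)\epsilon = 1-b_3\epsilon$. The upper bound $\xi_3\le(1+b_3\epsilon)\hat\xi_3$ is immediate since $\max\{b_1,b_2\}\le b_3$ and $\hat\xi_3\ge \xi_3$ need not even be invoked. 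Repeating the same argument verbatim with $(\Delta f, f_3, F)$ in place of $(\Delta\xi,\xi_3,P\min\{\cdot,\cdot\})$ — noting $f_3=F(f_1,f_2)$ depends on the discretized $f_1,f_2$ exactly as $\xi_3$ depends on the discretized throughputs — gives the second chain, completing the induction and hence the lemma.
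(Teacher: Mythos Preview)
Your approach is the same inductive scheme the paper uses: bound the one–step ceiling error, assume the multiplicative bounds on the children, and telescope. Your execution is in fact tighter than the paper's: the identity $\max\{b_1,b_2\}+\min\{b_1,b_2\}=b_3$ in your ``key quantitative step'' handles arbitrary $b_1,b_2$, whereas the paper's induction step simplifies via ``$b_3=2b_1$'', i.e.\ it only writes out the balanced case $b_1=b_2$. Two small points to tighten: (i) in your lower–bound chain the denominator that actually appears is $P\min\{\hat\xi_1,\hat\xi_2\}$ (or $\hat\xi_3$), not $\xi_3$; you need the easy auxiliary observation $\hat\xi_i\ge \xi_i$ (ceiling rounds up, plus induction) to pass from one to the other before invoking $\Delta\xi/\xi_3\le\min\{b_1,b_2\}\epsilon$; (ii) the $f$-chain is not literally ``verbatim'', since $f_3=F(f_1,f_2)$ depends on both arguments rather than only on a $\min$, so the inherited error is $b_1\epsilon+b_2\epsilon=b_3\epsilon$ before you even add the fresh ceiling error --- the paper is equally hand-wavy here (``basically same''), but be aware the bookkeeping differs.
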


    \begin{proof}
        We prove the condition $\Delta \xi$ firstly. 
        We prove the lemma by induction on $b_1$ and $b_2$. When $b_1=b_2=1$, we have         $$\xi_3=P(f_e,f_e), \ \hat{\xi}_3=\left\lceil\frac{P(f_e,f_e)}{\Delta\xi}\right\rceil\Delta\xi.$$
        It holds straightforwardly that $\xi_3\le (1+b_3\epsilon)\hat{\xi}_3$. We next prove $\xi_3\ge(1-b_3\epsilon)\hat{\xi}_3$. Noticing that 
        $$\left\lceil\frac{P(f_e,f_e)}{\Delta\xi}\right\rceil<\frac{P(f_e,f_e)}{\Delta\xi}+1,$$
        we have 
        \begin{align*}
            \xi_3-(1-b_3\epsilon)\hat{\xi}_3 &> P(f_e,f_e)- (1-b_3\epsilon)\left(   P(f_e,f_e)+\Delta\xi\right) \\
            &=b_3\epsilon P(f_e,f_e)-(1-b_3\epsilon)\Delta\xi \\
            &=\xi_3b_3\epsilon-(1-b_3\epsilon)\Delta\xi> 0.
        \end{align*}
        The last inequality follows from the condition $\Delta\xi\le b_3\xi_3\epsilon$.
        We next prove the induction case. We have $$\xi_3=P(f_1,f_2)\xi_1, \ \hat{\xi}_3=\left\lceil\frac{P(f_1,f_2)\hat{\xi}_1}{\Delta\xi}\right\rceil\Delta\xi.$$
        We first prove $\xi_3\le(1+b_3\epsilon)\hat{\xi}_3$. By induction we have $\xi_1\le(1+b_1\epsilon)\hat{\xi}_1$. 
        Then, we have that
        \begin{align*}
            (1+b_3\epsilon)\hat{\xi}_3 &> (1+b_3\epsilon)\left\lceil\frac{P(f_1,f_2)\xi_1}{\Delta\xi}\right\rceil\Delta\xi \\
            &> (1+b_3\epsilon)P(f_1,f_2)\xi_1 > \xi_3.
        \end{align*}
        We then prove $\xi_3\ge(1-b_3\epsilon)\hat{\xi}_3$. By induction we have 
        $$\left\lceil\frac{P(f_1,f_2)\hat{\xi}_1}{\Delta\xi}\right\rceil<\frac{P(f_1,f_2)\hat{\xi}_1}{\Delta\xi}+1\le \frac{P(f_1,f_2)\xi_1}{(1-b_1\epsilon)\Delta\xi}+1.$$
        We then have 
        {\small
        \begin{align*}
            \xi_3-(1-b_3\epsilon)\hat{\xi}_3 &> P(f_1,f_2)\xi_1- 
            (1-b_3\epsilon)\left(   \frac{P(f_1,f_2)\xi_1}{(1-b_1\epsilon)}+\Delta\xi\right) \\
            &= \frac{P(f_1,f_2)\xi_1b_1\epsilon}{(1-b_1\epsilon)}-(1-b_3\epsilon)\Delta\xi>0.
        \end{align*}}
        In the above proof, the equality follows from $b_3=2b_1$ and the last inequality follows from the condition regarding $\epsilon$.
        The prove of condition $\Delta f$ is basically same.
    \end{proof}
    It follows from Lemma~\ref{lemma:aux-app} that, under the condition $\Delta\xi\le \min\{b_1,b_2\}\xi_3\epsilon$ and $\Delta f\le \min\{b_1,b_2\}f_3\epsilon$, it holds that $\xi_3\ge(1-b_3\epsilon)\hat{\xi}_3$ and $f_3\ge(1-b_3\epsilon)\hat{f}_3$.
    By applying the result recursively from the leaves to the root of $T'$, we can prove that, if $\Delta\xi\le b'\xi'\epsilon$ and $\Delta f\le b'f'\epsilon$, it holds that $\xi'\ge (1-b'\epsilon-o(b'\epsilon))\hat{\xi}'$ and $f'\ge (1-b'\epsilon-o(b'\epsilon))\hat{f}'$. It then holds asymptotically that 
    \begin{align*}
        \lambda(T')=n\xi'\ge (1-\epsilon)n\xi^*=(1-\epsilon)\lambda(T^*). \\
        f(T')=nf'\ge (1-\epsilon)nf^*=(1-\epsilon)f(T^*).
    \end{align*}
    Our algorithm thus outputs an $\epsilon-$optimal solution of Problem~\ref{pb:purif-single-hop}. The proof is completed.

\subsection{Proof of Lemma~\ref{fact:aux}}

Let $a,b,c,d$ denote $f_i$, $i=1,2,3,4$, respectively. Proving the result of the lemma regarding fidelity is equivalent to show that
\begin{equation}
    \Delta\triangleq f(F(a,b),F(c,d))-F(f(a,c),f(b,d))>0,
    \label{eq:ob}
\end{equation}
under the following condition 
\begin{equation}
    0.5\le a\le b\le1, \ 0.7\le c\le d\le1.
    \label{eq:condition}
\end{equation}

Algebraically, we arrange $\Delta$ as 
$$\Delta\triangleq \frac{2}{3}\frac{\Delta_1}{\Delta_2},$$
where the nominator $\Delta_1$ is a polynomial of degree $\le2$ in $a,b,c,d$, the denominator $\Delta_2$ is 
\begin{multline}
    \Delta_2 = \big(2a(4b-1) - 2b + 5\big)\big(2c(4d-1) - 2d + 5\big) \\
    \times \big(2(a(4b-1) - b)(4c-1)(4d-1) + 8cd - 2(c + d) + 41\big).
\end{multline}

As each of factors in the latter expression is affine in $a,b,c,d$, it is straightforward to check that $\Delta_2>0$ given~\eqref{eq:condition}. Therefore, it suffices to prove $\Delta_1>0$.

Recall that $\Delta_1$ is a polynomial of degree $\le2$ in $d$. The coefficient of $d^2$ in $\Delta_1$ is $2(4b-1)(4c-1)\Delta_d$, where $\Delta_d$ is a polynomial in $a,b,c$ of degree $\le1$, and the coefficient of $c$ is $16(4a-1)(-4+a+b+2ab)<0$ given~\eqref{eq:condition}. Therefore, we have $\Delta_2\le \Delta_2|_{c=0.7}$. 

Algebraically, observe that $\Delta_2|_{c=0.7}$ is now a polynomial in $a,b$ of degree $\le1$, with the coefficient of $b$ being $35(4a-1)(64a-13)>0$ given~\eqref{eq:condition}. Hence, we have
$$\Delta_2\le \Delta_2|_{c=0.7,b=a}=-0.6(1-a)(7+178a+256a^2)<0.$$
Therefore, the coefficient of $d^2$ in $\Delta_1$ is negative. Hence, $\Delta_1$ is concave in $d$. It is then enough to prove $\Delta_1>0$ for $d\in \{1,c\}$.

This is achieved using similar concavity-cum-monotonicity reasoning with factoring polynomials, by noticing that (1) $\frac{\Delta_1|_{d=1}}{9(1-c)}$ is a polynomial of degree $\le1$ in $c$, and (2) $\frac{\Delta_1|_{d=c}}{9(1-c)(4c-1)}$ is a polynomial of degree $\le2$ in $b$, which is concave in $b$. The superiority of \textsc{purify-and-swap} in terms of fidelity is thus proved.

We proceed to prove the superiority of \textsc{purify-and-swap} in terms of success probability. Let $p_s$ denote the success probability of entanglement swapping at Charlie. We can derive the success probability of \textsc{purify-and-swap} and \textsc{swap-and-purify} as $P(a,b)P(c,d)p_s$ and $P(f(a,c),f(b,d))p_s^2$, respectively. We then have 

\begin{multline*}
    P(a,b)P(c,d) - P(f(a,c),f(b,d))p_s \ge \\
    \left[\frac{8}{9}ab - \frac{2}{9}(a+b) + \frac{5}{9}\right] 
    \left[\frac{8}{9}cd - \frac{2}{9}(c+d) + \frac{5}{9}\right] \\
    - 0.818\left[\frac{8}{9}xy - \frac{2}{9}(x+y) + \frac{5}{9}\right],
\end{multline*}
where 
\begin{align*}
    x&=f(a,c)=\frac{1}{4}\left[1+\frac{1}{3}(4a-1)(4c-1)\right]    \\
    y&=f(b,d)=\frac{1}{4}\left[1+\frac{1}{3}(4b-1)(4d-1)\right]
\end{align*}

With tedious while straightforward algebraic operations on derivatives, we can prove that $P(a,b)P(c,d)-P(f(a,c),f(b,d))$ is monotonously increasing in $a$, $b$, $c$ and $d$. Noticing $0.818\le p_s\le 1$, we have 
\begin{multline*}
    P(a,b)P(c,d)p_s - P(f(a,c),f(b,d))p_s^2 \ge \\  
    p_s \big[P(a,b)P(c,d) 
     - 0.818 \cdot P(f(a,c),f(b,d))\big] \ge \\
     p_s \big[(P(0.7,0.7))^2- 0.818 \cdot P(f(0.7,0.7), f(0.7,0.7))\big] \\
    = 0.0004p_s > 0.
\end{multline*}

\subsection{Proof of Theorem~\ref{theorem:opt-purif-schedule}}

We prove the fidelity result of the theorem. The result concerning success probability can be proved similarly with straightforward algebraic operations and is thus omitted for brevity. Assume, by contradiction, that \textsc{purify-and-swap} produces a sub-optimal end-to-end EPR pair. Let $\pi^*$ denote an optimal policy. It follows from the assumption that there must exist an entanglement purification operation between a pair of non-elementary EPR pairs. Let $\lambda$ denote the purified EPR pair and let $\lambda_1$ and $\lambda_2$ denote the two non-elementary EPR pair producing $\lambda$. Let $v_1$ and $v_2$ denote the two end-points of the $\lambda$. They are also the end-points of $\lambda_1$ and $\lambda_2$. Let $v_0$ denote the node preceding $v_2$ in $\lambda$, as shown in Figure~\ref{fig:proof}. Now we construct another policy $\pi'$ from $\pi^*$. Instead of forming $\lambda_1$ and $\lambda_2$ and then purifying them to $\lambda$, we first perform purification on the two EPR pairs between $v_1$ and $v_0$, denoted by $\mu_1$ and $\mu_2$, and the two EPR pairs between $v_0$ and $v_2$, denoted by $\nu_1$ and $\nu_2$, and then perform entanglement swapping by stitching the purified EPR pair to produce a final EPR pair between $v_1$ and $v_2$. The rest part of $\pi'$ is identical to $\pi^*$. The following properties follow from the optimality of $\pi^*$. 
    \begin{itemize}
        \item $\lambda\ge\lambda_i\ge 0.5$ for $i=1,2$. This follows from Lemma~\ref{fact:aux}, because if at least one of $\lambda_1$ and $\lambda_2$ is smaller than $0.5$, then we are better off by not performing the entanglement purification operation between them, thus contradicting with the optimality of $\pi^*$.
        \item $\mu_i\ge \lambda_i$. This follows directly from the property of the entanglement swapping operation.
        \item $\lambda\ge\lambda'$. This follows from the optimality of $\pi^*$.
    \end{itemize}

\begin{figure}[!ht]
\centering
\begin{tikzpicture}[line width=0.75pt]
\draw (0,0) circle (0.3);
\fill (0.1,0.1) circle (0.06);
\fill (0.1,-0.1) circle (0.06);
\draw (3,0) circle (0.3);
\fill (2.9,0.1) circle (0.06);
\fill (2.9,-0.1) circle (0.06);
\draw (0.1,0.1) -- (2.9,0.1);
\draw (0.1,-0.1) -- (2.9,-0.1);
\draw [line width=1pt,->] (1.5,-0.8) -- (1.5,-2);
\node[right] at (1.5,-1.3) {purify};
\draw (0,-2.5) circle (0.3);
\fill (0.1,-2.5) circle (0.06);
\draw (3,-2.5) circle (0.3);
\fill (2.9,-2.5) circle (0.06);
\draw (0.1,-2.5) -- (2.9,-2.5);
\node[right] at (1.2,0.3) {$\lambda_1$};
\node[right] at (1.2,-0.3) {$\lambda_2$};
\node[right] at (1.2,-2.7) {$\lambda$};

\draw (4.5,0) circle (0.3);
\fill (4.6,0.1) circle (0.06);
\fill (4.6,-0.1) circle (0.06);
\draw (6,0) circle (0.3);
\fill (6.1,0.1) circle (0.06);
\fill (6.1,-0.1) circle (0.06);
\fill (5.9,0.1) circle (0.06);
\fill (5.9,-0.1) circle (0.06);
\draw (7.5,0) circle (0.3);
\fill (7.4,0.1) circle (0.06);
\fill (7.4,-0.1) circle (0.06);
\draw (4.6,0.1) -- (5.9,0.1);
\draw (4.6,-0.1) -- (5.9,-0.1);
\draw (6.1,0.1) -- (7.4,0.1);
\draw (6.1,-0.1) -- (7.4,-0.1);
\draw [line width=1pt,->] (5.2,-0.6) -- (5.2,-1.1);
\node[right] at (5.2,-0.7) {purify};
\draw [line width=1pt,->] (6.7,-0.6) -- (6.7,-1.1);
\node[right] at (6.7,-0.7) {purify};
\draw (4.5,-1.3) circle (0.3);
\fill (4.6,-1.3) circle (0.06);
\draw (6,-1.3) circle (0.3);
\fill (5.9,-1.3) circle (0.06);
\fill (6.1,-1.3) circle (0.06);
\draw (7.5,-1.3) circle (0.3);
\fill (7.4,-1.3) circle (0.06);
\draw (4.6,-1.3) -- (5.9,-1.3);
\draw (6.1,-1.3) -- (7.4,-1.3);
\draw [line width=1pt,->] (6,-1.8) -- (6,-2.3);
\node[right] at (6,-2) {swap};
\draw (4.5,-2.5) circle (0.3);
\fill (4.6,-2.5) circle (0.06);
\draw (7.5,-2.5) circle (0.3);
\fill (7.4,-2.5) circle (0.06);
\draw (4.6,-2.5) -- (7.4,-2.5);

\node[right] at (5,0.3) {$\mu_1$};
\node[right] at (5,-0.3) {$\mu_2$};
\node[right] at (6.5,0.3) {$\nu_1$};
\node[right] at (6.5,-0.3) {$\nu_2$};
\node[right] at (5,-1.5) {$\mu$};
\node[right] at (6.5,-1.5) {$\nu$};
\node[right] at (5.8,-2.7) {$\lambda'$};

\node[right] at (-0.2,-3.1) {$v_1$};
\node[right] at (2.8,-3.1) {$v_2$};
\node[right] at (4.3,-3.1) {$v_1$};
\node[right] at (7.3,-3.1) {$v_2$};
\node[right] at (5.8,-3.1) {$v_0$};
\end{tikzpicture}
\caption{Illustration of the proof of Theorem~\ref{theorem:opt-purif-schedule}.}
\label{fig:proof}
\end{figure}
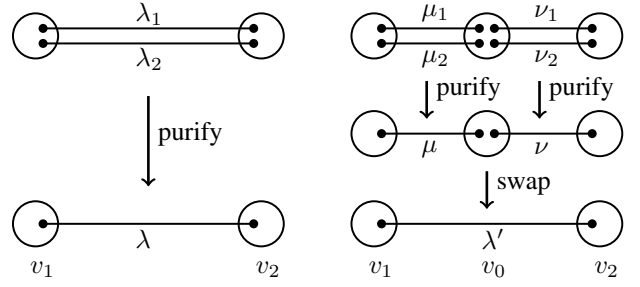
    On the other hand, noticing that $\nu_i$, $i=1,2$, are elementary EPR pairs, we have {$\nu_i=f\ge 0.7$, $i=1,2$}. It then follows from Lemma~\ref{fact:aux} that $\lambda<\lambda'$, by mapping $\mu_1$ and $\mu_2$ to $f_1$ and $f_2$, $\nu_1$ and $\nu_2$ to $f_3$ and $f_4$, which contradicts to $\lambda\ge\lambda'$, noticing the optimality of $\pi^*$. {According to $\mu_i\ge \lambda_i$, we have $\mu_i\ge 0.7$ by swapping operation. The path are constructed by these `blocks' (Fig.~\ref{fig:proof}) mentioned above.} The theorem is thus proved. 

\subsection{Proof of Theorem~\ref{theorem:opt-path}}

    Let $P^*$ and $\hat{P}$ denote the optimal path  and the path output by Algorithm~\ref{alg:mincost}. We need to prove 
    $$C(\hat{P})\le C(P^*), \ \psi(\hat{P})\ge(1-\epsilon)\psi_0, \ \text{and} \  \phi(\hat{P})\ge (1-\epsilon)\phi_0.$$
    
    Let $P_0=\{s_{Q_s},v^1_{i_1},\cdots,v^{r}_{i_r}\}$ denote the path in $G'$ corresponding to $P^*$, where $v^r=t$ and $r$ denotes the length of $P_0$. It follows from the algorithm that the entry corresponding to the sub-path of $P_0$ from $s_{Q_s}$ to $v^k_{i_k}$, $1\le k\le r$, is added in ${\cal L}_{v^k_{i_k}}$ if it is not dominated by any other entry. Therefore, it follows from Algorithm~\ref{alg:mincost} that $C(\hat{P})\le C(P^*)$.
    
    On the other hand, the total error introduced by discretization along $P_0$ is upper-bounded by $r\cdot\Delta\phi$ when computing $\phi(\hat{P})$, which is then upper-bounded by $|V|\cdot\Delta\phi$ as $r\le |V|$. 
    We then have 
    $$\phi(\hat{P})\ge \phi(P^*)-|V|\cdot\Delta\phi.$$
    It then follows from $\phi(P^*)\ge\phi_0$ that, if {$\Delta\phi\le \epsilon\phi_0/|V|$}, $\phi(\hat{P})\ge (1-\epsilon)\phi_0$.
    
    Symmetrically, we can prove that, if {$\Delta\psi\le \epsilon \psi_0/|V|$}, $\psi(\hat{P})\ge(1-\epsilon)\psi_0$. 
    The theorem is thus proved.

\subsection{Proof of Lemma~\ref{lemma:aux3}}

We define a set of random variables:
$$Y_{kv}\triangleq \sum_{v\in P_{ki}} a_{kiv}\hat{x}_{ki} \quad k\in{\cal K},$$
where $P_{ki}$ denotes path $i\in{\cal R}_k$. It holds that 
$$\mathbb{E}[Y_{kv}]=\sum_{v\in P_{ki}} a_{kiv}x_{ki}^*.$$
We give an auxiliary lemma trivially provable by induction.
    \begin{lemma}
        Given any set $\cal B$ of non-negative real numbers $\{b_i\}$, the following inequality holds:
        $$1+\sum_{{\cal B}} b_i\le \prod_{{\cal B}} (1+b_i).$$
        \label{lemma:aux4}
    \end{lemma}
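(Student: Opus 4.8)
The plan is to prove the inequality by induction on the cardinality of $\cal B$. For the base case I would take $|{\cal B}|=1$ (or the empty set, where both sides equal $1$), so that the claim reads $1+b_1\le 1+b_1$, which holds with equality. This anchors the induction.

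For the inductive step, I would suppose the inequality holds for any collection of $n$ non-negative reals and consider a collection $\{b_1,\dots,b_{n+1}\}$. First I would isolate the last factor and apply the induction hypothesis to the remaining $n$ terms. The key observation is that multiplying both sides of the induction hypothesis by $(1+b_{n+1})$ preserves the direction of the inequality precisely because $b_{n+1}\ge 0$ forces $1+b_{n+1}\ge 0$; this gives
$$\prod_{i=1}^{n+1}(1+b_i)=(1+b_{n+1})\prod_{i=1}^{n}(1+b_i)\ge (1+b_{n+1})\left(1+\sum_{i=1}^{n}b_i\right).$$

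Then I would simply expand the right-hand product and discard the cross term. Expanding yields
$$(1+b_{n+1})\left(1+\sum_{i=1}^{n}b_i\right)=1+\sum_{i=1}^{n+1}b_i+b_{n+1}\sum_{i=1}^{n}b_i,$$
and since every $b_i\ge 0$ the cross term $b_{n+1}\sum_{i=1}^{n}b_i$ is non-negative, so the whole expression is bounded below by $1+\sum_{i=1}^{n+1}b_i$. Chaining the two displayed inequalities closes the induction.

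Honestly there is no substantive obstacle here; the only point requiring any care is the sign condition that legitimizes multiplying the induction hypothesis by $(1+b_{n+1})$, which is exactly where the non-negativity hypothesis on $\{b_i\}$ is used. I would flag that non-negativity is essential (the inequality can fail for negative $b_i$, e.g.\ two values in $(-1,0)$) so that the hypothesis is not merely cosmetic, and otherwise the argument is a one-line expansion as indicated.
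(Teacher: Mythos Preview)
Your proof is correct and follows exactly the approach the paper indicates: the paper simply states the lemma is ``trivially provable by induction'' without spelling out details, and your induction argument with the expansion and discarding of the non-negative cross term is precisely the standard way to carry this out.
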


    We now turn to prove Lemma~\ref{lemma:aux3}. For any $\gamma<0$, we have
    {\small
    \begin{multline*}
        \Pr\left[\sum_{k\in{\cal K}}\sum_{v\in P_{ki}} a_{kiv}\hat{x}_{ki}>(1+\delta)\beta Q_v\right] \\ 
        = \Pr\left[\sum_{k\in{\cal K}}Y_{kv}>(1+\delta)\beta Q_v\right]  \\
        = \Pr\left[\exp \left(\gamma\sum_{k\in{\cal K}}Y_{kv}\right)>\exp (\gamma(1+\delta)\beta Q_v)\right] \\
        < \exp (-\gamma(1+\delta)\beta Q_v)\cdot \mathbb{E}\left[\exp\left(\gamma\sum_{k\in{\cal K}}Y_{kv}\right)\right],
    \end{multline*}
    }
    where the last inequality follows from Markov inequality.
    
    Recall the definition of $Y_{kv}$, we have
    \begin{align*}
        \mathbb{E}\left[e^{\gamma Y_{kv}}\right] &= \sum_{v\in P_{ki}} x_{ki}^*\cdot e^{\gamma a_{kiv}}+ 1- \sum_{v\in P_{ki}} x_{ki}^* \\
        & \le \!\!\! \prod_{v\in P_{ki}} \!\!\! \left(x_{ki}^* e^{\gamma a_{kiv}}+ 1- x_{ki}^*\right) \\ 
        &= \!\!\! \prod_{v\in P_{ki}} \!\!\! \left(x_{ki}^* \left(e^{\gamma a_{kiv}}-1\right)+ 1\right) \\
        &< \!\!\!\prod_{v\in P_{ki}} \!\!\! \exp \left(x_{ki}^* \left(e^{\gamma a_{kiv}}-1\right)\right) \\
        &=  \exp \!\!\!\sum_{v\in P_{ki}} \!\!\! \left(x_{ki}^* \left(e^{\gamma a_{kiv}}-1\right)\right)
    \end{align*}
    , where the first inequality follows from Lemma~\ref{lemma:aux4} by setting $b_i\triangleq x_{ki}^* \left(e^{\gamma a_{kiv}}-1\right)$. The second
    inequality follows from $e^x>1+x$, $\forall x>0$.
    
    Posing $\gamma=\ln(1+\delta)$, we have
    \begin{multline*}
    \exp \left( \sum_{v \in P_{ki}} x_{ki}^* \left( e^{\gamma a_{kiv}} - 1 \right) \right) \\
    = \exp \left( \sum_{v \in P_{ki}} x_{ki}^* \left( (1+\delta)^{a_{kiv}} - 1 \right) \right) \\
    < \exp \left( \sum_{v \in P_{ki}} x_{ki}^* a_{kiv} \delta \right).
    \end{multline*}
    
    We hence get
    \begin{multline*}
        \Pr\left[\sum_{k\in{\cal K}}\sum_{v\in P_{ki}} a_{kiv}\hat{x}_{ki}>(1+\delta)\beta Q_v\right] \\
        < \exp(-\gamma(1+\delta)\beta Q_v)\cdot \exp \sum_{k\in {\cal K}}\sum_{v\in P_{ki}} x_{ki}^*a_{kiv}\delta \\
        < \exp (\beta Q_v \delta-\gamma(1+\delta)\beta Q_v) < \left(\frac{e^\delta}{(1+\delta)^{(1+\delta)}}\right)^{\beta Q_v},
    \end{multline*}
    where the second inequality follows from the constraint in the LP relaxation $\sum_{k\in {\cal K}}\sum_{v\in P_{ki}} x_{ki}^*a_{kiv}\le \beta Q_v$.
    The lemma is thus proved.

\subsection{Proof of Theorem~\ref{theorem:mult-flow}}

Let $\mathbf{w}$ denote the vector of $w_k$, $k\in{\cal K}$. We first prove that $\mathbf{\hat{x}}$ violates none of the constraints of the original ILP with prob. $\ge1/3$. Denote ${\cal E}_v$ the event that the constraint regarding $v$ is violated. Recall Lemma~\ref{lemma:aux3}, we have
    \begin{align*}
        \Pr[{\cal E}_v]&=\Pr\left[\sum_{k\in{\cal K}}\sum_{v\in P_{ki}} a_{kiv}\hat{x}_{ki}> Q_v\right] \\
        &< \Pr\left[\sum_{k\in{\cal K}}\sum_{v\in P_{ki}} a_{kiv}\hat{x}_{ki}>(1+\epsilon)(1-\epsilon) Q_v\right] \\
        &<\left(\frac{e^\epsilon}{(1+\epsilon)^{(1+\epsilon)}}\right)^{(1-\epsilon)Q_v} \\
        &<\exp \left((1-\epsilon)Q_v\left(\epsilon-(1+\epsilon)\ln(1+\epsilon)\right)\right) \\
        &< \exp(-(1-\epsilon)\epsilon^2Q_v).
    \end{align*}
    It then follows from the concavity of $e^{-x}$ that 
    \begin{align*}
        \Pr\left[\bigcup_{v\in V} {\cal E}_v\right] &\le \sum_{v\in V} \Pr[{\cal E}_v] < \sum_{v\in V} \exp((1-\epsilon)\epsilon^2Q_v)<\frac{1}{3},
    \end{align*}
    where the last inequality follows from the condition~\eqref{eq:cond-mult-flow}.
    
    On the other hand, by observing the LP relaxation, we have $$\mathbb{E}[\mathbf{w\hat{x}}]\ge \beta \mathbf{wx^*} = (1-\epsilon) \mathbf{wx^*}.$$ 
    We then have 
    \begin{align*}
        \Pr[\mathbf{w\hat{x}}\ge (1-2\epsilon) \mathbf{wx^*}] &= 1-\Pr[\mathbf{w\hat{x}}<(1-2\epsilon) \mathbf{wx^*}] \\
        &> 1-\Pr[\mathbf{w\hat{x}}<(1-\epsilon) \mathbb{E}[\mathbf{w\hat{x}}]] \\
        &> 1-\exp (-0.5\epsilon^2\mathbb{E}[\mathbf{w\hat{x}}]) \\
        &> 1-\exp (-0.5\epsilon^2(1-\epsilon)\mathbf{wx^*}) \\ 
        &> {2}/{3}
    \end{align*}
    , where the second inequality follows from Chernoff bound, the last inequality follows from the condition~\eqref{eq:cond-mult-flow} by noticing that $\mathbf{wx^*}\ge \min_k w_k$.\footnote{We assume that there exists at least a non-zero feasible solution in the LP.}  Hence, we have 
    \begin{multline*}
        \Pr[\text{$\mathbf{\hat{x}}$ is a $2\epsilon-$optimal solution of original ILP}] \\
        >\Pr[\mathbf{w\hat{x}}\ge (1-\epsilon) \mathbf{wx^*}]-\Pr\left[\bigcup_{v\in V(G)} {\cal E}_v\right] >\frac{1}{3}.
    \end{multline*}
    The theorem is thus proved.
    
\bibliographystyle{abbrv}
\bibliography{reference}  

\begin{thebibliography}{10}

\bibitem{Bayerbach2022BellstateME}
M.~J. Bayerbach, S.~E. D’Aurelio, P.~van Loock, and S.~Barz.
\newblock Bell-state measurement exceeding 50\% success probability with linear optics.
\newblock {\em Science Advances}, 9(32):eadf4080, 2023.

\bibitem{PhysRevLett.76.722}
C.~H. Bennett, G.~Brassard, S.~Popescu, B.~Schumacher, J.~A. Smolin, and W.~K. Wootters.
\newblock Purification of noisy entanglement and faithful teleportation via noisy channels.
\newblock {\em Physical Review Letters}, 76:722--725, Jan 1996.

\bibitem{8093785}
K.~Bourzac.
\newblock 4 tough chemistry problems that quantum computers will solve.
\newblock {\em IEEE Spectrum}, 54(11):7--9, 2017.

\bibitem{briegel1998quantum}
H.-J. Briegel, W.~D{\"u}r, J.~I. Cirac, and P.~Zoller.
\newblock Quantum repeaters: the role of imperfect local operations in quantum communication.
\newblock {\em Physical Review Letters}, 81(26):5932, 1998.

\bibitem{QuantumErrorCorrection}
T.~A. Brun.
\newblock {\em Quantum Error Correction}.
\newblock Oxford University Press, 2020.

\bibitem{PhysRevA.54.1844}
V.~Bu\ifmmode~\check{z}\else \v{z}\fi{}ek and M.~Hillery.
\newblock Quantum copying: Beyond the no-cloning theorem.
\newblock {\em Physical Review A}, 54:1844--1852, Sep 1996.

\bibitem{9210823}
K.~Chakraborty, D.~Elkouss, B.~Rijsman, and S.~Wehner.
\newblock Entanglement distribution in a quantum network: A multicommodity flow-based approach.
\newblock {\em IEEE Transactions on Quantum Engineering}, 1:1--21, 2020.

\bibitem{order_matter}
A.~Chang and G.~Xue.
\newblock Order matters: On the impact of swapping order on an entanglement path in a quantum network.
\newblock In {\em IEEE INFOCOM Workshop}, pages 1--6, 2022.

\bibitem{doi:10.1137/S0097539799356265}
C.~Chekuri and S.~Khanna.
\newblock On multidimensional packing problems.
\newblock {\em SIAM Journal on Computing}, 33(4):837--851, 2004.

\bibitem{chenjiajsac}
L.~Chen and Z.~Jia.
\newblock On optimum entanglement purification scheduling in quantum networks.
\newblock {\em IEEE Journal on Selected Areas in Communications}, pages 1--14, 2024.

\bibitem{10.1145/3341302.3342070}
A.~Dahlberg, M.~Skrzypczyk, T.~Coopmans, L.~Wubben, F.~Rozpundefineddek, M.~Pompili, A.~Stolk, P.~Pawe\l{}czak, R.~Knegjens, J.~de~Oliveira~Filho, R.~Hanson, and S.~Wehner.
\newblock A link layer protocol for quantum networks.
\newblock In {\em Proc. SIGCOMM}, 2019.

\bibitem{PhysRevLett.77.2818}
D.~Deutsch, A.~Ekert, R.~Jozsa, C.~Macchiavello, S.~Popescu, and A.~Sanpera.
\newblock Quantum privacy amplification and the security of quantum cryptography over noisy channels.
\newblock {\em Physical Review Letters}, 77:2818--2821, Sep 1996.

\bibitem{10.5555/2011772.2011779}
U.~Dorner, A.~Klein, and D.~Jaksch.
\newblock A quantum repeater based on decoherence free subspaces.
\newblock {\em Quantum Information and Computing}, 8(5):468–488, may 2008.

\bibitem{chpt-purif}
W.~Dur and H.-J. Briegel.
\newblock {\em Chapter 12 in Quantum Information: Purification and Distillation}, pages 231--263.
\newblock John Wiley and Sons, 2016.

\bibitem{PhysRevA.59.169}
W.~D\"ur, H.-J. Briegel, J.~I. Cirac, and P.~Zoller.
\newblock Quantum repeaters based on entanglement purification.
\newblock {\em Physical Review A}, 59:169--181, Jan 1999.

\bibitem{1982IJTP...21..467F}
R.~P. {Feynman}.
\newblock {Simulating Physics with Computers}.
\newblock {\em International Journal of Theoretical Physics}, 21(6-7):467--488, June 1982.

\bibitem{1020231}
X.~Hong, K.~Xu, and M.~Gerla.
\newblock Scalable routing protocols for mobile ad hoc networks.
\newblock {\em IEEE Network}, 16(4):11--21, 2002.

\bibitem{jiachenjetcas}
Z.~Jia and L.~Chen.
\newblock On fidelity-oriented entanglement distribution for quantum switches.
\newblock {\em IEEE Journal on Emerging and Selected Topics in Circuits and Systems}, pages 1--1, 2024.

\bibitem{leone21}
H.~Leone, N.~R. Miller, D.~Singh, N.~K. Langford, and P.~P. Rohde.
\newblock Qunet: Cost vector analysis and multi-path entanglement routing in quantum networks.
\newblock {\em arXiv}, 2021.

\bibitem{fidelityguaranteed}
J.~Li, M.~Wang, K.~Xue, R.~Li, N.~Yu, Q.~Sun, and J.~Lu.
\newblock Fidelity-guaranteed entanglement routing in quantum networks.
\newblock {\em IEEE Transactions on Communications}, 70(10):6748--6763, 2022.

\bibitem{qnetbook}
R.~V. Meter.
\newblock {\em Quantum Networking}.
\newblock Wiley, 2014.

\bibitem{7010905}
W.~J. Munro, K.~Azuma, K.~Tamaki, and K.~Nemoto.
\newblock Inside quantum repeaters.
\newblock {\em IEEE Journal of Selected Topics in Quantum Electronics}, 21(3):78--90, 2015.

\bibitem{nielsen_chuang_2010}
M.~A. Nielsen and I.~L. Chuang.
\newblock {\em Quantum Computation and Quantum Information}.
\newblock Cambridge University Press, 2010.

\bibitem{pan2001}
J.-W. Pan, C.~Simon, C.~Brukner, and A.~Zeilinger.
\newblock Entanglement purification for quantum communication.
\newblock {\em Nature}, (6832):1067--1070, 2001.

\bibitem{panigrahy2022capacity}
N.~K. Panigrahy, T.~Vasantam, D.~Towsley, and L.~Tassiulas.
\newblock On the capacity region of a quantum switch with entanglement purification.
\newblock {\em Proc. INFOCOM}, 2023.

\bibitem{2019Routing}
M.~Pant, H.~Krovi, D.~Towsley, L.~Tassiulas, L.~Jiang, P.~Basu, D.~Englund, and S.~Guha.
\newblock Routing entanglement in the quantum internet.
\newblock {\em npj Quantum Information}, (1), 2019.

\bibitem{10150/633684}
M.~Pant, H.~Krovi, D.~Towsley, L.~Tassiulas, L.~Jiang, P.~Basu, D.~Englund, and S.~Guha.
\newblock Routing entanglement in the quantum internet.
\newblock {\em NPJ Quantum Information}, 2019.

\bibitem{Pirandola8093785}
S.~Pirandola.
\newblock End-to-end capacities of a quantum communication network.
\newblock {\em Communications Physics}, 3:1--10, 2019.

\bibitem{schoute2016shortcuts}
E.~Schoute, L.~Mancinska, T.~Islam, I.~Kerenidis, and S.~Wehner.
\newblock Shortcuts to quantum network routing.
\newblock {\em arXiv}, 2021.

\bibitem{10.1145/3387514.3405853}
S.~Shi and C.~Qian.
\newblock Concurrent entanglement routing for quantum networks: Model and designs.
\newblock In {\em Proc. SIGCOMM}, 2020.

\bibitem{shiton}
S.~Shi, X.~Zhang, and C.~Qian.
\newblock Concurrent entanglement routing for quantum networks: Model and designs.
\newblock {\em IEEE/ACM Transactions on Networking}, 32(3):2205--2220, 2024.

\bibitem{9351761}
G.~Vardoyan, S.~Guha, P.~Nain, and D.~Towsley.
\newblock On the stochastic analysis of a quantum entanglement distribution switch.
\newblock {\em IEEE Transactions on Quantum Engineering}, 2:1--16, 2021.

\bibitem{Victora2020entanglement}
M.~Victora, S.~Krastanov, A.~Sanchez de~la Cerda, S.~Willis, and P.~Narang.
\newblock Purification and entanglement routing on quantum networks.
\newblock {\em arXiv}, 2020.

\bibitem{wang2021time}
P.~Wang, C.~Luan, M.~Qiao, et~al.
\newblock Single ion qubit with estimated coherence time exceeding one hour.
\newblock {\em Nature Communications}, 12(233), 2021.

\bibitem{doi:10.1126/science.aam9288}
S.~Wehner, D.~Elkouss, and R.~Hanson.
\newblock Quantum internet: A vision for the road ahead.
\newblock {\em Science}, 362(6412), 2018.

\bibitem{PhysRevA.40.4277}
R.~F. Werner.
\newblock Quantum states with {Einstein-Podolsky-Rosen} correlations admitting a hidden-variable model.
\newblock {\em Physical Review A}, 40:4277--4281, Oct 1989.

\bibitem{9488850}
Y.~Zhao and C.~Qiao.
\newblock Redundant entanglement provisioning and selection for throughput maximization in quantum networks.
\newblock In {\em Proc. INFOCOM}, 2021.

\bibitem{9796814}
Y.~Zhao, G.~Zhao, and C.~Qiao.
\newblock {E2E} fidelity aware routing and purification for throughput maximization in quantum networks.
\newblock In {\em Proc. INFOCOM}, 2022.

\end{thebibliography}

\end{document}